\def\verbatim@font{\sffamily}
\begin{document}
\frontmatter          
\pagestyle{headings}  
\mainmatter
\title{Programming and Reasoning with \\ Guarded Recursion for Coinductive Types}

%
\titlerunning{Guarded Recursion for Coinductive Types}  
%
\author{Ranald Clouston \and Ale\v{s} Bizjak \and Hans Bugge Grathwohl‎ \and Lars Birkedal}
\authorrunning{Ranald Clouston et al.} 
\institute{Department of Computer Science, Aarhus University, Denmark\\
  \email{\{ranald.clouston,abizjak,hbugge,birkedal\}@cs.au.dk}}

\maketitle              

\renewcommand{\labelenumi}{(\roman{enumi})}

\newcommand{\lambdanext}{\ensuremath{\mathsf{g}\lambda}}
\newcommand{\logiclambdanext}{\ensuremath{L\mathsf{g}\lambda}}
\newcommand{\NEXT}{\operatorname{\mathsf{next}}}
\newcommand{\LATER}{{\blacktriangleright}}
\newcommand{\UNFOLD}{\operatorname{\mathsf{unfold}}}
\newcommand{\FOLD}{\operatorname{\mathsf{fold}}}
\newcommand{\ABORT}{\operatorname{\mathsf{abort}}}
\newcommand{\UNIT}{\operatorname{\langle\rangle}}
\newcommand{\THEN}{\operatorname{\mathsf{then}}}
\newcommand{\ELSE}{\operatorname{\mathsf{else}}}
\newcommand{\CASE}{\operatorname{\mathsf{case}}}
\newcommand{\OF}{\operatorname{\mathsf{of}}}
\newcommand{\IN}{\operatorname{\mathsf{in}}}
\newcommand{\EMPTY}{\operatorname{\mathbf{0}}}
\newcommand{\ONE}{\operatorname{\mathbf{1}}}
\newcommand{\APP}{\circledast}
\newcommand{\NAT}{\operatorname{\mathbf{N}}}
\newcommand{\ZERO}{\operatorname{\mathsf{zero}}}
\newcommand{\SUCC}{\operatorname{\mathsf{succ}}}
\newcommand{\BOX}{\operatorname{\mathsf{box}}}
\newcommand{\WITH}{\operatorname{\mathsf{with}}}
\newcommand{\PREV}{\operatorname{\mathsf{prev}}}
\newcommand{\UNBOX}{\operatorname{\mathsf{unbox}}}
\newcommand{\BOXSUM}{\operatorname{\mathsf{box}^+}}

\newcommand{\Exp}{\operatorname{Exp}}
\newcommand{\ClType}{\operatorname{ClType}}
\newcommand{\red}{\mathrel{\mapsto}}
\newcommand{\redrt}{\mathrel{\rightsquigarrow}}
\newcommand{\llrr}[1]{\llbracket #1 \rrbracket}
\newcommand{\defeq}{\triangleq}
\newcommand{\bnfeq}{\mathrel{::=}}
\newcommand{\trees}{\mathcal{S}}
\newcommand{\res}[2]{r^{#1}_{#2}}
\newcommand{\den}[1]{\llbracket#1\rrbracket}
\newcommand{\lrel}[2]{R^{#2}_{#1}}
\newcommand{\boxd}{\mathsf{bd}}
\newcommand{\usize}{\mathsf{us}}
\newcommand{\natto}{\mathrel{\dot{\to}}}
\newcommand{\ceq}{\simeq_{\mathsf{ctx}}}

\newcommand{\ThetaLetter}{\mathchar"7002}
\renewcommand{\Theta}{\operatorname{\mathsf{fix}}}

\newcommand{\explsubst}{\leftarrow}

\newcommand{\tinylater}{\scriptscriptstyle\LATER}

\newcommand{\guarded}[1]{\ensuremath{#1^{\mathsf{g}}}}

\newcommand{\gStream}[1]{\guarded{\mathsf{Str}}}
\newcommand{\Stream}[1]{\mathsf{Str}}
\newcommand{\head}{\operatorname{\guarded{\mathsf{hd}}}}
\newcommand{\tail}{\operatorname{\guarded{\mathsf{tl}}}}
\newcommand{\limhead}{\operatorname{\mathsf{hd}}}
\newcommand{\limtail}{\operatorname{\mathsf{tl}}}

\newcommand{\iterate}{\operatorname{\mathsf{iterate}}}
\newcommand{\cons}{\operatorname{\mathsf{cons}}}
\newcommand{\consin}{\mathbin{::}}
\newcommand{\gsecond}{\guarded{\mathsf{2nd}}}
\newcommand{\gthird}{\guarded{\mathsf{3rd}}}
\newcommand{\decons}{\operatorname{\mathsf{decons}}}
\newcommand{\naturals}{\operatorname{\mathsf{nats}}}
\renewcommand{\interleave}{\operatorname{\mathsf{interleave}}}
\newcommand{\tyrol}{\operatorname{\mathsf{toggle}}}
\newcommand{\folds}{\operatorname{\mathsf{paperfolds}}}
\newcommand{\limit}{\operatorname{\mathsf{lim}}}
\renewcommand{\lim}{\limit}
\newcommand{\everysecond}{\operatorname{\mathsf{every2nd}}}
\newcommand{\plus}{\operatorname{\guarded{\mathsf{plus}}}}
\newcommand{\map}{\operatorname{\guarded{\mathsf{map}}}}
\newcommand{\limplus}{\operatorname{\mathsf{plus}}}

\newcommand{\hastype}[3]{\ensuremath{{#1 \vdash #2 : #3}}}

\newcommand{\eps}{\varepsilon}

\newcommand{\El}{\ensuremath{\mathcal{E}}}
\newcommand{\Sl}{\ensuremath{\mathcal{S}}}
\newcommand{\Ul}{\ensuremath{\mathcal{U}}}
\newcommand{\Dl}{\ensuremath{\mathcal{D}}}
\newcommand{\Fl}{\ensuremath{\mathcal{F}}}
\newcommand{\Pl}{\ensuremath{\mathcal{P}}}
\newcommand{\Tl}{\ensuremath{\mathcal{T}}}
\newcommand{\Ml}{\ensuremath{\mathcal{M}}}
\newcommand{\Il}{\ensuremath{\mathcal{I}}}
\newcommand{\Cl}{\ensuremath{\mathcal{C}}}
\newcommand{\Bl}{\ensuremath{\mathcal{B}}}
\newcommand{\Al}{\ensuremath{\mathcal{A}}}
\newcommand{\Gl}{\ensuremath{\mathcal{G}}}
\newcommand{\Nl}{\ensuremath{\mathcal{N}}}
\newcommand{\BB}{\ensuremath{\mathbb{B}}}
\newcommand{\CC}{\ensuremath{\mathbb{C}}}
\newcommand{\KK}{\ensuremath{\mathbb{K}}}
\newcommand{\NN}{\NAT}
\newcommand{\PP}{\ensuremath{\mathbb{P}}}
\newcommand{\VV}{\ensuremath{\mathbb{V}}}
\newcommand{\UU}{\ensuremath{\mathbb{U}}}
\newcommand{\DD}{\ensuremath{\mathbb{D}}}
\newcommand{\EE}{\ensuremath{\mathbb{E}}}
\newcommand{\TT}{\ensuremath{\mathbb{T}}}

\newcommand{\comp}{\circ}

\newcommand{\id}[1]{\ensuremath{\text{id}_{#1}}}
\newcommand{\inv}[1]{\ensuremath{#1^{-1}}}
\newcommand{\iso}{\cong}
\newcommand{\isetsep}{\ensuremath{{\,\middle|\,}}}

\newcommand{\later}{\operatorname\triangleright}
\newcommand{\always}{\operatorname\square}
\newcommand{\lift}{\operatorname{\mathsf{lift}}}
\newcommand{\liftstr}[1]{#1_{\guarded{\mathsf{Str}}}}

\newcommand{\eqlaternextrule}{\ensuremath{\textsc{eq}^{\later}_{\NEXT}}}

\renewcommand{\implies}{\Rightarrow}
\renewcommand{\iff}{\Leftrightarrow}

\newcommand{\inhab}[1]{\ensuremath{\mathrm{Inhab}\left(#1\right)}}
\newcommand{\total}[1]{\ensuremath{\mathrm{Total}\left(#1\right)}}

\newcommand{\defined}{\ensuremath{\overset{\triangle}{=}}}

\newcommand{\op}[1]{\ensuremath{#1^{\text{op}}}}
\newcommand{\CAT}{\ensuremath{\mathbf{Cat}}}
\newcommand{\sets}{\ensuremath{\mathbf{Set}}}
\newcommand{\Sh}[1]{\ensuremath{\text{Sh}\left(#1\right)}}
\newcommand{\PSh}[1]{\ensuremath{\text{PSh}\left(#1\right)}}
\newcommand{\subobj}[1]{\ensuremath{\mathbf{Sub}\left(#1\right)}}
\renewcommand{\hom}[3]{\ensuremath{\text{Hom}_{#1}\left(#2,#3\right)}}
\newcommand{\nxt}{\ensuremath{\mathbf{next}}}

\newenvironment{diagram}{\begin{tikzcd}[row sep=1.5cm,column sep=1.5cm]}{\end{tikzcd}}
\newenvironment{largediagram}{\begin{tikzcd}[row sep=2.6cm,column sep=2.6cm]}{\end{tikzcd}}
\newenvironment{smalldiagram}{\begin{tikzcd}[row sep=1cm,column sep=1cm]}{\end{tikzcd}}

\newcommand{\denS}[1]{\ensuremath{\left\llbracket #1 \right\rrbracket_{\Sl}}}
\newcommand{\denSet}[1]{\ensuremath{\left\llbracket #1 \right\rrbracket_{\sets}}}


%
\begin{abstract}
  We present the guarded lambda-calculus, an extension of the simply typed
  lambda-calculus with guarded recursive and coinductive types. The use of guarded
  recursive types ensures the productivity of well-typed programs. Guarded recursive
  types may be transformed into coinductive types by a type-former inspired by modal
  logic and Atkey-McBride clock quantification, allowing the typing of acausal functions.
  We give a call-by-name operational semantics for the calculus, and define adequate
  denotational semantics in the topos of trees. The adequacy proof entails that the
  evaluation of a program always terminates. We demonstrate the expressiveness of
  the calculus by showing the definability of solutions to Rutten's behavioural
  differential equations. We introduce a program logic with L{\"o}b induction for
  reasoning about the contextual equivalence of programs.
\end{abstract}
%

\section{Introduction}
\label{sec:introduction}

The problem of ensuring that functions on coinductive types are well-defined has
prompted a wide variety of work into productivity checking, and rule formats for
coalgebra.
\emph{Guarded recursion}~\cite{Coquand:Infinite} guarantees productivity
and unique solutions by requiring
that recursive calls be nested under a constructor, such as cons (written $\consin$) for
streams. This can sometimes be established by a simple syntactic check, as for
the stream $\mathsf{toggle}$ and binary stream function $\interleave$ below:
\begin{verbatim}
  toggle = 1 :: 0 :: toggle
  interleave (x :: xs) ys = x :: interleave ys xs
\end{verbatim}
Such syntactic checks, however, are often too blunt and exclude many
valid definitions. For example the \emph{regular paperfolding sequence}, the sequence of left and right
turns (encoded as $1$ and $0$) generated by repeatedly folding a piece of paper in
half, can be defined via the function $\interleave$ as
follows~\cite{Endrullis:Mix}:
\begin{verbatim}
  paperfolds = interleave toggle paperfolds
\end{verbatim}
This definition is productive, but the putative definition below, which also applies
$\interleave$ to two streams and so apparently is just as well-typed, is not:
\begin{verbatim}
  paperfolds' = interleave paperfolds' toggle
\end{verbatim}
This equation is satisfied by any stream whose \emph{tail} is the
regular paperfolding sequence, so lacks a unique solution. Unfortunately the syntactic
productivity checker of the proof assistant Coq~\cite{Gimenez:Codifying} will reject
both definitions.

A more flexible approach, first suggested by Nakano~\cite{Nakano:Modality}, is to
guarantee productivity via \emph{types}. A new modality, for which we follow Appel et
al.~\cite{Appel:Very} by writing $\LATER$ and using the name `later', allows us to
distinguish between data we have access to now, and data which we have only later.
This $\LATER$ must be used to guard self-reference in type definitions, so for example
\emph{guarded streams} of natural numbers are defined by the guarded recursive
equation
\[
  \gStream{\NAT} \defeq \NAT\times \LATER\gStream{\NAT}
\]
asserting that stream heads are available now, but tails only later.
The type of $\interleave$ will be $\gStream{\NAT}\to\LATER\gStream{\NAT}\to
\gStream{\NAT}$, capturing the fact the (head of the) first argument is needed
immediately, but the second argument is needed only later. In term definitions the
types of self-references will then be guarded by $\LATER$ also.
For example $\interleave\folds'\mathsf{toggle}$ becomes ill-formed, as the
$\folds'$ self-reference has type $\LATER\gStream{\NAT}$, rather than
$\gStream{\NAT}$, but $\interleave\mathsf{toggle}\,\folds$ will be well-formed. 

Adding $\LATER$ alone to the simply typed $\lambda$-calculus enforces a discipline
more rigid than productivity. For example the obviously productive stream function
\begin{verbatim}
  every2nd (x :: x' :: xs) = x :: every2nd xs
\end{verbatim}
cannot be typed because it violates \emph{causality}~\cite{Krishnaswami:Ultrametric}:
elements of the result stream depend on deeper elements of the argument stream. In
some settings, such as reactive programming, this is a desirable property, but for
productivity guarantees alone it is too restrictive. We need the ability to remove
$\LATER$ in a controlled way. This is provided by the \emph{clock quantifiers}
of Atkey and McBride~\cite{Atkey:Productive}, which assert that all data is available
now. This does not trivialise the guardedness requirements because there
are side-conditions controlling when clock quantifiers may be introduced. Moreover
clock quantifiers transform guarded recursive types into first-class \emph{coinductive}
types, with guarded recursion defining the rule format for their manipulation.

Our presentation departs from Atkey and McBride's~\cite{Atkey:Productive} by
regarding the
`everything now' operator as a unary type-former, written $\blacksquare$ and called
`constant', rather than a quantifier. Observing that the types $\blacksquare A\to A$ and
$\blacksquare A\to\blacksquare\blacksquare A$ are always inhabited allows us to see
the type-former, via the Curry-Howard isomorphism, as an \emph{S4} modality, and
hence base our operational semantics on the established typed calculi for intuitionistic
S4 (IS4) of Bierman and de Paiva~\cite{Bierman:Intuitionistic}. This is sufficient to
capture all examples in the literature, which use only one clock; for examples that
require multiple clocks 
we suggest extending our calculus to a \emph{multimodal} logic.

\paragraph{In this paper} we present the guarded $\lambda$-calculus,
$\lambdanext$, extending the simply typed $\lambda$-calculus
with coinductive and guarded recursive types. We define call-by-name operational
semantics, which blocks non-termination via recursive definitions unfolding indefinitely.
We define adequate denotational semantics in the topos of
trees~\cite{Birkedal-et-al:topos-of-trees} and as a consequence prove normalisation.
We introduce a program logic $\logiclambdanext$ for reasoning about the
denotations of $\lambdanext$-programs; given adequacy this permits proofs about the
operational behaviour of terms.
The logic is based on the internal logic of the
topos of trees, with modalities $\later,\always$ on
predicates, and L{\"o}b induction for
reasoning about functions on both guarded recursive and coinductive types.
We demonstrate the
expressiveness of the calculus by showing the definability of
solutions to Rutten's behavioural differential
equations~\cite{Rutten:2003:bde}, and show that $\logiclambdanext$ can be used
to reason about them, as an alternative to standard bisimulation-based arguments.

We have implemented the $\lambdanext$-calculus in Agda, a process we found helpful
when fine-tuning the design of our calculus. The implementation, with many examples,
is available at \url{http://cs.au.dk/~hbugge/gl-agda.zip}.



\section{Guarded $\lambda$-calculus}\label{sec:calculus}

This section presents the guarded $\lambda$-calculus, written $\lambdanext$, its
call-by-name operational semantics, and its types, then gives some examples.


\begin{definition}\label{def:terms}
  $\lambdanext$-\emph{terms} are given by the grammar
  \[
    \begin{array}{rcl}
      t & \bnfeq & x ~|~  \UNIT ~|~ \ZERO ~|~ \SUCC t ~|~ \langle t,t \rangle ~|~
      \pi_d t ~|~ \lambda x . t ~|~ tt ~|~ \FOLD t ~|~ \UNFOLD t  \\
    &|&   \NEXT t ~|~ \PREV \sigma.t ~|~
      \BOX \sigma.t ~|~\UNBOX t ~|~ t \APP t
    \end{array}
  \]
  where $d\in\{1,2\}$, $x$ is a variable and $\sigma = [x_1 \explsubst t_1,\ldots,
  x_n\explsubst t_n]$, usually abbreviated $[ \vec{x} \explsubst \vec{t} ]$,
  is a list of variables paired with terms.
 
  $\PREV [\vec{x}\explsubst\vec{t}].t$ and
  $\BOX [\vec{x}\explsubst\vec{t}].t$ bind all variables of $\vec{x}$ in $t$, but
  \emph{not} in $\vec{t}$. We write $\PREV \iota. t$ for
  $\PREV[\vec{x}\explsubst\vec{x}] .t$
  where $\vec{x}$ is a list of all free variables of $t$. If furthermore $t$ is closed we
  simply write $\PREV t$. We will similarly write $\BOX\iota.t$ and
  $\BOX t$. We adopt the convention that $\PREV$ and $\BOX$ have highest
  precedence.
\end{definition}

We may extend $\lambdanext$ with sums; for space
reasons
\begin{cameraversion}
these appear only in the extended version of this paper~\cite{ARXIVVERSION}.
\end{cameraversion}
\begin{arxivversion}
we leave these to App.~\ref{app:sums}.
\end{arxivversion}
\begin{definition}\label{def:redrule}
The \emph{reduction rules} on closed $\lambdanext$-terms are
\[
  \begin{array}{rcll}
    \pi_d \langle t_1, t_2 \rangle & \red & t_d & \quad\mbox{\emph{($d\in\{1,2\}$)}} \\
    (\lambda x . t_1) t_2 & \red & t_1 [t_2/x] \\
    \UNFOLD \FOLD t & \red & t \\
    \PREV [\vec{x} \explsubst \vec{t}].t & \red & \PREV t[\vec{t}/\vec{x}]
      & \quad\mbox{\emph{($\vec{x}$ non-empty)}} \\
    \PREV\NEXT t & \red & t \\
    \UNBOX(\BOX [\vec{x} \explsubst \vec{t}].t) & \red & t[\vec{t}/\vec{x}] \\
    \NEXT t_1\APP\NEXT t_2 & \red & \NEXT(t_1 t_2)
  \end{array} 
\]
\end{definition}

The rules above look like standard $\beta$-reduction, removing `roundabouts' of
introduction then elimination, with the exception of those regarding $\PREV$ and
$\NEXT$. An apparently more conventional $\beta$-rule for these term-formers would be
\[
  \PREV [\vec{x} \explsubst \vec{t}].(\NEXT t) \; \red \; t[\vec{t}/\vec{x}]
\]
but where $\vec{x}$ is non-empty this would require us to reduce an open term
to derive $\NEXT t$. We take the view that reduction of open terms is undesirable
within a call-by-name discipline, so first apply the substitution without eliminating
$\PREV$.

The final rule is not a true $\beta$-rule, as $\APP$ is neither
introduction nor elimination, but is necessary to enable function application under a
$\NEXT$ and hence allow, for example, manipulation of the tail of a stream. It
corresponds to the `homomorphism' equality for applicative
functors~\cite{McBride:Applicative}.

We next impose our call-by-name strategy on these reductions.
\begin{definition}\label{def:value}
\emph{Values} are terms of the form
\[
  \UNIT~|~ \SUCC^n\ZERO ~|~  \langle t,t\rangle ~|~
  \lambda x.t ~|~ \FOLD t ~|~ \BOX \sigma.t  ~|~ \NEXT t
\]
where $\SUCC^n$ is a list of zero or more $\SUCC$ operators, and $t$ is any term.
\end{definition}
\begin{definition}\label{def:eval_ctx}
  \emph{Evaluation contexts} are defined by the grammar
  \[
    \begin{array}{rcl}
      E &\bnfeq& \cdot ~|~
      \SUCC E ~|~ \pi_d E ~|~ E t ~|~ \UNFOLD E ~|~
      \PREV E ~|~ \UNBOX E ~|~ E \APP t ~|~ v \APP E
    \end{array} 
  \]
\end{definition}

If we regard $\APP$ as a variant of function application, it is surprising in a
call-by-name setting to reduce on both its sides. However both sides must be reduced
until they have main connective $\NEXT$ before the reduction rule for $\APP$ may be
applied. Thus the order of reductions of $\lambdanext$-terms cannot be identified with
the call-by-name reductions of the corresponding $\lambda$-calculus term with the
novel connectives erased.
\begin{definition}
  \emph{Call-by-name reduction} has format $E[t]\red E[u]$, where $t\red u$ is a
  reduction rule. From now the symbol $\red$ will be reserved to refer to
  call-by-name reduction. We use $\redrt$ for the reflexive transitive closure of $\red$.
\end{definition}
\begin{lemma}\label{lem:det}
The call-by-name reduction relation $\red$ is deterministic.
\end{lemma}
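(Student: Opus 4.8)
The plan is to derive determinism from two separate facts. First, the rules of Def.~\ref{def:redrule} are themselves deterministic: for each closed $t$ there is at most one $u$ with $t\red u$ \emph{by a rule}. Second, every closed term admits at most one decomposition $E[t]$ in which $t$ is a redex, i.e.\ a left-hand side of a rule (``unique decomposition''). Granting these, suppose $s\red u_1$ and $s\red u_2$ in the call-by-name sense; then $s=E_1[t_1]=E_2[t_2]$ with each $t_i$ a redex and $u_i=E_i[t_i']$, where $t_i$ reduces to $t_i'$ by a rule. Unique decomposition gives $E_1=E_2$ and $t_1=t_2$, the first fact gives $t_1'=t_2'$, and hence $u_1=u_2$.

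The first fact is immediate by inspection: the seven rules have pairwise distinct outermost shapes --- the outermost symbol already separates $\pi_d(-)$, application, $\UNFOLD(-)$, $\PREV(-)$, $\UNBOX(-)$ and $(-)\APP(-)$, and the two rules headed by $\PREV$ are disjoint since one requires a non-empty and the other an empty explicit substitution --- while each right-hand side is a function of its left-hand side. Before tackling unique decomposition I would record an auxiliary lemma: no value is of the form $E[t]$ with $t$ a redex; equivalently, values are irreducible. This is a short case analysis on Def.~\ref{def:value} comparing outermost symbols: no value has the outermost symbol of a redex, and the only value whose outermost symbol is that of a non-trivial evaluation context is $\SUCC^n\ZERO$, handled by an inner induction since $\SUCC^{n-1}\ZERO$ is again a value. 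Call-by-name evaluation is what makes this clean: $\langle t,t'\rangle$, $\lambda x.t$, $\FOLD t$, $\BOX\sigma.t$ and $\NEXT t$ are values regardless of their immediate subterms, so no evaluation context enters them.

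Unique decomposition I would then prove by structural induction on $s$, determining in each case the possible $E$ with $s=E[t]$ and $t$ a redex. If $s$ is a value the auxiliary lemma makes this set empty. If $s$ is $\SUCC s'$, $\pi_d s'$, $\UNFOLD s'$, $\UNBOX s'$, $\PREV s'$ with empty substitution, or an application $s_1 s_2$, then any decomposition either takes $E=\cdot$ --- possible only when $s$ is itself a redex, which forces the relevant immediate subterm ($\langle-,-\rangle$, $\FOLD(-)$, $\BOX\sigma.(-)$, $\NEXT(-)$ or $\lambda x.(-)$, respectively) to be a value and hence irreducible by the lemma --- or prefixes the matching context former to a decomposition of the relevant subterm; the two alternatives are mutually exclusive, and the latter reduces to the induction hypothesis. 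The case $s=\PREV[\vec{x}\explsubst\vec{t}].s'$ with $\vec{x}$ non-empty is immediate: $s$ is a redex and no evaluation context headed by $\PREV$ carries a substitution, forcing $E=\cdot$.

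The step I expect to be the main obstacle is the case $s=s_1\APP s_2$, the only one with three competing context shapes: $\cdot$ (available only when $s_1$ and $s_2$ are both of the form $\NEXT(-)$), $E'\APP s_2$, and $v\APP E'$ (available only when $s_1$ is a value). Here the auxiliary lemma does the work: the first shape needs $s_1$ to be a $\NEXT$-headed value and the third needs $s_1$ to be a value, so each of the three excludes the other two, and within each, uniqueness follows from the induction hypothesis on $s_1$ or $s_2$. This matches the remark preceding the lemma: although $\APP$ superficially permits reduction on either side, unique decomposition, and hence determinism, is unaffected.
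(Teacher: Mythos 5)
The paper states Lemma~\ref{lem:det} without any proof, so there is no argument of the authors' to compare against; your unique-decomposition strategy is the standard way to discharge such a claim, and it checks out against Definitions~\ref{def:redrule}--\ref{def:eval_ctx}. In particular, your two supporting facts are both correct: the seven rule left-hand sides are pairwise non-overlapping (the two $\PREV$-headed rules being separated by whether the explicit substitution is empty, since $\PREV\NEXT t$ abbreviates $\PREV$ with an empty substitution, and the evaluation context $\PREV E$ likewise carries an empty substitution), and the auxiliary lemma that values are irreducible is exactly what makes the ``$E=\cdot$'' and ``descend into a subterm'' alternatives mutually exclusive in every congruence case.

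The one place where your justification is looser than it should be is the $\APP$ case. You claim the three context shapes exclude one another because of conditions on $s_1$ alone, but that only separates $E'\APP s_2$ from the other two: both $\cdot$ and $v\APP E'$ require $s_1$ to be a value, so they are not distinguished by $s_1$. Their separation comes from $s_2$: the shape $\cdot$ forces $s_2=\NEXT t_2$, which is a value and hence, by your auxiliary lemma, admits no decomposition $E'[t]$ with $t$ a redex --- precisely what $v\APP E'$ would need. The repair is one sentence and uses only tools you already have, so this is a slip in the write-up rather than a gap in the method; with it, the proof is complete.
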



\begin{definition}\label{def:types}
  $\lambdanext$-\emph{types} are defined inductively by the rules of
  Fig.~\ref{fig:types}. $\nabla$ is a finite set of \emph{type variables}. A variable
  $\alpha$ is \emph{guarded in} a type $A$ if all
  occurrences of $\alpha$ are beneath an occurrence of $\LATER$ in the syntax tree.
 We adopt the convention that unary type-formers bind closer than binary type-formers.
\end{definition}

\begin{figure}
  \begin{mathpar}
    \inferrule*{ }{\nabla,\alpha \vdash \alpha}
    \and
    \inferrule*{ }{\nabla \vdash \ONE}
    \and
    \inferrule*{ }{\nabla \vdash \NAT}
    \and
    \inferrule*{%
      \nabla \vdash A_1 \\
      \nabla \vdash A_2}{%
      \nabla\vdash A_1\times A_2}
    \and
    \inferrule*{%
      \nabla \vdash A_1 \\
      \nabla \vdash A_2}{%
      \nabla\vdash A_1\to A_2}
    \and
    \inferrule*[right={$\alpha\,\mathsf{guarded\,in}\,A$}]{%
      \nabla,\alpha \vdash A}{%
      \nabla \vdash \mu\alpha.A}
    \and
    \inferrule*{%
      \nabla \vdash A}{%
      \nabla\vdash \LATER A}
    \and
    \inferrule*{%
      \cdot \vdash A}{%
      \nabla\vdash \blacksquare A}
  \end{mathpar}
  \caption{Type formation for the $\lambdanext$-calculus}
  \label{fig:types}
\end{figure}

Note the side condition on the $\mu$ type-former, and the prohibition on $\blacksquare
A$ for open $A$, which can also be understood as a prohibition on applying
$\mu\alpha$ to any $\alpha$ with $\blacksquare$ above it. The intuition for these
restrictions is that unique fixed points exist only where the variable is displaced
in time by a $\LATER$, but $\blacksquare$ cancels out this displacement by giving
`everything now'.
\begin{definition}\label{def:typing}
The \emph{typing judgments} are given in Fig.~\ref{fig:typing}. There $d\in\{1,2\}$,
and the \emph{typing contexts} $\Gamma$
are finite sets of pairs $x:A$ where $x$ is a variable and $A$ a closed type.
Closed types are \emph{constant} if all occurrences of $\LATER$ are beneath an
occurrence of $\blacksquare$ in their syntax tree.
\end{definition}

\begin{figure}
  \begin{mathpar}
    \inferrule*{ }{\Gamma, x : A \vdash x : A}
    \and
    \inferrule*{ }{\Gamma \vdash \UNIT : \ONE}
    \and
    \inferrule*{ }{\Gamma \vdash \ZERO:\NAT}
    \and
    \inferrule*{\Gamma \vdash t:\NAT}{%
      \Gamma \vdash \SUCC t:\NAT}
    \and
    \inferrule*{\Gamma \vdash t_1 : A \\%
      \Gamma \vdash t_2 : B}{%
      \Gamma \vdash \langle t_1,t_2 \rangle : A \times B}
    \and
    \inferrule*{\Gamma \vdash t: A_1 \times A_2}{%
      \Gamma \vdash \pi_d t : A_d}
    \and
    \inferrule*{\Gamma, x : A \vdash t : B}{%
      \Gamma \vdash \lambda x . t : A \to B}
    \and
    \inferrule*{\Gamma \vdash t_1 : A \to B \\%
      \Gamma \vdash t_2 : A}{%
      \Gamma \vdash t_1 t_2 : B}
    \and
    \inferrule*{\Gamma \vdash t:A[\mu\alpha.A/\alpha]}{%
      \Gamma \vdash \FOLD t : \mu\alpha.A}
    \and
    \inferrule*{\Gamma \vdash t : \mu\alpha.A}{%
      \Gamma \vdash \UNFOLD t : A[\mu\alpha.A/\alpha]}
   \and
   \inferrule*{\Gamma \vdash t : A}{%
     \Gamma \vdash \NEXT t : \LATER A}
    \and
     \inferrule*[right={$A_1,\ldots,A_n\,\mathsf{constant}$}]{%
      x_1:A_1,\ldots,x_n:A_n \vdash t:\LATER A \\
      \Gamma\vdash t_1:A_1 \\
      \cdots \\
      \Gamma\vdash t_n:A_n }{%
      \Gamma \vdash \PREV [x_1\explsubst t_1,\ldots,x_n\explsubst t_n].t : A}
    \and
    \inferrule*[right={$A_1,\ldots,A_n\,\mathsf{constant}$}]{%
      x_1:A_1,\ldots,x_n:A_n \vdash t:A \\
      \Gamma\vdash t_1:A_1 \\
      \cdots \\
      \Gamma\vdash t_n:A_n }{%
      \Gamma \vdash \BOX [x_1\explsubst t_1,\ldots,x_n\explsubst t_n].t :\blacksquare A}
    \and
    \inferrule*{\Gamma \vdash t:\blacksquare A}{%
      \Gamma \vdash \UNBOX t: A}
    \and
    \inferrule*{\Gamma \vdash t_1 : \LATER (A \to B) \\%
      \Gamma \vdash t_2 : \LATER A}{%
      \Gamma \vdash t_1 \APP t_2 : \LATER B}
  \end{mathpar}
  \caption{Typing rules for the $\lambdanext$-calculus}
  \label{fig:typing}
\end{figure}

The \emph{constant} types exist `all at once', due to the absence of $\LATER$
or presence of $\blacksquare$; this condition corresponds to the freeness of the
clock variable in Atkey and McBride~\cite{Atkey:Productive} (recalling that we use only
one clock in this
work). Its use as a side-condition to $\blacksquare$-introduction in Fig.~\ref{fig:typing}
recalls (but is more
general than) the `essentially modal' condition for natural deduction for IS4 of
Prawitz~\cite{Prawitz:Natural}. The term calculus for IS4 of Bierman and de
Paiva~\cite{Bierman:Intuitionistic},
on which this calculus is most closely based, uses the still more restrictive requirement
that $\blacksquare$ be the main connective. 
This would preclude some functions that
seem desirable, such as the isomorphism $\lambda n.\BOX \iota.n:\NAT\to\blacksquare
\NAT$.

In examples $\PREV$ usually appears in its syntactic sugar forms
\[
  \inferrule*[right={$A_1,\ldots,A_n\,\mathsf{constant}$}]{%
    x_1:A_1,\ldots,x_n:A_n \vdash t:\LATER A }{%
    \Gamma,x_1:A_1,\ldots,x_n:A_n \vdash \PREV \iota. t: A}
  \qquad
  \inferrule*{%
    \vdash t : \LATER A}{%
    \Gamma \vdash \PREV t : A}
\]
and similarly for $\BOX$; the more general form is nonetheless necessary because
$(\PREV \iota. t)[\vec{u}/\vec{x}] = \PREV
[\vec{x}\explsubst\vec{u}].t$. Getting
substitution right in this setting is somewhat delicate. For example our reduction
rule $\PREV [\vec{x}\explsubst \vec{t}].t \red\PREV t[\vec{t}/\vec{x}]$ breaches
subject reduction on open terms (but not for closed terms). See Bierman and de
Paiva~\cite{Bierman:Intuitionistic} for more discussion of substitution with respect
to IS4.
\begin{lemma}[Subject Reduction]
$\vdash t:A$ and $t\redrt u$ implies $\vdash u:A$.
\end{lemma}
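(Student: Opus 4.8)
The plan is the textbook structure for subject reduction, with an auxiliary substitution lemma carrying the real weight. Since $\redrt$ is the reflexive transitive closure of $\red$, a routine induction on the length of the reduction sequence reduces the claim to the one-step case: $\vdash t:A$ and $t\red u$ imply $\vdash u:A$. By Definition~\ref{def:eval_ctx} the latter means $t=E[t_0]$, $u=E[u_0]$ for an evaluation context $E$ and a base reduction $t_0\red u_0$ from Definition~\ref{def:redrule}; since $t$ is closed and evaluation contexts carry no binders, $t_0$ and $u_0$ are closed too.

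Next I would prove a \emph{replacement lemma} by induction on $E$: if $\vdash E[s]:A$ then there is a type $B$ with $\vdash s:B$ such that $\vdash s':B$ implies $\vdash E[s']:A$ for every $s'$. Every clause is handled by inversion of the syntax-directed typing rules of Fig.~\ref{fig:typing}; the absence of binders in $E$ keeps $s$ typed in the same (empty) context as $E[s]$, and the absence of a clause $\BOX E$ means we never descend under a $\blacksquare$. The only mildly nonstandard clauses are $\PREV E$, handled via the $n=0$ (closed) instance of the $\PREV$-rule, and $v\APP E$ (and $E\APP t$), handled via inversion of the $\APP$-rule. Given this lemma it suffices to check that each base rule preserves typing.

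For the base rules, $\pi_d\langle t_1,t_2\rangle\red t_d$, $\UNFOLD\FOLD t\red t$, $\PREV\NEXT t\red t$ and $\NEXT t_1\APP\NEXT t_2\red\NEXT(t_1t_2)$ all follow by repeated inversion of the relevant introduction/elimination rules (for the last one, also re-applying the rules for application, $\NEXT$ and $\APP$). The rules $(\lambda x.t_1)t_2\red t_1[t_2/x]$, $\PREV[\vec x\explsubst\vec t].t\red\PREV\, t[\vec t/\vec x]$ and $\UNBOX(\BOX[\vec x\explsubst\vec t].t)\red t[\vec t/\vec x]$ additionally need a substitution lemma. For the two $\PREV$/$\BOX$ rules, inversion gives $x_1:A_1,\dots,x_n:A_n\vdash t:\LATER A$ (resp.\ $x_1:A_1,\dots,x_n:A_n\vdash t:A$) with each $A_i$ constant, together with $\vdash t_i:A_i$; since $t$ is typed in a context listing exactly its possible free variables and every $t_i$ is closed, $t[\vec t/\vec x]$ is closed, so the substitution lemma yields $\vdash t[\vec t/\vec x]:\LATER A$ (resp.\ $:A$), after which the $n=0$ $\PREV$-rule (resp.\ nothing further) closes the case. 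This is exactly where closedness of $t$ in the statement is used: on open terms $\PREV[\vec x\explsubst\vec t].t\red\PREV\, t[\vec t/\vec x]$ does not preserve typing, as already noted after Definition~\ref{def:typing}.

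The substitution lemma — if $\Gamma,x_1:C_1,\dots,x_n:C_n\vdash t:B$ and $\Gamma\vdash s_i:C_i$ for each $i$, then $\Gamma\vdash t[\vec s/\vec x]:B$ — I would prove by induction on the typing derivation, and this is where I expect the only genuine difficulty. Most cases are standard, but the $\PREV$ and $\BOX$ cases require care because of the explicit-substitution-style binder: by Definition~\ref{def:terms}, $\PREV[\vec y\explsubst\vec r].t'$ and $\BOX[\vec y\explsubst\vec r].t'$ bind $\vec y$ in $t'$ but not in $\vec r$, so one must first $\alpha$-rename to make $\vec y$ disjoint from $\vec x$ and from $\operatorname{dom}(\Gamma)$, after which the outer substitution commutes into the $\vec r$ (where the induction hypothesis applies) while leaving $t'$ and its typing premise untouched. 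This is precisely the ``somewhat delicate'' substitution behaviour the paper flags; once it is set up correctly, the rest of the argument is bookkeeping.
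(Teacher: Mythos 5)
The paper states this lemma without proof (none appears in the appendix either), so there is no official argument to compare against; your proposal is the standard preservation argument one would expect here, and it is sound. You also correctly isolate the two points specific to this calculus: that the explicit-substitution binders of $\PREV$ and $\BOX$ are where the substitution lemma needs care (the outer substitution lands only in the $\vec r$, never in the body $t'$, whose typing premise lists exactly its binders), and that closedness is what rescues the rule $\PREV[\vec x\explsubst\vec t].t\red\PREV\,t[\vec t/\vec x]$, exactly as the remark after Definition~\ref{def:typing} warns.
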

\begin{example}\label{ex:programs}
\begin{enumerate}
\item
  The type of guarded recursive streams of natural numbers, $\gStream{\NAT}$, is
  defined as $\mu\alpha.\NAT\times\LATER\alpha$. These provide the setting for all
  examples below, but other definable types include infinite binary trees, as $\mu\alpha.
  \NAT\times\LATER\alpha\times\LATER\alpha$, and potentially infinite lists, as
  $\mu\alpha.\ONE+(\NAT\times\LATER\alpha)$.
\item
  We define guarded versions of the standard stream functions cons (written infix as
  $\consin$), head, and tail as obvious:
  \[
    \begin{array}{c}
    \consin \defeq \lambda n.\lambda s.\FOLD\langle n,s\rangle:
      \NAT\to\LATER\gStream{\NAT}\to\gStream{\NAT} \\
    \head \defeq \lambda s.\pi_1\UNFOLD s:\gStream{\NAT}\to\NAT \quad
    \tail \defeq \lambda s.\pi_2\UNFOLD s::\gStream{\NAT}\to\LATER\gStream{\NAT}
    \end{array}
  \]
  then use the $\APP$ term-former for observations deeper into the stream:
  \[
    \begin{array}{rcl}
    \gsecond &\defeq&
      \lambda s.(\NEXT\head)\APP(\tail s):\gStream{\NAT}\to\LATER\NAT \\
    \gthird &\defeq&
      \lambda s.(\NEXT\gsecond)\APP(\tail s):\gStream{\NAT}\to\LATER\LATER\NAT
      \;\cdots
    \end{array}
  \]
\item
  Following Abel and Vezzosi~\cite[Sec. 3.4]{Abel:Formalized} we may define a fixed
  point combinator $\Theta$ with type $(\LATER A \to A) \to A$ for any $A$.
  We use this to define a stream by iteration of a function: $\iterate$
  takes as arguments a natural number and a function, but the function is not used until
  the `next' step of computation, so we may reflect this with our typing:
  \[
    \iterate \defeq \lambda f . \Theta \lambda g . \lambda n . n \consin (g
    \APP (f \APP \NEXT n) ) : \LATER (\NAT \to \NAT) \to \NAT \to \gStream{\NAT}
  \]
  We may hence define the guarded stream of natural numbers
  \[
    \naturals \defeq \iterate \, (\NEXT\lambda n.\SUCC n) \ZERO.
  \]
\item
  With $\interleave$, following our discussion in the introduction, we again may reflect
  in our type that one of our arguments is not required until the next step, defining the
  term $\interleave$ as:
  \[
    \Theta\lambda g.\lambda s.\lambda t.(\head s)\consin
      (g\APP t\APP \NEXT(\tail s)) :
      \gStream{\NAT}\to\LATER\gStream{\NAT}\to\gStream{\NAT}
  \]
  This typing decision is essential to define the paper folding stream:
  \[
    \begin{array}{rcl}
      \tyrol &\defeq& \Theta\lambda s.(\SUCC\ZERO)\consin
        (\NEXT(\ZERO\consin s)) \\
      \folds &\defeq& \Theta\lambda s.\interleave\tyrol\,s
     \end{array}
  \]
  Note that the unproductive definition with $\interleave s\,\tyrol$ cannot be made to
  type check: informally, $s:\LATER\gStream{\NAT}$ cannot be converted into a
  $\gStream{\NAT}$ by $\PREV$, as it is in the scope of a
  variable $s$ whose type $\gStream{\NAT}$ is not constant. To see a less articifial
  non-example, try to define a $\mathsf{filter}$ function on streams which eliminates
  elements that fail some boolean test.
\item
  $\mu$-types are in fact \emph{unique} fixed points, so carry both final
  coalgebra and initial algebra structure. To see the latter, observe that we can define
  \[
    \mathsf{foldr} \defeq \Theta\lambda g\lambda f.\lambda s.f
      \langle\head s,g\APP\NEXT f\APP \tail s\rangle:
      ((\NAT\times\LATER A)\to A)\to \gStream{\NAT}\to A
  \]
  and hence for example $\mathsf{\map}\,h:\gStream{\NAT}\to\gStream{\NAT}$ is
  $\mathsf{foldr}\,\lambda x.(h \pi_1 x)\consin(\pi_2 x)$.
\item
  The $\blacksquare$ type-former lifts guarded recursive streams to
  coinductive streams, as we will make precise in Ex.~\ref{ex:denote_streams}. Let
  $\Stream{\NAT} \defeq \blacksquare\gStream{\NAT}$. We define $\limhead :
  \Stream{\NAT} \to \NAT$ and $\limtail : \Stream{\NAT} \to \Stream{\NAT}$ by
  $\limhead = \lambda s . \head (\UNBOX s)$ and
  $\limtail = \lambda s . \BOX \iota . \PREV \iota . \tail (\UNBOX s)$, and
  hence define observations deep into streams whose
  results bear no trace of $\LATER$, for example
  $\mathsf{2nd} \defeq \lambda s.\limhead (\limtail s) : \Stream{\NAT} \to \NAT$.
  
  In general boxed functions lift to functions on boxed types by
  \[
    \limit \defeq \lambda f.\lambda x.\BOX \iota.(\UNBOX f)(\UNBOX x):
      \blacksquare(A\to B)\to\blacksquare A\to\blacksquare B
  \]
\item
  The more sophisticated acausal function $\everysecond:\Stream{\NAT}\to\gStream
  \NAT$ is
  \[
    \Theta\lambda g.\lambda s.(\limhead s)\consin (g\APP(\NEXT (\limtail (\limtail s)))).
  \]
  Note that it must take a \emph{coinductive} stream $\Stream{\NAT}$ as argument.  The
  function with coinductive result type is then
  $\lambda s.\BOX \iota.\everysecond s:\Stream{\NAT}\to\Stream{\NAT}$.
\end{enumerate}
\end{example}

\section{Denotational Semantics and Normalisation}\label{sec:denot}

This section gives denotational semantics for $\lambdanext$-types and terms, as
objects and arrows in the topos of trees~\cite{Birkedal-et-al:topos-of-trees}, the
presheaf category over the first infinite ordinal $\omega$ (we give a concrete definition
below). These semantics are shown to be sound and, by a logical relations argument,
adequate with respect to
the operational semantics. Normalisation follows as a corollary of
this argument. Note that for space reasons many proofs, and some lemmas,
appear
\begin{cameraversion}
only in the extended version of this paper~\cite{ARXIVVERSION}.
\end{cameraversion}
\begin{arxivversion}
in App.~\ref{app:lang_proofs}.
\end{arxivversion}
\begin{definition}
The \emph{topos of trees} $\trees$ has, as objects $X$, families of sets $X_1,X_2,$
$\ldots$ indexed by the positive integers, equipped with families of \emph{restriction
functions} $\res{X}{i}:X_{i+1}\to X_i$ indexed similarly. Arrows $f:X\to Y$ are families
of functions $f_i:X_i\to Y_i$ indexed similarly obeying the naturality condition $f_i\circ
\res{X}{i}=\res{Y}{i}\circ f_{i+1}$.
\end{definition}

$\trees$ is a cartesian closed category with products defined pointwise. Its exponential
$A^B$ has, as its component sets $(A^B)_i$, the set of $i$-tuples $(f_1:A_1\to
B_1,\ldots,f_i:A_i\to B_i)$ obeying the naturality condition, and projections as
restriction functions.
\begin{definition}\label{def:functors}
\begin{itemize}
\item
  The category of sets $\sets$ is a full subcategory of $\trees$ via the functor $\Delta:
  \sets\to\trees$ with $(\Delta Z)_i=Z$, $\res{\Delta Z}{i}=id_Z$, and $(\Delta f)_i=f$.
  Objects in this subcategory are called \emph{constant objects}. In particular the
  terminal object $1$ of $\trees$ is $\Delta \{\ast\}$ and the \emph{natural numbers
  object} is $\Delta\mathbb{N}$;
\item
  $\Delta$ is left adjoint to $hom_{\trees}(1,\mbox{--})$; write $\blacksquare$ for
  $\Delta\circ hom_{\trees}(1,\mbox{-}):\trees\to\trees$. $\UNBOX:\blacksquare
  \natto id_{\trees}$ is the counit of the resulting comonad. Concretely $\UNBOX_i(x)
  =x_i$, i.e. the $i$'th component of $x:1\to X$ applied to $\ast$;
\item
  $\LATER:\trees\to\trees$ is defined by $(\LATER X)_1=\{\ast\}$ and $(\LATER
  X)_{i+1}=X_i$, with $\res{\LATER X}{1}$ defined uniquely and $\res{\LATER X}{i+1}
  =\res{X}{i}$. Its action on arrows $f:X\to Y$ is $(\LATER f)_1=id_{\{\ast\}}$ and
  $(\LATER f)_{i+1}=f_i$. The natural transformation $\NEXT:id_{\trees}\natto\LATER$
  has $\NEXT_1$ unique and $\NEXT_{i+1}=\res{X}{i}$ for any $X$.
\end{itemize}
\end{definition}
\begin{definition}\label{def:types_denote}
  We interpet types in context $\nabla\vdash A$, where $\nabla$ contains $n$ free
  variables, as functors $\den{\nabla\vdash A}:(\trees^{op}\times\trees)^n\to\trees$,
  usually written $\den{A}$. This mixed variance definition is necessary as variables may
  appear negatively or positively.
\begin{itemize}
\item
  $\den{\nabla,\alpha\vdash\alpha}$ is the projection of the objects or arrows
  corresponding to \emph{positive} occurrences of $\alpha$, e.g. $\den{\alpha}
  (\vec{W},X,Y)=Y$;
\item
  $\den{\ONE}$ and $\den{\NAT}$ are the constant functors
  $\Delta\{\ast\}$ and $\Delta\mathbb{N}$ respectively;
\item
  $\den{A_1\times A_2}(\vec{W})=\den{A_1}(\vec{W})
  \times\den{A_2}(\vec{W})$ and likewise for $\trees$-arrows;
\item
  $\den{A_1\to A_2}(\vec{W})=\den{A_2}
  (\vec{W})^{\den{A_2}(\vec{W}')}$ where $\vec{W}'$ is $\vec{W}$
  with odd and even elements switched to reflect change in polarity, i.e. $(X_1,Y_1,
  \ldots)'=(Y_1,X_1,\ldots)$;
\item
  $\den{\LATER A},\den{\blacksquare A}$ are defined
  by composition with the functors $\LATER,\blacksquare$ (Def.~\ref{def:functors}).
\item 
  $\den{\mu\alpha.A}(\vec{W}) = \mathsf{Fix}(F)$, where
  $F:(\trees^{op}\times\trees)\to\trees$
  is the functor given by $F(X,Y) = \den{A}(\vec{W},X,Y)$ and
  $\mathsf{Fix}(F)$ is the unique (up to isomorphism) $X$ such that
  $F(X,X)\cong X$. The existence of such $X$
  relies on $F$ being a suitably locally contractive functor,
  which follows by Birkedal et al~\cite[Sec.~4.5]{Birkedal-et-al:topos-of-trees}
  and the fact that $\blacksquare$ is only ever applied to closed types.
  This restriction on $\blacksquare$ is necessary because the functor
  $\blacksquare$ is not strong.
\end{itemize}
\end{definition}
\begin{example}\label{ex:denote_streams}
  $\den{\gStream{\NAT}}_i=\mathbb{N}^i$, with projections as restriction functions, so is
  an object of \emph{approximations} of streams -- first the head, then the first two
  elements, and so forth. $\den{\Stream{\NAT}}_i=\mathbb{N}^{\omega}$ at all levels, so is
  the constant object of streams. More generally, any polynomial functor $F$ on $\sets$
  can be assigned a $\lambdanext$-type $A_F$ with a free type variable $\alpha$ that occurs guarded. The
  denotation of $\blacksquare\mu\alpha. A_F$ is the constant object of the carrier of the
  final coalgebra for $F$~\cite[Thm. $2$]{Mogelberg:tt-productive-coprogramming}.
\end{example}
\begin{lemma}\label{lem:rec_types}
  The interpretation of a recursive type is isomorphic to the
  interpretation of its unfolding:
  $\den{\mu\alpha.A}(\vec{W})\iso \den{A[\mu\alpha.A/\alpha]}(\vec{W})$.
\end{lemma}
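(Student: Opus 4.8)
The plan is to unfold the definition of $\den{\mu\alpha.A}(\vec{W})$ and read off the isomorphism directly from the fixed-point property, then argue that semantic substitution corresponds to substitution on the level of syntax.

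Recall from Def.~\ref{def:types_denote} that $\den{\mu\alpha.A}(\vec{W}) = \mathsf{Fix}(F)$, where $F:(\trees^{op}\times\trees)\to\trees$ is given by $F(X,Y) = \den{A}(\vec{W},X,Y)$, and $\mathsf{Fix}(F)$ is (up to isomorphism) the unique object $X$ with $F(X,X)\cong X$. So by the very definition of $\mathsf{Fix}(F)$ we have
\[
  \den{\mu\alpha.A}(\vec{W}) \;=\; \mathsf{Fix}(F) \;\iso\; F(\mathsf{Fix}(F),\mathsf{Fix}(F)) \;=\; \den{A}(\vec{W},\den{\mu\alpha.A}(\vec{W}),\den{\mu\alpha.A}(\vec{W})).
\]
The right-hand side of the lemma is $\den{A[\mu\alpha.A/\alpha]}(\vec{W})$, so it remains to show the substitution lemma
\[
  \den{A[\mu\alpha.A/\alpha]}(\vec{W}) \;\iso\; \den{A}(\vec{W},\den{\mu\alpha.A}(\vec{W}),\den{\mu\alpha.A}(\vec{W})).
\]
Note the substituted type $\mu\alpha.A$ is closed, so it has no free variables left to track beyond $\vec{W}$; both occurrences of its denotation appear on the right because $\alpha$ may occur both positively and negatively in $A$ — this is exactly why the mixed-variance formulation is needed, and why the diagonal $\mathsf{Fix}(F)$ enters both argument slots.

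**Next I would** prove the substitution lemma by induction on the structure of $A$, or more precisely on the derivation of $\nabla,\alpha\vdash A$ from Fig.~\ref{fig:types}. Most cases are routine: for $\alpha$ itself the statement is immediate since $\alpha[\mu\alpha.A/\alpha]=\mu\alpha.A$; for the variable-of-$\nabla$, $\ONE$, and $\NAT$ cases substitution is a no-op and both sides agree on the nose; for $A_1\times A_2$, $A_1\to A_2$, $\LATER$, and $\mu\beta.(-)$ one pushes the substitution inward, applies the induction hypothesis to the immediate subterms, and uses functoriality (and, in the $\mu\beta$ case, uniqueness of the fixed point up to isomorphism, so that $\mathsf{Fix}$ applied to isomorphic functors yields isomorphic objects). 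The one case requiring a remark is $\blacksquare B$: here the type-formation rule forces $B$ to be closed, hence $\alpha$ does not occur in $\blacksquare B$, so $(\blacksquare B)[\mu\alpha.A/\alpha] = \blacksquare B$ and both sides of the substitution lemma are literally $\den{\blacksquare B}$ — there is nothing to prove. This aligns with the remark in the excerpt that $\blacksquare$ is only ever applied to closed types.

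**The main obstacle** is bookkeeping rather than conceptual difficulty: one has to be careful about the alternation of polarities in $\vec{W}$ (the primed-tuple convention for $A_1\to A_2$ in Def.~\ref{def:types_denote}) so that the induction hypothesis is applied with the argument slots in the correct variance, and about the fact that $\mathsf{Fix}(F)$ is only determined up to isomorphism, so the whole argument produces a chain of isomorphisms, not equalities. One should also check that the isomorphisms are natural in $\vec{W}$ if that is needed downstream (it is, for interpreting $\FOLD$ and $\UNFOLD$ as genuine natural transformations), but for the bare statement of Lemma~\ref{lem:rec_types} as written — an isomorphism at each fixed $\vec{W}$ — the argument above suffices, with the substitution lemma doing essentially all the work and the fixed-point equation $F(\mathsf{Fix}(F),\mathsf{Fix}(F))\iso\mathsf{Fix}(F)$ supplying the final link.
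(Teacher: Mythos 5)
The paper states Lemma~\ref{lem:rec_types} without proof: it is taken to follow from the fixed-point machinery of Birkedal et al.\ (Sec.~4.5 of the topos-of-trees paper) cited in Def.~\ref{def:types_denote}, and the appendix does not revisit it. Your decomposition --- read off $\mathsf{Fix}(F)\iso F(\mathsf{Fix}(F),\mathsf{Fix}(F))$ from the definition, then reduce to a semantic substitution lemma proved by induction on the formation of $A$ --- is exactly the argument the authors are implicitly relying on, and your handling of the individual cases (in particular $\blacksquare B$, where closedness of $B$ makes the substitution vacuous, and the nested $\mu$, where uniqueness of fixed points up to isomorphism is invoked) is right.

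One point needs correction rather than just the ``bookkeeping'' caveat you give it. First, $\mu\alpha.A$ is not closed in general --- it lives in context $\nabla$ and may have free variables tracked by $\vec{W}$; it is only $\alpha$-closed. Consequently the substitution lemma as you display it, with $\den{\mu\alpha.A}(\vec{W})$ in \emph{both} argument slots, is not the statement that survives the induction: in the $A_1\to A_2$ case the domain is evaluated at the polarity-swapped tuple, so the induction hypothesis forces the negative slot to receive $\den{\mu\alpha.A}(\vec{W}')$ rather than $\den{\mu\alpha.A}(\vec{W})$. Correspondingly, the ``unique $X$ with $F(X,X)\iso X$'' description in Def.~\ref{def:types_denote} must be unwound via the symmetrized functor on $\trees^{\mathrm{op}}\times\trees$ to yield an isomorphism of the form $\den{\mu\alpha.A}(\vec{W})\iso\den{A}(\vec{W},\den{\mu\alpha.A}(\vec{W}'),\den{\mu\alpha.A}(\vec{W}))$ before it matches the corrected substitution lemma. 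When $\vec{W}$ is empty --- the only case the paper actually uses, since typing contexts contain closed types --- we have $\vec{W}'=\vec{W}$ and your diagonal form is harmless; but since the lemma is stated for arbitrary $\vec{W}$ (and the nested-$\mu$ case of your own induction needs the open form), the primed tuple has to appear explicitly for the argument to close.
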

\begin{lemma}\label{lem:const_types}
Closed constant types denote constant objects in $\trees$.
\end{lemma}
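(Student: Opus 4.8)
The plan is to proceed by induction on the structure of the closed type $A$, recalling that a \emph{constant object} is one isomorphic to an object in the image of $\Delta$. Before starting I would record two closure properties of the constant objects. First, since products in $\trees$ are computed pointwise and the restriction functions of a constant object are identities, $\Delta Z\times\Delta W=\Delta(Z\times W)$. Second, unfolding the concrete description of exponentials: a component of $(\Delta Z)^{\Delta W}$ at stage $i$ is an $i$-tuple $(f_1,\dots,f_i)$ of functions $W\to Z$ satisfying the naturality condition, which with identity restriction functions forces $f_1=\dots=f_i$; hence $(\Delta Z)^{\Delta W}\cong\Delta(Z^W)$, with the restriction (projection) maps corresponding to the identity. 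Thus the class of constant objects is closed under binary products, exponentials, and isomorphism.

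With these facts the routine cases are short. The base cases hold since $\den\ONE=\Delta\{\ast\}$ and $\den\NAT=\Delta\mathbb N$ are constant by definition. If $A_1\times A_2$ or $A_1\to A_2$ is a closed constant type then so are $A_1$ and $A_2$ — any $\LATER$ inside a subterm is a $\LATER$ of the whole type and so lies beneath a $\blacksquare$ — and the induction hypothesis together with the closure facts above (using that for closed types $\den{A_1\to A_2}=\den{A_2}^{\den{A_1}}$) give the result. A type whose principal connective is $\LATER$ is never constant, since its head $\LATER$ lies beneath no $\blacksquare$, so that case is vacuous. For $\blacksquare A$ there is nothing to check about $A$: by Definition~\ref{def:types_denote}, $\den{\blacksquare A}=\Delta\bigl(\hom{\trees}{1}{\den A}\bigr)$, which is constant whatever $A$ is.

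This leaves $\mu\alpha.A$, which I expect to be the only real obstacle, because it is where the global structure of the type is used. Here I would first show that well-formedness forces $\alpha\notin\mathrm{fv}(A)$: since $\mu\alpha.A$ is constant, every $\LATER$ occurring in $A$ lies beneath a $\blacksquare$; since $\alpha$ is guarded in $A$, every occurrence of $\alpha$ lies beneath such a $\LATER$, hence beneath a $\blacksquare$; but $\blacksquare$ is applied only to closed types, so $\alpha$ cannot occur at all. Then $A$ is itself a closed constant type and, since $\den A$ ignores its $\alpha$-argument (weakening), the functor $F(X,Y)=\den A(X,Y)$ is constant, so $\den A$ is a fixed point of $F$ and hence, by uniqueness up to isomorphism, $\den{\mu\alpha.A}=\mathsf{Fix}(F)\cong\den A$. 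Applying the induction hypothesis to $A$ finishes the proof. Everything else amounts to unfolding the semantic clauses, so the crux really is combining the guardedness restriction on $\mu$ with the closedness restriction on $\blacksquare$ to eliminate the bound variable.
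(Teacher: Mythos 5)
Your proof is correct and follows essentially the same route as the paper's (terse) argument: induction on type formation with $\LATER$ vacuous, $\blacksquare$ a base case, and $\mu\alpha.A$ reduced to the case where $\alpha$ is not free in $A$. You additionally spell out why guardedness plus the closedness restriction on $\blacksquare$ force $\alpha\notin\mathrm{fv}(A)$, which the paper merely asserts.
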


Note that the converse does not apply; for example $\den{\LATER 1}$ is a constant
object.
\begin{definition}\label{def:terms_denote}
We interpret typing contexts $\Gamma=x_1 : A_1, \ldots, x_n :A_n$ as $\trees$-objects
$\den{\Gamma}\defeq \den{A_1}\times\cdots\times\den{A_n}$ and hence interpret
typed terms-in-context $\Gamma\vdash t:A$ as $\trees$-arrows
$\den{\Gamma\vdash t:A}:\den{\Gamma}\to\den{A}$ (usually written $\den{t}$) as
follows.

$\den{x}$ is the projection $\den{\Gamma}\times\den{A}\to\den{A}$. $\den{\ZERO}$
and $\den{\SUCC t}$ are as obvious. Term-formers for products and function spaces
are interpreted via the cartesian closed structure of $\trees$. Exponentials are not
pointwise, so we give explicitly:
\begin{itemize}
  \item $\den{\lambda x.t}_i(\gamma)_j$ maps $a\mapsto
    \den{\Gamma,x:A\vdash t:B}_j(\gamma\hspace{-0.3em}\upharpoonright_j,a)$, 
    where $\gamma\hspace{-0.3em}\upharpoonright_j$ is the result of applying
    restriction functions to $\gamma\in\den{\Gamma}_i$ to get an element of
    $\den{\Gamma}_j$;
  \item $\den{t_1t_2}_i(\gamma)=(\den{t_1}_i(\gamma)_i) \circ
    \den{t_2}_i(\gamma)$;
\end{itemize}
$\den{\FOLD t}$ and $\den{\UNFOLD t}$ are defined via composition with the 
isomorphisms of Lem. \ref{lem:rec_types}.
$\den{\NEXT t}$ and $\den{\UNBOX t}$ are defined by composition with the natural
transformations introduced in Def.~\ref{def:functors}. The final three cases are
\begin{itemize}
\item
  $\den{\PREV [x_1\explsubst t_1,\ldots] .t}_i(\gamma)\defeq\den{t}_{i+1}(\den{t_1}_i
  (\gamma),\ldots)$, where $\den{t_1}_i(\gamma)\in\den{A_1}_i$ is also in
  $\den{A_1}_{i+1}$ by Lem.~\ref{lem:const_types};
\item
  $\den{\BOX [x_1\explsubst t_1,\ldots].t}_i(\gamma)_j=
  \den{t}_j(\den{t_1}_i(\gamma),
  \ldots)$, again using Lem.~\ref{lem:const_types};
\item
  $\den{t_1\APP t_2}_1$ is defined uniquely; $\den{t_1\APP t_2}_{i+1}(\gamma)\defeq
  (\den{t_1}_{i+1}(\gamma)_i)\circ\den{t_2}_{i+1}(\gamma)$.
\end{itemize}
\end{definition}
\begin{lemma}\label{lem:subst}
Given  typed terms in context $x_1:A_1,\ldots,x_m:A_m\vdash t:A$ and $\Gamma
\vdash t_k:A_k$ for $1\leq k\leq m$,
$\den{t[\vec{t}/\vec{x}]}_i(\gamma)=\den{t}_i(\den{t_1}_i(\gamma),\ldots,
\den{t_m}_i(\gamma))$.
\end{lemma}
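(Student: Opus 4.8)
The plan is to proceed by induction on the structure of the term $t$, proving the statement uniformly for all indices $i$ and all $\gamma \in \den{\Gamma}_i$ simultaneously. Throughout I abbreviate $\den{t_k}_i(\gamma)$ as $a_k$, so that the right-hand side reads $\den{t}_i(\vec a)$ and, since each $A_k$ is a closed type, $\den{A_k}$ is a fixed $\trees$-object whose components do not depend on a type-variable environment. The base case $t = x_k$ is immediate: $t[\vec t/\vec x] = t_k$, and $\den{x_k}_i(\vec a)$ is by definition the $k$-th projection applied to $\vec a$, namely $a_k = \den{t_k}_i(\gamma)$. The cases $\UNIT$, $\ZERO$ are trivial, and $\SUCC$, pairing, projection $\pi_d$, $\FOLD$, $\UNFOLD$, $\NEXT$, $\UNBOX$ are all routine: substitution commutes with these term-formers syntactically, their denotations are defined by post-composition with a fixed $\trees$-arrow (an isomorphism from Lem.~\ref{lem:rec_types}, or a component of one of the natural transformations of Def.~\ref{def:functors}, or the obvious set maps), and naturality plus the induction hypothesis close the case.

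The genuinely delicate cases are the ones where the denotation is not defined pointwise in the index, or where the binding structure of substitution interacts with the explicit-substitution syntax of $\PREV$ and $\BOX$. For $\lambda x.t'$ and application $t_1' t_2'$ I would unfold the explicit formulas for $\den{\lambda x.t'}_i(\gamma)_j$ and $\den{t_1' t_2'}_i(\gamma)$ from Def.~\ref{def:terms_denote}: the abstraction case needs the observation that restriction commutes with the substituted arguments, i.e. $a_k\hspace{-0.3em}\upharpoonright_j = \den{t_k}_j(\gamma\hspace{-0.3em}\upharpoonright_j)$ by naturality of $\den{t_k}$, after which the inner induction hypothesis (applied at index $j$, to the extended context $\Gamma, x:A$, noting $x$ is fresh and so substitution does not touch it) finishes the argument; the application case follows by the induction hypothesis on $t_1'$ and $t_2'$ together with the definition of $\den{-}$ on application. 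For $\APP$ the argument is the same as for application but indexed at $i+1$, with the base component at index $1$ handled by uniqueness.

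The main obstacle is the $\PREV$ and $\BOX$ cases, precisely because these term-formers carry their own explicit substitution $[\vec y \explsubst \vec s]$ and, as the paper warns, substitution here ``is somewhat delicate'' — the outer substitution $[\vec t/\vec x]$ acts on the $\vec s$ but not on the body $t'$, since $\PREV[\vec y\explsubst\vec s].t'$ binds $\vec y$ in $t'$ only. Concretely, $(\PREV[\vec y\explsubst\vec s].t')[\vec t/\vec x] = \PREV[\vec y\explsubst \vec s[\vec t/\vec x]].t'$. I would unfold $\den{\PREV[\vec y\explsubst \vec s[\vec t/\vec x]].t'}_i(\gamma) = \den{t'}_{i+1}(\den{s_1[\vec t/\vec x]}_i(\gamma),\ldots)$, apply the induction hypothesis to each $s_l$ (legitimate since each $s_l$ is a strictly smaller term and is typed in context $\Gamma$) to rewrite $\den{s_l[\vec t/\vec x]}_i(\gamma) = \den{s_l}_i(\vec a)$, and then recognise the result as exactly $\den{\PREV[\vec y\explsubst\vec s].t'}_i(\vec a)$. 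The subtle point to check carefully is the index shift from $i$ to $i+1$ combined with Lem.~\ref{lem:const_types}: the elements $\den{s_l}_i(\gamma)$ lie in $\den{A_l'}_i$ for the constant types $A_l'$ of the $y_l$, and are reused at level $i+1$ via the constancy isomorphism, so one must confirm that this reindexing is compatible with the induction hypothesis — which it is, because constancy makes the restriction maps identities. The $\BOX$ case is analogous, using the formula $\den{\BOX[\vec y\explsubst\vec s].t'}_i(\gamma)_j = \den{t'}_j(\den{s_1}_i(\gamma),\ldots)$ and the same constancy observation, with no index shift. No appeal to well-typedness beyond what is needed to invoke Lem.~\ref{lem:const_types} is required.
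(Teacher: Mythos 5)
Your proposal is correct and follows essentially the same route as the paper's proof: induction on the typing derivation of $t$, with the only cases requiring care being $\PREV$, $\BOX$, and $\APP$, handled exactly as you describe --- the outer substitution acts only on the explicit substitution list, the induction hypothesis is applied to each $s_l$ at index $i$, and the reindexing to $i+1$ (resp.\ to arbitrary $j$) is justified by Lem.~\ref{lem:const_types}. The paper likewise presents only the cases particular to the calculus and treats the remainder as routine.
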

\begin{theorem}[Soundness]\label{lem:soundness}
If $t\redrt u$ then $\den{t}=\den{u}$.
\end{theorem}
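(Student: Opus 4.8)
The plan is to reduce the claim to the single-step case: since $\redrt$ is the reflexive transitive closure of $\red$, and equality of $\trees$-arrows is itself reflexive and transitive, it suffices to show that $t\red u$ implies $\den{t}=\den{u}$. Recall that $t\red u$ means $t = E[t']$ and $u = E[u']$ for some evaluation context $E$ and some instance $t'\red u'$ of one of the seven basic reduction rules of Def.~\ref{def:redrule}. So the proof splits into two tasks: (i) a \emph{compatibility} step showing that $\den{t'}=\den{u'}$ implies $\den{E[t']}=\den{E[u']}$ for every evaluation context $E$; and (ii) a case analysis verifying $\den{t'}=\den{u'}$ for each of the seven reduction rules on closed terms.

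For task (i), I would induct on the structure of $E$. The base case $E=\cdot$ is trivial. Each non-trivial constructor of Def.~\ref{def:eval_ctx} ($\SUCC E$, $\pi_d E$, $E\,t$, $\UNFOLD E$, $\PREV E$, $\UNBOX E$, $E\APP t$, $v\APP E$) corresponds to a term-former whose denotation in Def.~\ref{def:terms_denote} is defined \emph{compositionally} from the denotations of its immediate subterms (via a projection, the evaluation map of the cartesian closed structure, composition with one of the natural transformations $\NEXT$, $\UNBOX$ or with the fold/unfold isomorphisms of Lem.~\ref{lem:rec_types}, or the explicit formula for $\APP$). Hence $\den{E[t']}$ is obtained from $\den{t'}$ by applying a fixed operation that does not depend on $t'$ itself, and the inductive hypothesis $\den{E'[t']}=\den{E'[u']}$ for the immediate subcontext $E'$ immediately yields the result. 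This step is routine bookkeeping, one line per context-former.

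For task (ii), each rule is a semantic identity that I would check directly against the clauses of Def.~\ref{def:terms_denote}. The two genuinely structural ones are: $(\lambda x.t_1)t_2\red t_1[t_2/x]$, which follows by unfolding $\den{\lambda x.t_1}$ and $\den{t_1 t_2}$ and then invoking the Substitution Lemma (Lem.~\ref{lem:subst}); and $\PREV[\vec x\explsubst\vec t].t\red\PREV t[\vec t/\vec x]$ (with $\vec x$ non-empty), which similarly combines the $\den{\PREV\,\cdots}$ clause with Lem.~\ref{lem:subst}, using Lem.~\ref{lem:const_types} to see that the values $\den{t_k}_i(\gamma)$ live at level $i+1$ as required. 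The rule $\UNBOX(\BOX[\vec x\explsubst\vec t].t)\red t[\vec t/\vec x]$ is analogous, combining the $\den{\BOX\,\cdots}$ and $\den{\UNBOX\,\cdot}$ clauses with Lem.~\ref{lem:subst}. The remaining four are ``roundabout'' cancellations: $\pi_d\langle t_1,t_2\rangle\red t_d$ and $\UNFOLD\FOLD t\red t$ and $\PREV\NEXT t\red t$ hold because $\den\pi_d$ is a product projection, because $\den\FOLD$ and $\den\UNFOLD$ are mutually inverse (Lem.~\ref{lem:rec_types}), and because $\PREV$ is interpreted as the operation $(\den{t}_{i+1})_i$ which exactly undoes the $\NEXT$-shift $\NEXT_{i+1}=\res{X}{i}$ — here one must also verify the degenerate level $i=1$, where everything collapses to the one-element set $(\LATER X)_1=\{\ast\}$ and the identity holds trivially. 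Finally $\NEXT t_1\APP\NEXT t_2\red\NEXT(t_1 t_2)$ is checked levelwise from the explicit formula $\den{t_1\APP t_2}_{i+1}(\gamma)=(\den{t_1}_{i+1}(\gamma)_i)\circ\den{t_2}_{i+1}(\gamma)$: plugging in $\den{\NEXT t_k}_{i+1}=(\den{t_k})_i$ (the restriction) and comparing with $\den{\NEXT(t_1t_2)}_{i+1}=\den{t_1t_2}_i$ gives a naturality-square identity, and at level $1$ both sides are forced to the unique map into $\{\ast\}$.

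The main obstacle, such as it is, lies entirely in the $\PREV$/$\NEXT$/$\APP$ clauses: their denotations are defined by the dimension-shifting $\LATER$ construction with a special base level $1$, so the naturality conditions and the off-by-one index arithmetic must be tracked carefully, and one must confirm at each use that Lem.~\ref{lem:const_types} genuinely licenses treating an element of $\den{A_k}_i$ as an element of $\den{A_k}_{i+1}$. Everything else is a mechanical application of the compositional structure of the semantics together with Lemmas~\ref{lem:rec_types}, \ref{lem:const_types}, and~\ref{lem:subst}.
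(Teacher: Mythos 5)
Your proposal is correct and follows essentially the same route as the paper's proof: verify each basic reduction rule levelwise against Def.~\ref{def:terms_denote} (using Lem.~\ref{lem:subst} for the $\beta$-rule and the two explicit-substitution rules, Lem.~\ref{lem:rec_types} for $\UNFOLD\FOLD$, constancy of the $\den{A_k}$ plus naturality for $\PREV$, and a naturality-square computation for $\APP$), and then observe that compositionality of the semantics extends this to evaluation contexts and to $\redrt$. The only cosmetic difference is that you flag a degenerate index-$1$ case for $\PREV\NEXT$, which in fact does not arise there (only for $\NEXT$ and $\APP$), but this is harmless.
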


We now define a logical relation between our denotational semantics and
terms, from which both normalisation and adequacy will follow. Doing this inductively
proves rather delicate, because induction on size will not support reasoning
about our values, as $\FOLD$ refers to a larger type in its premise.
This motivates a notion
of \emph{unguarded size} under which $A[\mu\alpha.A/\alpha]$ is `smaller' than $\mu
\alpha.A$. But under this metric $\LATER A$ is smaller than $A$, so $\NEXT$ now
poses a problem. But the meaning of $\LATER A$ at index $i+1$ is determined by
$A$ at index $i$, and so, as in Birkedal et al~\cite{Birkedal:Metric}, our relation will
also induct on
index. This in turn creates problems with $\BOX$, whose meaning refers to all indexes
simultaneously, motivating a notion of \emph{box depth}, allowing us finally to
attain well-defined induction.
\begin{definition}
The \emph{unguarded size} $\usize$ of an open type follows the obvious definition for
type size, except that $\usize(\LATER A)=0$.

The \emph{box depth} $\boxd$ of an open type is
\begin{itemize}
  \item $\boxd(A)=0$ for $A\in\{\alpha,\EMPTY,\ONE,\NAT\}$;
  \item $\boxd(A\times B)=\mathsf{min}(\boxd(A),\boxd(B))$, and similarly for
    $\boxd(A\to B)$;
  \item $\boxd(\mu\alpha.A)=\boxd(A)$, and similarly for $\boxd(\LATER A)$;
  \item $\boxd(\blacksquare A)=\boxd(A)+1$.
\end{itemize}
\end{definition}
\begin{lemma}\label{lem:boxdepth}
\begin{enumerate}[ref={\ref{lem:boxdepth}.(\roman*)}]
\item
  $\alpha$ guarded in $A$ implies $\usize(A[B/\alpha])\leq \usize(A)$.
  \label{lem:boxdepth-i}
\item
  $\boxd(B)\leq \boxd(A)$ implies $\boxd(A[B/\alpha])\leq \boxd(A)$
  \label{lem:boxdepth-ii}
\end{enumerate}
\end{lemma}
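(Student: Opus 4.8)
The plan is to prove both parts by induction on the structure of the open type $A$, reading off the clauses of the definitions of $\usize$ and $\boxd$ directly. The two statements are genuinely separate inductions, but they are organised the same way, and part (i) will turn out to need part (ii) as an auxiliary fact in the $\blacksquare$-adjacent reasoning, so I would state and prove (ii) first (or prove them simultaneously).

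For part (ii), I would induct on $A$. The variable and base cases ($\alpha$, $\EMPTY$, $\ONE$, $\NAT$) are immediate: either $A[B/\alpha]$ equals $B$ and the hypothesis $\boxd(B)\leq\boxd(\alpha)$ is exactly the conclusion, or $A[B/\alpha]=A$ and $\boxd$ is unchanged. For $A = A_1\times A_2$, note $\boxd(B)\leq\boxd(A)=\min(\boxd(A_1),\boxd(A_2))$ gives $\boxd(B)\leq\boxd(A_k)$ for each $k$, so the induction hypothesis applies to each component and yields $\boxd(A_k[B/\alpha])\leq\boxd(A_k)$; taking $\min$ gives the result, and $A\to B$ is identical. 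For $\mu\beta.A'$ and $\LATER A'$, $\boxd$ passes through transparently and the induction hypothesis on $A'$ (with the same hypothesis on $B$, since $\boxd(\mu\beta.A')=\boxd(A')$) closes the case. For $\blacksquare A'$: here $A'$ is closed, so $\alpha$ does not occur in $A'$, hence $A[B/\alpha]=\blacksquare A'=A$ and the conclusion is trivial. This last case is exactly where the restriction that $\blacksquare$ is applied only to closed types does the work.

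For part (i), I would again induct on $A$, now carrying the side hypothesis that $\alpha$ is guarded in $A$ (every occurrence of $\alpha$ lies beneath a $\LATER$). Variable case: if $A=\alpha$ then $\alpha$ is not guarded in $A$ unless there are no occurrences — so either the hypothesis is vacuous, or $A$ is some other variable $\beta$ and substitution does nothing; either way $\usize(A[B/\alpha])\leq\usize(A)$. Base types: no occurrences of $\alpha$, done. Products and arrows: $\alpha$ guarded in $A_1\times A_2$ means $\alpha$ is guarded in each $A_k$; apply the induction hypothesis to each and add one, using $\usize(A_1\times A_2)=\usize(A_1)+\usize(A_2)+1$. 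The case $\mu\beta.A'$ is handled by the induction hypothesis on $A'$. The key case is $\LATER A'$: here $\usize(\LATER A')=0$ and, crucially, $\usize(\LATER A'[B/\alpha])=\usize(\LATER (A'[B/\alpha]))=0$ as well, so the inequality holds regardless of what happens inside — this is the whole point of defining $\usize(\LATER\,\text{--})=0$, and it is precisely why we only need the \emph{guarded}-in hypothesis rather than full substitutivity. For $\blacksquare A'$, $A'$ is closed so substitution is again the identity.

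The main obstacle is not any single case but getting the bookkeeping of the side conditions exactly right: in part (i) one must be careful that "$\alpha$ guarded in $A$" propagates correctly to subterms in the product/arrow/$\mu$ cases (it does, since a $\LATER$ above an occurrence of $\alpha$ in $A$ is still above it in the relevant subterm), and one must resist the temptation to try to prove the stronger statement $\usize(A[B/\alpha])\leq\usize(A)$ for arbitrary $A$, which is false (take $A=\alpha$, $B$ large). The real content is entirely localised in the $\LATER$ clause of (i) and the $\blacksquare$ (closedness) clause of both parts; everything else is a routine structural pushthrough, so I would not belabour those cases in the written proof.
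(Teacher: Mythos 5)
Your proof is correct and follows essentially the same structural induction on $A$ as the paper's, isolating the same key points: in (i) the variable case is where guardedness is used (the paper notes $A$ cannot be $\alpha$), and in (ii) the variable case forces $\boxd(B)=0$, the binary type-formers go through $\min$, and $\blacksquare A'$ is handled by closedness. The only stray remark is your claim that (i) needs (ii) as an auxiliary fact --- it does not, and your own $\blacksquare$ case for (i) already disposes of it by closedness alone.
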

\begin{definition}\label{Def:log_rel}
The family of relations $\lrel{i}{A}$, indexed by closed types $A$ and positive integers
$i$, relates elements of the semantics $a\in\den{A}_i$ and closed typed terms $t:A$
and is defined as
\begin{itemize}
\item $\ast \lrel{i}{\ONE} t$ iff $t\redrt\UNIT$;
\item $n \lrel{i}{\NAT} t$ iff $t \redrt \SUCC^n\ZERO$;
\item $(a_1,a_2)\lrel{i}{A_1\times A_2} t$ iff $t\redrt\langle t_1,t_2\rangle$ and
  $a_d\lrel{i}{A_d}t_d$ for $d\in\{1,2\}$;
\item $f\lrel{i}{A\to B} t$ iff $t\redrt\lambda x.s$ and for all $j\leq i$, $a\lrel{j}{A}u$
  implies $f_j(a)\lrel{j}{B} s[u/x]$;
\item $a\lrel{i}{\mu\alpha.A}t$ iff $t\redrt \FOLD u$ and $h_i(a)\lrel{i}{A[\mu
  \alpha.A/\alpha]}u$, where $h$ is the ``unfold'' isomorphism for the recursive type
  (ref. Lem.~\ref{lem:rec_types});
\item $a\lrel{i}{\LATER A} t$ iff $t\redrt\NEXT u$ and, where $i>1$, $a\lrel{i-1}{A}u$.
\item $a\lrel{i}{\blacksquare A} t$ iff $t\redrt\BOX u$ and
  for all $j$, $a_j\lrel{j}{A}u$;
\end{itemize}
This is well-defined by induction on the lexicographic ordering on box depth, then
index, then unguarded size. First the $\blacksquare$ case strictly decreases box depth,
and no other case increases it (ref. Lem.~\ref{lem:boxdepth-ii} for $\mu$-types).
Second the $\LATER$ case strictly decreases index, and no other case increases it
(disregarding $\blacksquare$). Finally all other cases strictly decrease unguarded size,
as seen via Lem.~\ref{lem:boxdepth-i} for $\mu$-types.
\end{definition}
\begin{lemma}[Fundamental Lemma]\label{lem:ad}
Take $\Gamma=(x_1:A_1,\ldots,x_m:A_m)$, $\Gamma\vdash t:A$, and $\vdash t_k:
A_k$ for $1\leq k\leq m$. Then for all $i$, if $a_k\lrel{i}{A_k}t_k$ for all $k$, then
\[
  \den{\Gamma\vdash t:A}_i(\vec{a})\,\lrel{i}{A}\,t[\vec{t}/\vec{x}].
\]
\end{lemma}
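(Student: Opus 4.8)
The plan is to prove the Fundamental Lemma by induction following the same well-founded ordering that justified the definition of $\lrel{i}{A}$ in Def.~\ref{Def:log_rel}, namely the lexicographic order on box depth, then index $i$, then unguarded size of the type --- but here the induction is really on the derivation of $\Gamma\vdash t:A$, with the type-ordering used to handle the cases ($\NEXT$, $\PREV$, $\BOX$, $\FOLD$, $\UNFOLD$) where the subterm's type is not a syntactic subexpression of $A$. For each typing rule I would assume the induction hypothesis for the premises (for all smaller indices as needed) and show the conclusion. The easy cases --- variables, $\UNIT$, $\ZERO$, $\SUCC$, pairing, projection, application --- follow by unfolding the definition of $\lrel{i}{A}$, using Lem.~\ref{lem:subst} to rewrite $\den{t[\vec t/\vec x]}_i$ as $\den{t}_i(\vec a)$, and using Thm.~\ref{lem:soundness} together with determinism (Lem.~\ref{lem:det}) to track reductions through evaluation contexts. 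For $\lambda$-abstraction I would use the explicit description of $\den{\lambda x.t}_i$ from Def.~\ref{def:terms_denote} and quantify over all $j\le i$ and all related arguments, appealing to the IH at the (possibly smaller) index $j$; the restriction-function bookkeeping ($\gamma\hspace{-0.3em}\upharpoonright_j$) is routine but must be checked.

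The more delicate cases are the four modal/recursive term-formers. For $\FOLD$ and $\UNFOLD$ I would use Lem.~\ref{lem:rec_types} and the fact that $\den{\FOLD t}$, $\den{\UNFOLD t}$ are defined by composition with the unfold isomorphism $h$; the relation $\lrel{i}{\mu\alpha.A}$ is defined precisely so that these go through, with the IH applied at the unfolded type $A[\mu\alpha.A/\alpha]$, which is strictly smaller in unguarded size by Lem.~\ref{lem:boxdepth-i} (and no larger in box depth by Lem.~\ref{lem:boxdepth-ii}), so the ambient ordering decreases. For $\NEXT$, the target $\LATER A$ at index $i$ is either trivially satisfied (when $i=1$, since $\den{\LATER A}_1$ is a singleton and $\NEXT t$ reduces to itself) or, when $i=i'+1$, reduces to a statement about $A$ at index $i'$, where the IH applies at the strictly smaller index. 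The substitution $\PREV$-rule requires care: the premise is typed in a context $x_1:A_1,\ldots,x_n:A_n$ of \emph{constant} types, so by Lem.~\ref{lem:const_types} the semantic arguments $\den{t_k}_i(\gamma)$ live in a constant object and hence also inhabit level $i+1$; I would invoke the IH for the premise at index $i+1$ (which may be larger, but box depth has not increased and this is the index coordinate, so I must be careful --- actually here one uses that $\PREV$ does not change box depth and appeals to the IH for the \emph{premise derivation}, which is structurally smaller, not to a well-founded type ordering), obtaining $\den{t}_{i+1}(\ldots)\lrel{i+1}{\LATER A}t[\vec t/\vec x]$, then unpack the $\LATER$ relation to descend to index $i$ and conclude. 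The $\BOX$ case is analogous but must establish the relation at \emph{all} indices $j$ simultaneously, again using constancy of the $A_k$ and the IH for the premise at each $j$.

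The main obstacle I anticipate is getting the induction metric exactly right so that every recursive appeal is legitimate: the subtlety is that for the structural term-formers ($\NEXT$, $\PREV$, $\BOX$, $\FOLD$, $\UNFOLD$) the relevant induction is on the typing derivation, whereas the \emph{well-definedness} of $\lrel{i}{A}$ was argued on the type ordering, and these two must be reconciled --- in particular the $\PREV$ and $\BOX$ rules appeal to the IH at a possibly larger index $i+1$ or arbitrary $j$, which is fine only because the primary induction is on derivations (or, alternatively, because one proves a strengthened statement quantifying over all indices up front, as the statement already does --- "for all $i$"). I would therefore phrase the induction as: on the structure of the derivation of $\Gamma\vdash t:A$, proving the statement for \emph{all} $i$ at once; the cases where the premise's type is not a subexpression of the conclusion's type are exactly $\FOLD$/$\UNFOLD$ (handled by Lem.~\ref{lem:rec_types} and the IH on the structurally-smaller premise) and the modalities. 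A secondary fiddly point is matching the operational reductions to evaluation contexts --- e.g. for $\UNBOX(\BOX\sigma.t)$, one first reduces the subterm to a $\BOX$-value inside the context $\UNBOX E$, then fires the $\beta$-rule, then applies the IH; determinism of $\red$ ensures this is the unique reduction path, and soundness ensures the denotation is preserved throughout.
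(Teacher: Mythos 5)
Your proposal is correct and follows essentially the same route as the paper: induction on the typing derivation with the statement quantified over all indices $i$, using the substitution lemma, closure of the relation under anti-reduction, downward closure under restriction maps (for the $\lambda$ and $\NEXT$ cases), and index-independence of the relation at constant types (for $\PREV$ and $\BOX$, where the IH is legitimately invoked at larger indices). The paper merely packages the facts you derive inline as three separate lemmas (Lem.~\ref{lem:lrel_and_red}, Lem.~\ref{lem:res_and_rel}, Lem.~\ref{lem:constrel}), and your resolution of the induction-metric worry --- structural induction on derivations, not on the type ordering used for well-definedness of the relation --- matches the paper's.
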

\begin{theorem}[Adequacy and Normalisation]
  \label{thm:ad_norm}
\begin{enumerate}[ref={\ref{thm:ad_norm}.(\roman*)}]
  \item For all closed terms $\vdash t:A$ it holds that $\den{t}_i\lrel{i}{A} t$;
  \item $\den{\vdash t:\NAT}_i=n$ implies $t\redrt \SUCC^n\ZERO$;
    \label{thm:ad_norm:den-implies-red}
  \item All closed typed terms evaluate to a value.
\end{enumerate}
\end{theorem}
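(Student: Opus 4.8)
The plan is to obtain all three parts as direct consequences of the Fundamental Lemma (Lemma~\ref{lem:ad}). For part (i), I would instantiate Lemma~\ref{lem:ad} with the empty context ($m=0$): the hypothesis ``$a_k\lrel{i}{A_k}t_k$ for all $k$'' then holds vacuously, the substitution $t[\vec{t}/\vec{x}]$ degenerates to $t$ itself, and the conclusion becomes $\den{\vdash t:A}_i\lrel{i}{A}t$ for every index $i$, which is precisely part (i).

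For part (ii), I would specialise part (i) to type $\NAT$, giving $\den{\vdash t:\NAT}_i\lrel{i}{\NAT}t$. By the clause for $\NAT$ in Def.~\ref{Def:log_rel}, $n\lrel{i}{\NAT}t$ holds exactly when $t\redrt\SUCC^n\ZERO$; so if $\den{t}_i=n$ we read off $t\redrt\SUCC^n\ZERO$ immediately.

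For part (iii), I would inspect the clauses of Def.~\ref{Def:log_rel} and note that each one begins by demanding that $t$ reduce, under $\redrt$, to a term of a fixed shape --- one of $\UNIT$, $\SUCC^n\ZERO$, $\langle t_1,t_2\rangle$, $\lambda x.s$, $\FOLD u$, $\NEXT u$, or $\BOX u$ --- and that each such shape is a value in the sense of Def.~\ref{def:value}. Hence for any closed typed term $\vdash t:A$, part (i) at index $1$ yields $\den{t}_1\lrel{1}{A}t$, and reading off the outermost requirement of the applicable clause gives $t\redrt v$ for some value $v$. Since $\redrt$ is the reflexive transitive closure of $\red$, this is a finite reduction sequence, so $t$ evaluates to a value; invoking determinism of $\red$ (Lemma~\ref{lem:det}) further shows this value is unique.

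The substantive work has all been absorbed into the Fundamental Lemma, so what remains is essentially bookkeeping, and the only mild obstacle is to confirm that the ``$t$ reduces to a term of shape $\ldots$'' requirement in each case of the logical relation really does land inside the grammar of Def.~\ref{def:value}: in particular that $\BOX u$ (written without an explicit substitution list) is still an instance of $\BOX\sigma.t$ with $\sigma$ empty. If one wants the sharper reading of part (iii) --- that $t$ reduces to a \emph{unique normal form} --- one should additionally observe that values are $\red$-normal, which follows because the evaluation contexts of Def.~\ref{def:eval_ctx} never descend under $\langle\cdot,\cdot\rangle$, $\lambda$, $\FOLD$, $\BOX$, or $\NEXT$.
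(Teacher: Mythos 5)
Your proposal is correct and follows essentially the same route as the paper: part (i) is the Fundamental Lemma specialised to the empty context, and parts (ii) and (iii) follow from (i) by inspecting the clauses of Def.~\ref{Def:log_rel}. The extra bookkeeping you supply (checking that each clause's required shape is a value, and the remark on uniqueness via determinism) is accurate but not needed beyond what the paper's one-line proof already asserts.
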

\begin{proof}
$(i)$ specialises Lem.~\ref{lem:ad} to closed types. $(ii),(iii)$ hold by $(i)$ and
inspection of Def.~\ref{Def:log_rel}.
\end{proof}
\begin{definition}
Typed \emph{contexts} with typed holes are defined as obvious. Two terms
$\Gamma \vdash t : A, \Gamma \vdash u : A$
are \emph{contextually equivalent}, written $t\ceq u$, if for all
\emph{closing} contexts $C$ of type $\NAT$, the terms $C[t]$ and $C[u]$ reduce to the same value.
\end{definition}
\begin{corollary}
  \label{cor:adequacy:den-implies-ctxeq}
  $\den{t}=\den{u}$ implies $t\ceq u$.
\end{corollary}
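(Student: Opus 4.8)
The plan is to derive the corollary from adequacy (Thm.~\ref{thm:ad_norm}) together with a compositionality property of the denotational semantics. First I would record the compositionality lemma: if $\Gamma\vdash t:A$ and $\Gamma\vdash u:A$ satisfy $\den{\Gamma\vdash t:A}=\den{\Gamma\vdash u:A}$, then for every typed context $C$ with a hole of type $\Gamma\vdash A$ such that $C[-]$ has type $\Gamma'\vdash B$, we have $\den{\Gamma'\vdash C[t]:B}=\den{\Gamma'\vdash C[u]:B}$. This is proved by a straightforward induction on the structure of $C$: the base case $C=\cdot$ is the hypothesis, and each inductive step follows because, by Def.~\ref{def:terms_denote}, the denotation of a term is determined by the denotations of its immediate subterms in their respective (possibly extended) typing contexts. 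The only mildly delicate cases are $\lambda$ and application, whose explicit formulas we have, and the explicit-substitution formers $\PREV$ and $\BOX$, whose denotations likewise depend only on the denotations of the subterms $t,t_1,\ldots,t_n$ (with Lem.~\ref{lem:subst} handling the sugared forms $\PREV\iota.t$, $\BOX\iota.t$).

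Now suppose $\den{t}=\den{u}$ for $\Gamma\vdash t:A$ and $\Gamma\vdash u:A$, and let $C$ be any closing context of type $\NAT$, so that $C[t]$ and $C[u]$ are closed terms of type $\NAT$. By the compositionality lemma, $\den{\vdash C[t]:\NAT}=\den{\vdash C[u]:\NAT}$; since $\den{\NAT}=\Delta\mathbb{N}$ is a constant object, this common arrow $1\to\Delta\mathbb{N}$ picks out a single $n\in\mathbb{N}$, i.e.\ $\den{\vdash C[t]:\NAT}_i(\ast)=\den{\vdash C[u]:\NAT}_i(\ast)=n$ at every index $i$. By Thm.~\ref{thm:ad_norm:den-implies-red} applied to both $C[t]$ and $C[u]$ we obtain $C[t]\redrt\SUCC^n\ZERO$ and $C[u]\redrt\SUCC^n\ZERO$. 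Moreover by Thm.~\ref{thm:ad_norm} every closed typed term evaluates to a value, and by determinism of reduction (Lem.~\ref{lem:det}) that value is unique; hence $\SUCC^n\ZERO$ is precisely the value to which each of $C[t]$ and $C[u]$ reduces. Thus $C[t]$ and $C[u]$ reduce to the same value, and since $C$ was an arbitrary closing context of type $\NAT$, we conclude $t\ceq u$.

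The only genuine work is the compositionality lemma; the remaining steps are immediate appeals to adequacy and determinism. I therefore expect the compositionality lemma to be the main — though entirely routine — obstacle, the one point requiring care being that a context may capture variables (placing the hole under a $\lambda$, or inside the body of a $\PREV$/$\BOX$), so its statement and induction hypothesis must be phrased for open terms and arbitrary typed contexts rather than for closed terms only. Given the explicit, subterm-directed clauses of Def.~\ref{def:terms_denote}, this presents no difficulty.
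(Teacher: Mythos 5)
Your proposal is correct and follows essentially the same route as the paper: compositionality of the denotational semantics gives $\den{C[t]}=\den{C[u]}$, and then Thm.~\ref{thm:ad_norm:den-implies-red} yields that both reduce to the same numeral. The paper leaves compositionality implicit where you spell out its induction (and add the harmless extra observations about $\Delta\mathbb{N}$ and determinism), but there is no substantive difference.
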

\begin{proof}
$\den{C[t]}=\den{C[u]}$ by compositionality of the denotational semantics . Then
by Thm.~\ref{thm:ad_norm:den-implies-red} they reduce to the same value.
\end{proof}


\section{Logic for Guarded Lambda Calculus}
\label{sec:logic}

This section presents our program logic $\logiclambdanext$ for the guarded
$\lambda$-calculus.  The logic is an extension of the internal
language of $\Sl$~\cite{Birkedal-et-al:topos-of-trees,Clouston:Sequent}. Thus it extends multisorted
intuitionistic higher-order logic with two propositional modalities
$\later$ and $\always$, pronounced “later” and “always”
respectively.
The term language of $\logiclambdanext$ includes the terms of
$\lambdanext$, and the types of $\logiclambdanext$
include types definable in $\lambdanext$. We write
$\Omega$ for the type of propositions, and also for the subobject
classifier of $\Sl$.

The rules for \emph{definitional equality} extend the usual $\beta\eta$-laws
for functions and products with new equations for the
new $\lambdanext$ constructs, listed in Fig.~\ref{fig:additional-equations}.
%
\begin{figure}[htb]
  \centering
  \begin{mathpar}
    \inferrule*{\Gamma \vdash t : A\left[\mu \alpha . A\right/\alpha]}
    {\Gamma \vdash \UNFOLD (\FOLD t) = t}
    \and
    \inferrule*{\Gamma \vdash t : \mu \alpha . A}
    {\Gamma \vdash \FOLD (\UNFOLD t) = t}
    \and
    \inferrule*{\Gamma \vdash t_1 : A \to B\\
      \Gamma \vdash t_2 : A}{\Gamma \vdash \NEXT t_1 \APP \NEXT t_2 = \NEXT (t_1 t_2)}
    \and
    \inferrule*{%
      \Gamma_{\blacksquare} \vdash t: A \\
      \Gamma\vdash \vec{t} : \Gamma_{\blacksquare}}{%
      \Gamma \vdash \PREV[\vec{x}\explsubst\vec{t}].(\NEXT t) = t\left[\vec{t}/\vec{x}\right]}
    \and
    \inferrule*{%
      \Gamma_{\blacksquare} \vdash t: \LATER A \\
      \Gamma\vdash \vec{t} : \Gamma_{\blacksquare}}{%
      \Gamma \vdash \NEXT\left(\PREV[\vec{x}\explsubst\vec{t}].t\right) = t\left[\vec{t}/\vec{x}\right]}
    \and
      \inferrule*{%
      \Gamma_{\blacksquare} \vdash t:A \\
      \Gamma\vdash \vec{t} : \Gamma_{\blacksquare}}{%
      \Gamma \vdash \UNBOX(\BOX[\vec{x}\explsubst\vec{t}].t) = t\left[\vec{t}/{\vec{x}}\right]}
    \and
      \inferrule*{%
      \Gamma_{\blacksquare}\vdash t:\blacksquare A \\
      \Gamma\vdash \vec{t} : \Gamma_{\blacksquare}}{%
      \Gamma \vdash \BOX[\vec{x}\explsubst\vec{t}].\UNBOX t = t\left[\vec{t}/{\vec{x}}\right]}
  \end{mathpar}
  \caption{Additional equations. The context $\Gamma_{\blacksquare}$ is assumed constant.}
\label{fig:additional-equations}
\end{figure}

\begin{definition}
  \label{def:total-types}
  A type $X$ is \emph{total and inhabited} if the formula $\total{X}\equiv
  \forall x : \LATER  X, \exists x' : X, \nxt(x') =_{\LATER X} x$ is valid.
\end{definition}

All of the $\lambdanext$-types defined in Sec.~\ref{sec:calculus} are total and
inhabited (see  
\begin{cameraversion}
the extended version~\cite{ARXIVVERSION}
\end{cameraversion}
\begin{arxivversion}
App.~\ref{sec:about-totality-types}
\end{arxivversion}
for a proof using the semantics of
the logic), but that is not the case when we include sum types
as the empty type is not inhabited.

Corresponding to the modalities $\LATER$ and $\blacksquare$ on types, we have modalities
$\later$ and $\always$ on formulas. The modality $\later$ is used to
express that a formula holds only ``later'', that is, after a time step.  It is given by a
function symbol $\later : \Omega \to \Omega$. The $\always$ modality is used to express that a formula holds for all time
steps. Unlike the $\later$ modality, $\always$ on formulas does not arise from a
function on $\Omega$~\cite{Bizjak-et-al:countable-nondet-internal}.
As with $\BOX$, it is only well-behaved in
constant contexts, so we will only allow $\always$ in such contexts. The rules for
$\later$ and $\always$ are listed in Fig.~\ref{fig:old-rules}. 

\begin{figure}[htb]
  \centering
  \begin{mathpar}
    \inferrule*[Right=L\"ob]{ }{\Gamma \mid \Xi, (\later \phi \implies \phi) \vdash \phi}
    \and
    \inferrule*[Right=$\exists\later$]{ }{\Gamma, x : X \mid \exists y : Y, \later
      \phi(x,y) \vdash \later \left(\exists y : Y, \phi(x, y)\right)}
    \and
    \inferrule*[Right=$\forall\later$]{ }{\Gamma, x:X \mid \later (\forall y : Y, \phi(x, y)) \vdash \forall y :
      Y, \later \phi(x, y)}
    \and
    \inferrule*{ }{\Gamma \mid \Xi, \phi \vdash \later \phi}
    \and
    \inferrule*{\star \in \{\land, \lor, \implies\} }
    {\Gamma \mid \later (\phi \star \psi) \dashv\vdash \later\phi \star \later\psi}
    \and
    \inferrule*{\Gamma \mid \lnot\lnot \phi \vdash \psi}{\Gamma \mid \phi \vdash \always \psi}
    \and
    \inferrule*{\Gamma \mid \phi \vdash \always \psi}{\Gamma \mid \lnot\lnot \phi \vdash \psi}
    \and
    \inferrule*{\Gamma \mid \phi \vdash \psi}{\Gamma \mid \always \phi \vdash \always \psi}
    \and
    \inferrule*{ }{\Gamma \mid \always \phi \vdash \phi}
    \and \inferrule*{ }{\Gamma \mid \always \phi \vdash \always\always \phi}
    \and
    \inferrule*[Right=\eqlaternextrule]{ }{\forall x, y : X . \later (x =_X y) \iff \NEXT x =_{\LATER X} \NEXT y}
    \phantom{\eqlaternextrule}
  \end{mathpar}
  \caption{Rules for $\later$ and $\always$. The judgement $\Gamma \mid \Xi \vdash \phi$
    expresses that in typing context $\Gamma$, hypotheses in $\Xi$ prove $\phi$.  The
    converse entailment in $\forall\later$ and $\exists\later$ rules holds if $Y$ is
    \emph{total and inhabited}. In all rules involving the $\always$ the context $\Gamma$
    is assumed constant.}
  \label{fig:old-rules}
\end{figure}

The $\later$ modality can in fact be defined in terms of $\lift : \LATER \Omega \to
\Omega$ (called $succ$ by Birkedal et al~\cite{Birkedal-et-al:topos-of-trees})
as $\later = \lift \comp \NEXT$. The $\lift$ function will be useful since it allows us to
define predicates over guarded types, such as predicates on $\gStream{\NN}$.

The semantics of the logic is given in $\Sl$; terms are interpreted as
morphisms of $\Sl$ and formulas are interpreted via the subobject classifier.
We do not present the semantics here; except for the
new terms of $\lambdanext$, whose semantics are defined in
Sec.~\ref{sec:denot}, the semantics are as
in~\cite{Birkedal-et-al:topos-of-trees,Bizjak-et-al:countable-nondet-internal}.

Later we will come to the problem of proving $x=_{\blacksquare A}y$ from $\UNBOX x
=_A\UNBOX y$, where $x,y$ have type $\blacksquare A$. This in general
does not hold, but using the semantics of $\logiclambdanext$ we can prove
the proposition below.
\begin{proposition}
  \label{prop:always-unbox-injective}
  The formula $\always(\UNBOX x =_A \UNBOX y) \implies x =_{\blacksquare A} y$ is valid.
\end{proposition}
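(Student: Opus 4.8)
The plan is to work in the semantics, using the concrete description of $\blacksquare$ from Def.~\ref{def:functors}. Recall $\blacksquare X = \Delta\big(\hom{\trees}{1}{X}\big)$, so $\den{\blacksquare A}_i = \hom{\trees}{1}{\den{A}}$ at every level $i$, with identity restriction functions, and $\UNBOX_i(x) = x_i(\ast)$. The key observation is that a global element $x : 1\to\den{A}$ is completely determined by the family $\big(x_i(\ast)\big)_{i}$ of its components, since naturality of $x$ is automatic once we know $1$ has identity restrictions — more precisely, $x$ and $y$ in $\hom{\trees}{1}{\den{A}}$ are equal iff $x_i(\ast) = y_i(\ast)$ for all $i$, i.e.\ iff $\UNBOX_i(x) = \UNBOX_i(y)$ for all $i$.

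First I would unfold what $\always\phi$ means semantically: for a formula $\phi$ in a constant context, $\always\phi$ holds at stage $i$ iff $\phi$ holds at \emph{all} stages $j$ (this is the $\lnot\lnot$-closure / constant-object behaviour of $\always$ recorded in Fig.~\ref{fig:old-rules}, matching the fact that $\always = \Delta\circ\hom{\trees}{1}{-}$ on the relevant objects). So the antecedent $\always(\UNBOX x =_A \UNBOX y)$, evaluated at any stage $i$ with given $x,y \in \den{\blacksquare A}_i = \hom{\trees}{1}{\den{A}}$, asserts exactly that for every stage $j$ the morphisms $\UNBOX(x)$ and $\UNBOX(y)$ agree at stage $j$, i.e.\ $x_j(\ast) = y_j(\ast)$ for all $j$. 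By the determinacy observation above, this forces $x = y$ as elements of $\hom{\trees}{1}{\den{A}}$, which is precisely $x =_{\blacksquare A} y$ at stage $i$. Since $i$ was arbitrary, the implication is valid.

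The main obstacle is getting the semantics of $\always$ exactly right: $\always$ on formulas does \emph{not} come from an endomorphism of $\Omega$ (as the paper notes, citing \cite{Bizjak-et-al:countable-nondet-internal}), so I cannot just compose with a function symbol. I would instead appeal to the established interpretation of $\always$ in $\Sl$ from \cite{Birkedal-et-al:topos-of-trees,Bizjak-et-al:countable-nondet-internal}, whereby $\den{\always\phi}$ is the largest constant subobject contained in $\den{\phi}$; concretely, on a constant domain this is the set of points at which $\phi$ holds at \emph{every} stage. The only mild care needed is that $x,y$ range over the constant object $\den{\blacksquare A}$, so "$\phi$ holds at $x,y$ at every stage" is well-posed — this is exactly where the hypothesis that we are in a constant context (imposed on all $\always$-rules) is used. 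Once that semantic clause is in hand, the argument is the short diagram-chase above; no induction or L\"ob is required, in contrast to most other proofs in the logic.
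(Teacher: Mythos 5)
Your proposal is correct and takes essentially the same route as the paper: the paper offers no syntactic derivation but explicitly says the proposition is proved ``using the semantics of $\logiclambdanext$'', and your argument is exactly that semantic proof spelled out --- $\den{\blacksquare A}$ is the constant object on $\hom{\trees}{1}{\den{A}}$, $\always$ on a constant context forces the equality $\UNBOX_j(x)=\UNBOX_j(y)$ at every stage $j$, and a global element of $\den{A}$ is determined by its components, so $x=y$ in $\den{\blacksquare A}$. The only cosmetic quibble is the aside about naturality being ``automatic''; what you actually use is just that equality of morphisms out of $1$ is componentwise, which you state correctly.
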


There exists a fixed-point combinator of type $(\LATER A \to A) \to A$
for all types $A$ in the logic (not only those of in $\lambdanext$)
\cite[Thm. $2.4$]{Birkedal-et-al:topos-of-trees}; we also 
write $\Theta$ for it.
\begin{proposition}
  \label{prop:theta-is-a-fp}
  For any 
  term $f : \LATER A \to A$ we have $\Theta f =_{A} f \left(\NEXT (\Theta
    f)\right)$ and, if $u$ is any other term such that $f (\NEXT u) =_{A} u$, then $u =_{A}
  \Theta f$.
\end{proposition}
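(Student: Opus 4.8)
The plan is to handle the two halves of the statement separately. The unfolding equation $\Theta f =_A f(\NEXT(\Theta f))$ should follow directly from the cited construction of the fixed-point combinator in \cite[Thm.~2.4]{Birkedal-et-al:topos-of-trees}: being a fixed-point combinator means precisely that $\Theta f$ solves $x = f(\NEXT x)$, so I would simply invoke that result; if the cited statement is phrased as bare existence, one unwinds its proof, which builds the guarded fixed point by L\"ob induction and thereby validates the unfolding equation.

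For uniqueness I would work entirely inside $\logiclambdanext$ and argue by L\"ob induction. Fix a term $u$ with $f(\NEXT u) =_A u$. To prove $u =_A \Theta f$ it suffices, by the L\"ob rule, to derive $u =_A \Theta f$ from the hypothesis $\later(u =_A \Theta f)$. Under that hypothesis I would apply the rule \eqlaternextrule{}, instantiated at the type $A$ with the terms $u$ and $\Theta f$, which converts $\later(u =_A \Theta f)$ into the equality $\NEXT u =_{\LATER A} \NEXT(\Theta f)$. Substitutivity of equality (Leibniz) then yields $f(\NEXT u) =_A f(\NEXT(\Theta f))$; rewriting the left-hand side by the assumption on $u$ and the right-hand side by the unfolding equation established above gives $u =_A \Theta f$. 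Discharging the L\"ob hypothesis completes the argument.

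I do not expect a serious obstacle here: once the key observation is in place — that \eqlaternextrule{} lets one pass between $\later$ of an equation and an equation under $\NEXT$ — the whole proof is a single application of L\"ob induction. The two points that need a little care are (i) checking that no totality or inhabitedness assumption is required, which is fine since the direction of \eqlaternextrule{} used here holds unconditionally, and (ii) observing that the L\"ob induction may be performed in the ambient typing context without any constancy restriction, unlike the rules governing $\always$.
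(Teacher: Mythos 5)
Your argument is correct, and it matches what the paper itself does: the paper states this proposition without an explicit proof, relying on the cited fixed-point theorem \cite[Thm. $2.4$]{Birkedal-et-al:topos-of-trees} for the unfolding equation, and your L\"ob-induction derivation of uniqueness --- converting $\later(u =_A \Theta f)$ to $\NEXT u =_{\LATER A} \NEXT(\Theta f)$ via \eqlaternextrule{} and then applying Leibniz and the two unfolding equalities --- is precisely the standard internal argument underlying that theorem. Your two side remarks (no totality hypothesis is needed for the direction of \eqlaternextrule{} you use, and L\"ob induction carries no constancy restriction on the context) are both accurate.
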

In particular this can be used for recursive definitions of
predicates. For instance if $P:\NN\to\Omega$ is a predicate on natural numbers we
can define a predicate $\liftstr{P}$ on $\gStream{\NN}$
expressing that $P$ holds for all elements of the stream:
\begin{align*}
  \liftstr{P} \defeq
  \Theta
  \lambda r . \lambda xs . P (\head xs) \land \lift \left(r \APP (\tail xs)\right) :
  \gStream{\NN} \to \Omega.
\end{align*}
The logic may be used to prove contextual equivalence of programs:
\begin{theorem}
  \label{thm:prop-equality-ctx-equiv}
  Let $t_1$ and $t_2$ be two $\lambdanext$ terms of type $A$ in context $\Gamma$. If
  the sequent $\Gamma \mid \emptyset \vdash t_1 =_{A} t_2$
  is provable then $t_1$ and $t_2$ are contextually equivalent.
\end{theorem}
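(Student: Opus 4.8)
The plan is to chain together the results already established in the excerpt. The key observation is that provability of $\Gamma \mid \emptyset \vdash t_1 =_A t_2$ in $\logiclambdanext$ should imply, by soundness of the logic with respect to its semantics in $\Sl$, that the interpretations of $t_1$ and $t_2$ as $\Sl$-morphisms coincide, i.e.\ $\den{t_1} = \den{t_2}$. Once we have this semantic equality of $\lambdanext$-terms, we are exactly in the situation of Cor.~\ref{cor:adequacy:den-implies-ctxeq}, which gives $t_1 \ceq t_2$, that is, contextual equivalence. So the proof is in two movements: first pass from logical provability to semantic equality, then invoke adequacy.

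First I would recall that the semantics of $\logiclambdanext$ is given in $\Sl$, with $\lambdanext$-terms interpreted by the morphisms of Sec.~\ref{sec:denot} and the logical apparatus interpreted via the subobject classifier $\Omega$, as in~\cite{Birkedal-et-al:topos-of-trees,Bizjak-et-al:countable-nondet-internal}. Soundness of this logic — which is inherited from the soundness of the internal logic of $\Sl$ extended conservatively with the rules for $\later$ and $\always$ in Fig.~\ref{fig:old-rules} and the definitional equations of Fig.~\ref{fig:additional-equations} (the latter validated by Thm.~\ref{lem:soundness} and the explicit term semantics of Def.~\ref{def:terms_denote}) — tells us that a provable sequent $\Gamma \mid \emptyset \vdash \phi$ is interpreted as the maximal subobject of $\den{\Gamma}$. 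Applying this with $\phi$ the equality formula $t_1 =_A t_2$, and using that equality in the internal logic is interpreted by the diagonal, we get that $\den{t_1}$ and $\den{t_2}$ agree on all of $\den{\Gamma}$, hence are equal as $\Sl$-arrows $\den{\Gamma} \to \den{A}$.

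Having obtained $\den{t_1} = \den{t_2}$, I would then simply apply Cor.~\ref{cor:adequacy:den-implies-ctxeq}, which states precisely that $\den{t} = \den{u}$ implies $t \ceq u$; this concludes the proof. Concretely, for any closing context $C$ of type $\NAT$, compositionality of the denotational semantics gives $\den{C[t_1]} = \den{C[t_2]}$, and then Thm.~\ref{thm:ad_norm:den-implies-red} forces $C[t_1]$ and $C[t_2]$ to reduce to the same value $\SUCC^n\ZERO$.

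The main obstacle is the first step: the paper deliberately does not spell out the semantics of $\logiclambdanext$, so the argument rests on a soundness theorem for the logic that is only gestured at by citation. The real content is checking that \emph{every} rule of $\logiclambdanext$ — in particular L\"ob induction, the $\always$ rules (valid only in constant contexts, which matches the side-conditions), the $\later$/$\APP$ interaction via $\eqlaternextrule$, and the new definitional equations — is validated by the $\Sl$ semantics, and that the term semantics used there coincides with the $\den{-}$ of Sec.~\ref{sec:denot}. Since this soundness is asserted to be essentially that of~\cite{Birkedal-et-al:topos-of-trees,Bizjak-et-al:countable-nondet-internal} together with the already-proven Thm.~\ref{lem:soundness}, I would treat it as a known result and keep the proof of this theorem short, delegating the detailed verification of soundness to those references (and to the appendix, where the semantics of the logic is presumably recalled in full).
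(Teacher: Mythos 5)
Your proposal is correct and follows essentially the same route as the paper: provability of $\Gamma \mid \emptyset \vdash t_1 =_A t_2$ yields $\den{t_1} = \den{t_2}$ because equality in the internal logic of a topos is equality of morphisms (with soundness of the additional rules delegated to the cited references), and then Cor.~\ref{cor:adequacy:den-implies-ctxeq} gives contextual equivalence. Your extra remarks on what soundness of $\logiclambdanext$ actually requires are a fair elaboration of what the paper leaves implicit, but the argument is the same.
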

\begin{proof}
  Recall that equality in the internal logic of a topos is just equality of
  morphisms. Hence $t_1$ and $t_2$ denote same morphism from $\Gamma$ to
  $A$. Adequacy (Cor.~\ref{cor:adequacy:den-implies-ctxeq}) then implies that $t_1$
  and $t_2$ are contextually equivalent.
\end{proof}

\begin{example}
  \label{ex:example-proofs}
  We list some properties provable using the logic. Except for the
  first property all proof details are in
\begin{cameraversion}
the extended version~\cite{ARXIVVERSION}.
\end{cameraversion}
\begin{arxivversion}
App.~\ref{sec:example-proofs}.
\end{arxivversion}
  \begin{enumerate}[ref={\ref{ex:example-proofs}.(\roman*)}]
  \item For any $f : A \to B$ and $g : B \to C$ we have
    \[(\map f) \comp (\map g) =_{\gStream{A} \to \gStream{C}} \map (f \comp g).\]
    Unfolding the definition of $\map$ from Ex.~\ref{ex:programs}$(vi)$ and using
    $\beta$-rules and Prop.~\ref{prop:theta-is-a-fp}
    we have  $\map f\,xs = f\,(\head xs) \consin (\NEXT (\map f) \APP (\tail xs))$.
    Equality of functions is extensional so we have to prove
    \begin{align*}
      \Phi \defeq \forall xs : \gStream{A}, \map f\, (\map g\, xs) =_{\gStream{C}} \map (f \comp g)\,xs.
    \end{align*}
    The proof is by L\"ob induction, so we assume $\later\Phi$ and take $xs :
    \gStream{A}$. Using the above property of $\map$ we unfold $\map f\, (\map g\, xs)$
    to
    \begin{align*}
      f\,(g\,(\head xs)) \consin \left(\NEXT (\map f) \APP ((\NEXT (\map g)) \APP \tail xs)\right)
    \end{align*}
    and we unfold $\map (f\comp g)\,xs$ to 
    $f\,(g\,(\head xs)) \consin \left(\NEXT (\map (f\comp g)) \APP \tail xs\right)$.
    Since $\gStream{A}$ is a total type there is a $xs' : \gStream{A}$ such that $\NEXT xs'
    = \tail xs$. Using this and the rule for $\APP$ we have
    \begin{align*}
      \NEXT (\map f) \APP ((\NEXT (\map g)) \APP \tail xs) =_{\LATER \gStream{C}} \NEXT
      (\map f (\map g\,xs'))
    \end{align*}
    and $\NEXT (\map (f\comp g)) \APP \tail xs =_{\LATER \gStream{C}} \NEXT (\map (f \comp g)\,xs')$.
    From the induction hypothesis $\later\Phi$ we have 
    $\later (\map (f \comp g)\,xs' =_{\gStream{C}} \map f\,(\map g\,xs'))$
    and so rule $\eqlaternextrule$ concludes the proof.
  \item We can also reason about acausal functions. For any $n : \NAT,
    f:\NAT\to\NAT$,
    \[\everysecond (\BOX \iota . \iterate\,(\NEXT f)\, n)
      =_{\gStream{\NAT}} \iterate\,(\NEXT f^2)\, n,\]
    where $f^2$ is $\lambda m.f\, (f\, m)$.
    The proof again uses L\"ob induction.
  \item Since our logic is higher-order we can state and prove very general properties,
    for instance the following general property of map
    \begin{align*}
      &\forall P, Q : (\NN \to \Omega), \forall f : \NN \to \NN,
      (\forall x : \NN, P(x) \implies Q(f(x)))\\
      &\implies\forall xs : \gStream{\NN}, \liftstr{P}(xs) \implies \liftstr{Q}(\map f\,xs).
    \end{align*}
    The proof illustrates the use of the property $\lift \comp \NEXT = \later$.
  \item \label{ex:unbox-equality}
    Given a closed term (we can generalise to terms in constant contexts) $f$ of type $A \to
    B$ we have $\BOX f$ of type $\blacksquare (A \to B)$. Define $\mathcal{L}(f) = \lim
    (\BOX f)$ of type $\blacksquare A \to \blacksquare B$.
    For any closed term $f : A \to B$ and $x : \blacksquare A$ we can then prove
    $\UNBOX (\mathcal{L}(f)\,x) =_{B} f\,(\UNBOX x)$. Then using 
    Prop.~\ref{prop:always-unbox-injective} we can, for instance,
    prove $\mathcal{L}(f \comp g) = \mathcal{L}(f) \comp \mathcal{L}(g)$.

    For functions of arity $k$ we define $\mathcal{L}_k$ using $\mathcal{L}$, and analogous
    properties hold, e.g. we have $\UNBOX(\mathcal{L}_2(f)\,x\,y) = f\,(\UNBOX x)\,(\UNBOX
    y)$, which allows us to transfer equalities proved for functions on guarded types to
    functions on $\blacksquare$'d types; see Sec.~\ref{sec:definable-functions} for an
    example. 
  \end{enumerate}
\end{example}




\section{Behavioural Differential Equations in $\lambdanext$}
\label{sec:definable-functions}

In this section we demonstrate the expressivity of our approach by
showing how to construct solutions to behavioural
differential equations~\cite{Rutten:2003:bde} in $\lambdanext$, and
how to reason about such functions in $\logiclambdanext$, rather than with bisimulation
as is more traditional.  These ideas are best explained via a simple example.

Supposing addition $+ : \NN \to \NN \to \NN$ is
given, then pointwise addition of streams, $\limplus$, can be defined by the
following behavioural differential equation
\begin{align*}
  \limhead (\limplus \sigma_1\, \sigma_2) = \limhead \sigma_1 + \limhead \sigma_2
  \qquad\qquad
  \limtail (\limplus \sigma_1\, \sigma_2) = \limplus (\limtail \sigma_1)\, (\limtail \sigma_2).
\end{align*}
To define the solution to this behavioural differential equation in $\lambdanext$, we
first translate it to a function on guarded streams
$\plus : \gStream{\NAT} \to \gStream{\NAT} \to \gStream{\NAT}$, as
\begin{align*}
   \plus \defeq 
   \Theta \lambda f . \lambda s_1 . \lambda s_2 . (\head s_1 + \head s_2)\consin(f \APP (\tail  s_1) \APP (\tail s_2))
\end{align*}
then define $\limplus : \Stream{\NAT}\to \Stream{\NAT} \to \Stream{\NAT}$ by
$\limplus = \mathcal{L}_2(\plus)$.
By Prop.~\ref{prop:theta-is-a-fp} we have
\begin{align}
  \label{eq:plus-defining-eq}
 \plus = \lambda s_1 . \lambda s_2 . (\head s_1 + \head s_2)\consin ((\NEXT\plus) \APP (\tail s_1)
  \APP (\tail s_2)).
\end{align}
This definition of $\limplus$ satisfies the
specification given by the behavioural differential equation above.
Let $\sigma_1, \sigma_2: \Stream{\NAT}$ and recall that
$\limhead = \head \comp \lambda s . \UNBOX s$.
Then use Ex.~\ref{ex:unbox-equality} and 
equality \eqref{eq:plus-defining-eq} to get
$\limhead (\limplus \sigma_1 \sigma_2) = \limhead \sigma_1 + \limhead \sigma_2$.

For $\limtail$
we proceed similarly, also using that $\tail (\UNBOX \sigma) = \NEXT (\UNBOX
(\limtail \sigma))$ which can be proved using the $\beta$-rule for $\BOX$ and the
$\eta$-rule for $\NEXT$.

Since $\plus$ is defined via guarded recursion we can reason about it with L\"ob
induction, for example to prove that it is commutative.
Ex.~\ref{ex:unbox-equality} and Prop.~\ref{prop:always-unbox-injective}
then immediately give that $\limplus$ on \emph{coinductive} streams $\Stream{\NAT}$ is
commutative.

Once we have defined $\plus$ we can use it when defining other functions on streams, for
instance stream multiplication $\otimes$ which is specified by equations
\begin{align*}
  \begin{split}
    \limhead (\sigma_1 \otimes \sigma_2) &= (\limhead \sigma_1) \cdot (\limhead \sigma_2)
  \end{split}\hspace{-1mm}
  \begin{split}
    \limtail (\sigma_1 \otimes \sigma_2) &=
    \left(\rho (\limhead \sigma_1) \otimes (\limtail \sigma_2)\right)
    \oplus \left((\limtail \sigma_1) \otimes \sigma_2\right)
  \end{split}
\end{align*}
where $\rho(n)$ is a stream with head $n$ and tail a stream of zeros, and $\cdot$ is
multiplication of natural numbers, and using $\oplus$ as infix notation for
$\limplus$. We can define $\guarded{\otimes} : \gStream{\NAT} \to \gStream{\NAT} \to
\gStream{\NAT}$ by $\guarded{\otimes} \defeq\mbox{}$ 
\begin{align*}
  \Theta
  \lambda f . \lambda s_1 . &\lambda s_2 .
  \left((\head s_1) \cdot (\head s_2)\right)\consin\\
        &\left(\NEXT\plus 
               \APP(f\APP\NEXT\guarded\iota(\head s_1)\APP\tail s_2)
               \APP(f\APP\tail s_1\APP \NEXT s_2)\right)
\end{align*}
then define $\otimes = \mathcal{L}_2\left(\guarded{\otimes}\right)$. It can be shown
that the function $\otimes$ so defined satisfies
the two defining equations above. Note that the guarded
$\plus$ is used to define $\guarded{\otimes}$, so our approach is
\emph{modular} in the sense of~\cite{milius:abstract-gsos}.

The example above generalises, as we can show that any solution to a
behavioural differential equation in $\sets$ can be obtained via
guarded recursion together with $\mathcal{L}_k$. The formal
statement is somewhat technical
and can be found in 
\begin{cameraversion}
the extended version~\cite{ARXIVVERSION}.
\end{cameraversion}
\begin{arxivversion}
App.~\ref{sec:proof-bde}.
\end{arxivversion}



\section{Discussion}\label{sec:conclusion}

Following Nakano~\cite{Nakano:Modality}, the $\LATER$ modality has been used as
type-former for a number of $\lambda$-calculi for guarded recursion. Nakano's
calculus and some
successors~\cite{Krishnaswami:Ultrametric,Severi:Pure,Abel:Formalized}
permit only \emph{causal} functions. The closest such work to ours is
that of Abel and Vezzosi~\cite{Abel:Formalized}, but due to a lack of destructor for
$\LATER$ their (strong) normalisation result relies on a somewhat artificial operational
semantics where the number of $\NEXT$s that can be reduced under is bounded
by some fixed natural number.

Atkey and McBride's extension of such calculi to acausal
functions~\cite{Atkey:Productive} forms the basis of this paper. We build on
their work by (aside from various minor changes such as eliminating the need to
work modulo first-class type isomorphisms) introducing normalising operational
semantics, an adequacy proof with respect to the topos of trees, and a program logic.

An alterative approach to type-based productivity guarantees are \emph{sized types},
introduced by Hughes et al~\cite{Hughes:Proving} and now extensively developed, for
example integrated into a variant of System $F_{\omega}$~\cite{Abel:Wellfounded}.
Our approach offers some advantages, such as adequate denotational semantics, and a
notion of program proof without appeal to dependent types, but extensions with
realistic language features (e.g. following
M{\o}gelberg~\cite{Mogelberg:tt-productive-coprogramming}) clearly need to be investigated.

\subsection*{Acknowledgements}
We gratefully acknowledge our discussions with Andreas Abel, Tadeusz Litak,
Stefan Milius, Rasmus M{\o}gelberg, Filip Sieczkowski, and Andrea Vezzosi,
and the comments of the reviewers.
This research was supported in part by the \mbox{ModuRes} Sapere Aude Advanced Grant from The Danish Council for Independent Research for the Natural Sciences (FNU). Ale\v{s} Bizjak is supported in part by a Microsoft Research PhD grant.

\bibliographystyle{splncs03}
\bibliography{paper}

\clearpage
\appendix

\section{Proofs for Section~\ref{sec:denot}}\label{app:lang_proofs}

\begin{proof}[of Lem.~\ref{lem:const_types}]
By induction on type formation, with $\LATER A$ case omitted, $\blacksquare A$ a base
case, and $\mu\alpha.A$ considered only where $\alpha$ is not free in $A$.
\end{proof}

\begin{proof}[of Lem.~\ref{lem:subst}]
By induction on the typing of $t$. We present the cases particular to our calculus.

$\NEXT t$: case $i=1$ is trivial.
$\den{\NEXT t[\vec{t}/\vec{x}]}_{i+1}(\gamma)=
\res{\den{A}}{i}\circ\den{t[\vec{t}/\vec{x}]}_{i+1}(\gamma)=
\res{\den{A}}{i}\circ\den{t}_{i+1}(\den{t_1}_{i+1}(\gamma),\ldots)$ by induction,
which is $\den{\NEXT t}_{i+1}(\den{t_1}_{i+1}(\gamma),\ldots)$.

$\den{(\PREV[\vec{y}\explsubst\vec{u}].t)[\vec{t}/\vec{x}]}_i(\gamma)=
\den{\PREV[\vec{y}\explsubst\vec{u}[\vec{t}/\vec{x}]].t}_i(\gamma)$, which by definition
is $\den{t}_{i+1}(\den{u_1[\vec{t}/\vec{x}]}_i(\gamma),\ldots)=
\den{t}_{i+1}(\den{u_1}_i(\den{t_1}_i(\gamma),\ldots),\ldots)$ by induction, which is
$\den{\PREV[\vec{y}\explsubst\vec{u}].t}_i(\den{t_1}_i(\gamma),\ldots)$.

$\den{\BOX[\vec{y}\explsubst\vec{u}[\vec{t}/\vec{x}]].t}_i(\gamma)_j=
\den{t}_j(\den{u_1[\vec{t}/\vec{x}]}_i(\gamma),\ldots)$, which by induction equals
$\den{t}_j(\den{u_1}_i(\den{t_1}_i(\gamma),\ldots),\ldots)=
\den{\BOX[\vec{y}\explsubst\vec{u}].t}_i(\den{t_1}_i(\gamma),\ldots)_j$.

$\den{\UNBOX t[\vec{t}/\vec{x}]}_i(\gamma)=
\den{t[\vec{t}/\vec{x}]}_i(\gamma)_i=
\den{t}_i(\den{t_1}_i(\gamma),\ldots)_i$ by induction, which is
$\den{\UNBOX t}_i(\den{t_1}_i(\gamma),\ldots)$.

$u_1\APP u_2$: case $i=1$ is trivial.
$\den{(u_1\APP u_2)[\vec{t}/\vec{x}]}_{i+1}(\gamma)=
(\den{u_1[\vec{t}/\vec{x}]}_{i+1}(\gamma)_i)\circ
\den{u_2[\vec{t}/\vec{x}]}_{i+1}(\gamma)=
(\den{u_1}_{i+1}(\den{t_1}_{i+1}(\gamma),\ldots)_i)\circ
\den{u_2}_{i+1}(\den{t_1}_{i+1}(\gamma),\ldots)$, which is
$\den{u_1\APP u_2}_{i+1}(\den{t_1}_{i+1}(\gamma),\ldots)$.
\end{proof}

\begin{proof}[of Soundness Thm.~\ref{lem:soundness}]
We verify the reduction rules of Def.~\ref{def:redrule}; extending this to any
evaluation context, and to $\redrt$, is easy. The product reduction case is standard,
and function case requires Lem.~\ref{lem:subst}. $\UNFOLD\FOLD$ is the application of
mutually inverse arrows.

$\den{\PREV[\vec{x}\explsubst\vec{t}].t}_i=\den{t}_{i+1}(\den{t_1}_i,\ldots)$. Each
$t_k$ is closed, so is denoted by an arrow from $1$ to the constant $\trees$-object
$\den{A_k}$, so by naturality $\den{t_k}_i=\den{t_k}_{i+1}$. But $\den{t}_{i+1}
(\den{t_1}_{i+1},\ldots)=\den{t[\vec{t}/\vec{x}]}_{i+1}$ by Lem.~\ref{lem:subst},
which is $\den{\PREV t[\vec{t}/\vec{x}]}_i$.

$\den{\PREV\NEXT t}_i=\den{\NEXT t}_{i+1}=\den{t}_i$.

$\den{\UNBOX(\BOX[\vec{x}\explsubst\vec{t}].t)}_i=(\den{\BOX
  [\vec{x}\explsubst\vec{t}].t}_i)_i
=\den{t}_i(\den{t_1}_i,\ldots)=\den{t[\vec{t}/\vec{x}]}_i$.

With $\APP$-reduction, index $1$ is trivial.
$\den{\NEXT t_1\APP\NEXT t_2}_{i+1}=
(\den{\NEXT t_1}_{i+1})_i\circ\den{\NEXT t_2}_{i+1}=
(\res{\den{A\to B}}{i}\circ \den{t_1}_{i+1})_i\circ
\res{\den{A}}{i}\circ \den{t_2}_{i+1}=
(\den{t_1}_i\circ\res{1}{i})_i\circ\den{t_2}_i\circ \res{1}{i}$ by naturality, which is
$(\den{t_1}_i)_i\circ\den{t_2}_i=
\den{t_1t_2}_i=
\den{t_1t_2}_i\circ\res{1}{i}=
\res{\den{B}}{i}\circ \den{t_1t_2}_{i+1}=
\den{\NEXT(t_1t_2)}_{i+1}$.
\end{proof}

\begin{proof}[of Lem.~\ref{lem:boxdepth}]
By induction on the construction of the type $A$.

$(i)$ follows with only interesting case the variable case -- $A$ cannot be $\alpha$
because of the requirement that $\alpha$ be guarded in $A$.

$(ii)$ follows with interesting cases: variable case enforces $bd(B)=0$; binary
type-formers $\times,\to$ have for example $\boxd(A_d)\geq \boxd(A_1\times A_2)$,
so $\boxd(A_d)\geq bd(B)$ and the induction follows; $\blacksquare A$ by construction
has no free variables.
\end{proof}

\begin{lemma}\label{lem:lrel_and_red}
If $t\redrt u$ and $a\lrel{i}{A}u$ then $a\lrel{i}{A}t$.
\end{lemma}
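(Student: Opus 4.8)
The plan is to prove Lemma~\ref{lem:lrel_and_red} by induction on the lexicographic ordering on (box depth, index, unguarded size) that underlies the definition of $\lrel{i}{A}$ in Def.~\ref{Def:log_rel}; this is exactly the structure on which the relation is defined, so the induction hypothesis will be available in precisely the cases where the definition of $\lrel{i}{A}$ unfolds to a smaller instance of $R$. In each case I would unfold the definition of $a\lrel{i}{A}u$, which says $u\redrt v$ for an appropriate value $v$ together with some conditions on the components of $a$ relative to $v$. Since $\redrt$ is transitive and $t\redrt u$, we get $t\redrt v$ with the same value $v$, and the component conditions are literally unchanged, so $a\lrel{i}{A}t$ follows directly --- except that in the structural cases the conditions themselves mention $R$ at smaller arguments, and there I would invoke the induction hypothesis.

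Concretely: for $A=\ONE$, $\NAT$ the statement is immediate since the condition is purely about reduction of $u$ (resp.\ $t$) to $\UNIT$ or $\SUCC^n\ZERO$, and $\redrt$ is transitive. For $A=A_1\times A_2$: from $u\redrt\langle u_1,u_2\rangle$ and $t\redrt u$ we get $t\redrt\langle u_1,u_2\rangle$, and the side conditions $a_d\lrel{i}{A_d}u_d$ carry over verbatim (no IH needed here, since they are about $u_d$, not $t$). For $A=A_1\to A_2$: $u\redrt\lambda x.s$ gives $t\redrt\lambda x.s$, and the condition ``for all $j\le i$, $a\lrel{j}{A_1}w$ implies $f_j(a)\lrel{j}{A_2}s[w/x]$'' is again about $s$, not $t$, so it carries over unchanged. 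For $A=\mu\alpha.A'$: $u\redrt\FOLD u'$ gives $t\redrt\FOLD u'$ and the condition $h_i(a)\lrel{i}{A'[\mu\alpha.A'/\alpha]}u'$ is preserved. For $A=\LATER A'$: $u\redrt\NEXT u'$ gives $t\redrt\NEXT u'$ and, when $i>1$, $a\lrel{i-1}{A'}u'$ is preserved. For $A=\blacksquare A'$: $u\redrt\BOX u'$ gives $t\redrt\BOX u'$, and $a_j\lrel{j}{A'}u'$ for all $j$ is preserved.

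On reflection, in every one of these cases the ``smaller value'' the term reduces to stays fixed under replacing $u$ by $t$, so the side conditions --- which always concern the reduct, not the original term --- transfer without any appeal to the induction hypothesis at all; the induction is really only needed to handle the well-definedness of the relation as we descend through the type, not to close any genuinely recursive obligation. Thus the ``main obstacle'' is essentially bookkeeping: one must be careful that $\redrt$ being the reflexive transitive closure of the \emph{deterministic} call-by-name $\red$ (Lem.~\ref{lem:det}) means ``$t\redrt v$'' is unambiguous, and that composing $t\redrt u\redrt v$ is legitimate. There is no hard step; the lemma is a routine consequence of transitivity of $\redrt$ together with the observation that $\lrel{i}{A}$ refers to a term only through its reduction behaviour and the value it reaches. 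I would therefore present the proof compactly, doing the $\LATER$ and $\blacksquare$ cases explicitly (as they are the ones where the recursive structure of Def.~\ref{Def:log_rel} is least transparent) and remarking that the remaining cases are identical in pattern.
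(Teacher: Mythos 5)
Your proposal is correct and matches the paper's proof, which likewise observes that $a\lrel{i}{A}u$ only constrains $u$ through the value it reduces to, so transitivity of $\redrt$ transfers everything verbatim (the paper does the $A_1\times A_2$ case and notes all others are similar). Your closing remark that the induction hypothesis is never actually invoked is exactly the point; the only inessential flourish is the appeal to determinism of $\red$, which is not needed here.
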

\begin{proof}
All cases follow similarly; consider $A_1\times A_2$. $(a_1,a_2)
\lrel{i}{A_1\times A_2}u$ implies $u\redrt\langle t_1,t_2\rangle$, where this value
obeys some property. But then $t\redrt\langle t_1,t_2\rangle$ similarly.
\end{proof}

\begin{lemma}\label{lem:res_and_rel}
$a\lrel{i+1}{A}t$ implies $\res{\den{A}}{i}(a)\lrel{i}{A}t$.
\end{lemma}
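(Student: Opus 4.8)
The plan is to prove Lemma~\ref{lem:res_and_rel} by the same lexicographic induction used to establish that $\lrel{i}{A}$ is well-defined in Def.~\ref{Def:log_rel}, i.e. induction on box depth, then index, then unguarded size. The statement is that applying a restriction function to a related semantic element preserves the relation at the lower index. The key observation underlying every case is that the reduction behaviour of $t$ is fixed (it converges to a value of the appropriate shape regardless of index), so the only thing that changes between index $i+1$ and index $i$ is the semantic component, and the restriction functions of the interpreted type are exactly designed to commute with the structure of $\den{A}$.

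First I would dispatch the base cases $\ONE$ and $\NAT$: here $\den{A}$ is a constant object, so $\res{\den{A}}{i}$ is the identity, and the relation at index $i$ is literally the same convergence condition as at index $i+1$, so there is nothing to prove. For $A_1\times A_2$, the restriction function is the pairing of the component restriction functions, and $(a_1,a_2)\lrel{i+1}{A_1\times A_2}t$ gives $t\redrt\langle t_1,t_2\rangle$ with $a_d\lrel{i+1}{A_d}t_d$; applying the induction hypothesis (smaller unguarded size) to each component yields $\res{\den{A_d}}{i}(a_d)\lrel{i}{A_d}t_d$, hence the result. The function case $A\to B$ is handled similarly: from $f\lrel{i+1}{A\to B}t$ we get $t\redrt\lambda x.s$ and the quantified condition for all $j\leq i+1$; since the $\lrel{}{}{}$ condition for functions already quantifies over all smaller indices, and $\res{\den{A\to B}}{i}(f)$ is just the truncation of the tuple $f$ to its first $i$ components (by the description of exponentials and their projections as restrictions), the condition for all $j\leq i$ is an immediate sub-case. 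For $\mu\alpha.A$, we use that the ``unfold'' isomorphism $h$ of Lem.~\ref{lem:rec_types} is a $\trees$-arrow, hence natural, so it commutes with restriction; then $a\lrel{i+1}{\mu\alpha.A}t$ gives $t\redrt\FOLD u$ with $h_{i+1}(a)\lrel{i+1}{A[\mu\alpha.A/\alpha]}u$, and applying the induction hypothesis (smaller unguarded size, by Lem.~\ref{lem:boxdepth-i}, with box depth non-increasing by Lem.~\ref{lem:boxdepth-ii}) gives $\res{\den{A[\mu\alpha.A/\alpha]}}{i}(h_{i+1}(a))\lrel{i}{A[\mu\alpha.A/\alpha]}u$, which by naturality of $h$ equals $h_i(\res{\den{\mu\alpha.A}}{i}(a))$. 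For $\blacksquare A$: $a\lrel{i+1}{\blacksquare A}t$ means $t\redrt\BOX u$ with $a_j\lrel{j}{A}u$ for \emph{all} $j$; since $\den{\blacksquare A}$ is a constant object (Lem.~\ref{lem:const_types} is not needed here, just the definition of $\blacksquare$ as a composite with $\Delta$), $\res{\den{\blacksquare A}}{i}$ is the identity on the relevant data and the ``for all $j$'' condition is unchanged, so the case is trivial.

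The one case requiring genuine care is $\LATER A$. Here $\res{\den{\LATER A}}{i}$ is, by Def.~\ref{def:functors}, equal to $\res{\den{A}}{i-1}$ for $i\geq 2$ and is the unique map to $\{\ast\}$ for $i=1$. So the argument splits on $i$. If $i=1$, then $\res{\den{\LATER A}}{1}(a)=\ast$ and the condition $\ast\lrel{1}{\LATER A}t$ only requires $t\redrt\NEXT u$, which we already have from $a\lrel{2}{\LATER A}t$; done. If $i\geq 2$, from $a\lrel{i+1}{\LATER A}t$ we get $t\redrt\NEXT u$ and $a\lrel{i}{A}u$ (using $i+1>1$); we must show $\res{\den{A}}{i-1}(a)\lrel{i-1}{A}u$, which is exactly the induction hypothesis applied at the strictly smaller index $i-1<i$ (box depth unchanged, index strictly decreased). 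I expect this $\LATER$ case — specifically getting the index bookkeeping right, remembering that the relation at $\LATER A$ and index $i$ talks about $A$ at index $i-1$ while the restriction function of $\LATER A$ at index $i$ is the restriction function of $A$ at index $i-1$ — to be the main (though still routine) obstacle; everything else follows from naturality of the interpreting arrows together with the fact that reduction behaviour is index-independent.
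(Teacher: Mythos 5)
Your proof is correct and follows essentially the same route as the paper's: trivial constant-object cases, pointwise restriction for products, downwards closure plus the tuple-truncation description of exponential restrictions for $\to$, naturality of $h$ for $\mu$, identity restriction for $\blacksquare$, and the index-shift bookkeeping for $\LATER$ handled by the induction hypothesis at the smaller index. The only difference is that you make the lexicographic induction measure explicit, which the paper leaves implicit.
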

\begin{proof}
Cases $\ONE,\NAT$ are trivial. Case $\times$ follows by induction because
restrictions are defined pointwise. Case $\mu$ follows by induction and the naturality
of the isomorphism $h$. Case $\blacksquare A$ follows because
$\res{\den{\blacksquare A}}{i}(a)=a$.

For $A\to B$ take $j\leq i$ and $a'\lrel{j}{A}u$. By the downwards closure in the
definition of $\lrel{i+1}{A\to B}$ we have $f_j(a')\lrel{j}{B}s[u/x]$. But
$f_j=(\res{\den{A\to B}}{i}(f))_j$.

With $\LATER A$, case $i=1$ is trivial, so take $i=j+1$. $a\lrel{j+2}{\LATER A}t$
means $t\redrt\NEXT u$ and $a\lrel{j+1}{A}u$, so by induction $\res{\den{A}}{j}(a)
\lrel{j}{A}u$, so $\res{\den{\LATER A}}{j+1}(a)\lrel{j}{A}u$ as required.
\end{proof}

\begin{lemma}\label{lem:constrel}
If $a\lrel{i}{A}t$ and $A$ is constant, then $a\lrel{j}{A}t$ for all $j$.
\end{lemma}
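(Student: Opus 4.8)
The plan is to proceed by induction on the type formation of the constant type $A$, mirroring the structure used in Lem.~\ref{lem:const_types} and exploiting the characterisation of constant types as those in which every $\LATER$ lies beneath a $\blacksquare$. The key observation is that for a constant closed type, Lem.~\ref{lem:const_types} tells us $\den{A}$ is a constant object, so the restriction functions $\res{\den{A}}{i}$ are identities; combined with the symmetric use of Lem.~\ref{lem:res_and_rel} this lets us move the index $i$ both down and (by the same constancy) effectively up. Concretely, from $a \lrel{i}{A} t$ we want $a \lrel{j}{A} t$ for arbitrary $j$, splitting into the cases $j \leq i$ and $j > i$.

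First I would dispatch the base cases: for $A \in \{\ONE, \NAT\}$ the relation $\lrel{i}{A}$ does not mention $i$ at all, so the claim is immediate. For $A = \blacksquare B$ (which is always constant, and is the crucial case), $a \lrel{i}{\blacksquare B} t$ unfolds to $t \redrt \BOX u$ together with $a_j \lrel{j}{B} u$ \emph{for all} $j$ — a condition that is manifestly independent of $i$, so again we are done with no recourse to the induction hypothesis. For $A = B_1 \times B_2$ and $A = B_1 \to B_2$: these are constant only when $B_1$ and $B_2$ are constant (inspecting the definition of constant types, since a $\LATER$ in either component would need a $\blacksquare$ above it, which a product/arrow former does not supply), so the induction hypothesis applies to the components. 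For products this is a direct rewrite through the pairing clause. For arrows, given $j > i$, we must produce from $f \lrel{i}{B_1 \to B_2} t$ the witness $t \redrt \lambda x.s$ and, for all $k \leq j$ and all $a' \lrel{k}{B_1} u$, that $f_k(a') \lrel{k}{B_2} s[u/x]$; for $k \leq i$ this is given, and for $i < k \leq j$ we first downward-close $a' \lrel{k}{B_1} u$ to $a' \lrel{i}{B_1} u$ using the induction hypothesis on $B_1$ (or Lem.~\ref{lem:res_and_rel} and constancy), obtain $f_i(a') \lrel{i}{B_2} s[u/x]$, then push back up to index $k$ using the induction hypothesis on $B_2$, using that $f$ lives in a constant exponential so its components at different indices agree up to the trivial restriction maps.

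For $A = \mu\alpha.B$: a $\mu$-type is constant precisely when, after unfolding, the body contains no unguarded $\LATER$ — more carefully, constancy of $\mu\alpha.B$ forces $\alpha$ not to occur unguarded in a way that would expose a bare $\LATER$; in the relevant subcase ($\alpha$ not free in $B$, as in the proof of Lem.~\ref{lem:const_types}) the type is essentially the constant type $B$ and the claim follows by the induction hypothesis together with the unfold isomorphism $h$ of Lem.~\ref{lem:rec_types} and its naturality. The case $A = \LATER B$ does not arise, since $\LATER B$ is never constant (its $\LATER$ is not beneath a $\blacksquare$).

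The main obstacle I anticipate is the arrow case, specifically the ``pushing up'' direction from index $i$ to a larger index $j$: all the other clauses either ignore the index ($\ONE$, $\NAT$, $\blacksquare$) or only ever need to decrease it (and Lem.~\ref{lem:res_and_rel} already handles decreasing), but for $\to$ we genuinely need to reconstruct higher-index information, which is only legitimate because the exponential $\den{B_1 \to B_2}$ is a constant object when $B_1, B_2$ are constant — so the careful point is to justify that the components $f_i$ and $f_j$ of a constant-exponential element are ``the same'' (identified along identity restriction maps), and that $s[u/x]$ does not depend on the index. Getting the bookkeeping of indices and the appeal to Lem.~\ref{lem:const_types}/Lem.~\ref{lem:res_and_rel} exactly right there is where the real content of the proof sits.
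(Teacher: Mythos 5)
Your proposal is correct and follows essentially the same route as the paper, which disposes of this lemma by an easy induction on types, omitting the $\LATER$ case (impossible for constant types) and treating $\blacksquare$ as a base case whose relation clause already quantifies over all indices. Your extra care in the $\to$ case — justifying the upward move of the index via constancy of the exponential and the identity restriction maps — is exactly the detail the paper's one-line proof leaves implicit.
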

\begin{proof}
Easy induction on types, ignoring $\LATER A$ and treating $\blacksquare A$ as a base
case.
\end{proof}

We finally turn to the proof of the Fundamental Lemma.

\begin{proof}[of Lem.~\ref{lem:ad}]
By induction on the typing $\Gamma\vdash t:A$. $\UNIT,\ZERO$ cases are trivial, and
$\langle u_1,u_2\rangle,\FOLD t$ cases follow by easy induction.

$\SUCC$: If $t[\vec{t}/\vec{x}]$ reduces to $\SUCC^l\ZERO$ for some $l$ then
$\SUCC t[\vec{t}/\vec{x}]$ reduces to $\SUCC^{l+1}\ZERO$, as we may reduce
under the $\SUCC$.

$\pi_d t$: If $\den{t}_i(\vec{a})\lrel{i}{A_1\times A_2}t[\vec{t}/\vec{x}]$ then $t
[\vec{t}/\vec{x}]\redrt\langle u_1,u_2\rangle$ and $u_d$ is related to the $d$'th projection of $\den{t}_i(\vec{a})$. But then $\pi_d t[\vec{t}/\vec{x}]\redrt\pi_d\langle
u_1,u_2\rangle\red u_d$, so Lem.~\ref{lem:lrel_and_red} completes the case.

$\lambda x.t$: Taking $j\leq i$ and $a\lrel{j}{A}u$, we must show that
$\den{\lambda x.t}_i(\vec{a})_j(a)\lrel{j}{B}t[\vec{t}/\vec{x}][u/x]$. The left hand
side is $\den{t}_j(\vec{a}\hspace{-0.3em}\upharpoonright_j,a)$. For each $k$, $a_k
\hspace{-0.3em}\upharpoonright_j\hspace{-0.3em}\lrel{j}{A_k}t_k$ by
Lem.~\ref{lem:res_and_rel}, and induction completes the case.

$u_1u_2$: By induction $u_1[\vec{t}/\vec{x}]\redrt \lambda x.s$ and
$\den{u_1}_k(\vec{a})_k(\den{u_2}_k(\vec{a}))\lrel{i}{B}s[u_2[\vec{t}/\vec{x}]/x]$.
Now $(u_1u_2)\redrt(\lambda x.s)(u_2[\vec{t}/\vec{x}])\red
s[u_2[\vec{t}/\vec{x}]/x]$, and Lem.~\ref{lem:lrel_and_red} completes.

$\UNFOLD t$: we reduce under $\UNFOLD$,
then reduce $\UNFOLD\FOLD$, then use Lem.~\ref{lem:lrel_and_red}.

$\NEXT t$: Trivial for index $1$. For $i=j+1$, if each $a_k\lrel{j+1}{A_k}t_k$ then by
Lem.~\ref{lem:res_and_rel} $\res{\den{A_k}}{j}(a_k)\lrel{j}{A_k}
t_k$. Then by induction $\den{t}_j\circ\res{\den{A_1}\times\cdots\den{A_m}}{j}(\vec{a})\lrel{j}{A}t[\vec{t}/\vec{x}]$, whose left side is by naturality
$\res{\den{A}}{j}\circ\den{t}_{j+1}(\vec{a})=\den{\NEXT t}_{j+1}(\vec{a})$.

$\PREV[\vec{y}\explsubst\vec{u}].t$: $\den{u_k}_i(\vec{a})\lrel{i}{A_k}u_k[\vec{t}/
\vec{x}]$ by induction, so $\den{u_k}_i(\vec{a})\lrel{i+1}{A_k}u_k[\vec{t}/
\vec{x}]$ by Lem.~\ref{lem:constrel}. Then $\den{t}_{i+1}(\den{u_1}_i(\vec{a}),
\ldots)\lrel{i+1}{\LATER A} t[u_1[\vec{t}/\vec{x}]/y_1,\ldots]$ by induction, so we
have $t[u_1[\vec{t}/\vec{x}]/y_1,\ldots]\redrt\NEXT s$ with $\den{t}_{i+1}
(\den{u_1}_k(\vec{a}),\ldots)\lrel{i}{A}s$. The left hand side is $\den{\PREV[\vec{y}\explsubst\vec{u}] 
.t}_i(\vec{a})$, while
$\PREV[\vec{y}\explsubst\vec{u}[\vec{t}/\vec{x}]].t\red \PREV
t[u_1[\vec{t}/\vec{x}]/y_1,\ldots]
\redrt\PREV\NEXT s\red s$, so Lem.~\ref{lem:lrel_and_red} completes.

$\BOX[\vec{y}\explsubst\vec{u}].t$: To show $\den{\BOX[\vec{y}\explsubst\vec{u}].t}_i
(\vec{a})\lrel{i}{\blacksquare A}\BOX[\vec{y}\explsubst\vec{u}].t)[\vec{t}/\vec{x}]$, we
observe that the right hand side reduces in one step to $\BOX t[u_1[\vec{t}/
\vec{x}]/y_1,\ldots]$. The $j$'th element of the left hand side is $\den{t}_j
(\den{u_1}_k(\vec{a}),\ldots)$. We need to show this is related by $\lrel{j}{A}$ to
$ t[u_1[\vec{t}/\vec{x}]/y_1,\ldots]$; this follows by Lem.~\ref{lem:constrel} and
induction.

$\UNBOX t$: By induction $t[\vec{t}/\vec{x}]\redrt\BOX u$, so $\UNBOX t[\vec{t}/
\vec{x}]\redrt\UNBOX\BOX u\red u$. By induction $\den{t}_i(\vec{a})_i\lrel{i}{A}u$,
so $\den{\UNBOX t}_i(\vec{a})\lrel{i}{A}u$, and Lem.~\ref{lem:lrel_and_red}
completes.

$u_1\APP u_2$: Index $1$ is trivial so set $i=j+1$. $\den{u_2}_{j+1}(\vec{a})
\lrel{j+1}{\LATER A}u_2[\vec{t}/\vec{x}]$ implies $u_2[\vec{t}/\vec{x}]\redrt\NEXT s_2$ with
$\den{u_2}_{j+1}(\vec{a})\lrel{j}{A}s_2$. Similarly $u_1\redrt\NEXT s_1$ and
$s_1\redrt \lambda x.s$ with
$(\den{u_1}_{j+1}(\vec{a})_j)\circ\den{u_2}_{j+1}(\vec{a})\lrel{j}{B}s[s_2/x]$.
The left hand side is exactly $\den{u_1\APP u_2}_{j+1}(\vec{a})$. Now $u_1\APP u_2
\redrt \NEXT s_1\APP u_2\redrt \NEXT s_1\APP \NEXT s_2\red \NEXT(s_1s_2)$, and
$s_1s_2\redrt (\lambda x.s)s_2\red s[s_2/x]$, completing the proof.
\end{proof}


\section{Example Proofs in \logiclambdanext}
\label{sec:example-proofs}

We first record a substitution property of $\BOX$ and $\PREV$ for later use.
\begin{lemma}
  \label{lem:box-prev-const-subst}
  Let $A_1, \dots, A_k$ and $B$ be constant types and $C$ any type.
  If we have $x : B \vdash t : C$ and $y_1 : A_k, \dots, y_k : A_k \vdash t' : B$ then
  \[
    \BOX\,[x\explsubst t'] . t =_{\blacksquare C} \BOX \iota . t[t'/x].
  \]
  If $C = \LATER D$ then we also have
  \[
    \PREV \,[x\explsubst t']. t =_D \PREV \iota . t[t'/x]
  \]
\end{lemma}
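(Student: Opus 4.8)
The plan is to prove both equations by unfolding the definitions of $\BOX[\vec{x}\explsubst\vec{t}].t$ and $\PREV[\vec{x}\explsubst\vec{t}].t$ as they are used in $\logiclambdanext$, and then chaining together the new definitional equations of Fig.~\ref{fig:additional-equations}. The key observation is that the substitution-carrying forms $\BOX[x\explsubst t'].t$ and $\BOX\iota.t[t'/x]$ are meant to be interchangeable up to provable equality precisely when the hypothesis context over which they are formed is constant, which is guaranteed here because $A_1,\dots,A_k$ and $B$ are constant types.

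For the $\BOX$ case, I would first apply the $\eta$-like equation $\BOX[\vec{x}\explsubst\vec{t}].\UNBOX s = s[\vec{t}/\vec{x}]$ from Fig.~\ref{fig:additional-equations}, instantiated appropriately, to rewrite $\BOX\iota.t[t'/x]$; alternatively, and more directly, apply the $\beta$-equation $\UNBOX(\BOX[\vec{x}\explsubst\vec{t}].t) = t[\vec{t}/\vec{x}]$ to reduce both sides to a common form under an $\UNBOX$ and then use $\BOX$-$\eta$ to strip the $\UNBOX$. Concretely: using $\BOX\iota.s = \BOX[\vec{y}\explsubst\vec{y}].s$, we have $\BOX\iota.t[t'/x] = \BOX[x\explsubst t'].t$ whenever the intermediate typing goes through, and the constancy of $B$ is exactly what licenses forming $\BOX[x\explsubst t'].t$ with $x:B\vdash t:C$. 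The equality then follows by the substitution lemma for denotations (Lem.~\ref{lem:subst}) together with the fact that closed, or constant-context, terms are denoted by arrows out of constant objects, so the index at which $t'$ is evaluated does not matter — this is the same reasoning used in the soundness proof for the $\PREV$ reduction rule. For the $\PREV$ case, when $C = \LATER D$, I would use the equation $\NEXT(\PREV[\vec{x}\explsubst\vec{t}].t) = t[\vec{t}/\vec{x}]$ from Fig.~\ref{fig:additional-equations} applied to both $\PREV[x\explsubst t'].t$ and $\PREV\iota.t[t'/x]$: both become equal after applying $\NEXT$, and then rule \eqlaternextrule{} (or rather the injectivity of $\NEXT$ on equalities, which follows from $\NEXT x =_{\LATER X}\NEXT y \iff \later(x=_X y)$ combined with the rules for $\later$) lets us conclude the equality before the $\NEXT$, since $D$-valued equalities lift and descend along $\NEXT$.

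The main obstacle I anticipate is bookkeeping around the intermediate typing judgements: to even write down $\BOX[x\explsubst t'].t$ one needs $\Gamma \vdash t' : B$ where here $\Gamma$ is $y_1:A_1,\dots,y_k:A_k$, and one must check that all the side-conditions on constancy line up so that the Fig.~\ref{fig:additional-equations} equations are applicable (they require the context $\Gamma_\blacksquare$ to be constant, which is satisfied since each $A_i$ and $B$ is constant). A secondary subtlety is that the substitution $(\PREV\iota.t)[t'/x] = \PREV[x\explsubst t'].t$ holds only as a syntactic identity governing how $\PREV$ interacts with substitution — the paper remarks on this right after Def.~\ref{def:typing} — so I would be careful to distinguish that meta-level identity from the object-level provable equality we are asked to establish, and make sure the argument genuinely uses the equational rules rather than silently assuming the sugar unfolds to the desired form. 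Once these typing obligations are dispatched, the equalities are a short calculation chaining the relevant rules from Fig.~\ref{fig:additional-equations}.
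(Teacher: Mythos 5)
Your high-level plan (chain the equations of Fig.~\ref{fig:additional-equations}, with a semantic fallback via Lem.~\ref{lem:subst} and constancy) is close in spirit to the paper's, but both of your syntactic arguments hinge on inverting an operation that is not invertible in $\logiclambdanext$. For the $\BOX$ case, applying the $\beta$-rule to both sides does give $\UNBOX(\BOX[x\explsubst t'].t) =_C t[t'/x] =_C \UNBOX(\BOX\iota.t[t'/x])$, but ``using $\BOX$-$\eta$ to strip the $\UNBOX$'' is not a licensed step: the paper explicitly notes, just before Prop.~\ref{prop:always-unbox-injective}, that $\UNBOX x =_C \UNBOX y$ does \emph{not} in general entail $x =_{\blacksquare C} y$. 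What is needed is $\always(\UNBOX x =_C \UNBOX y)$ together with Prop.~\ref{prop:always-unbox-injective}; here the $\always$ is available because the equality is provable outright in the constant context $y_1:A_1,\ldots,y_k:A_k$. That is exactly the route the paper takes, and your proposal never invokes that proposition. (Your ``concretely'' sentence, asserting $\BOX\iota.t[t'/x] = \BOX[x\explsubst t'].t$ ``whenever the intermediate typing goes through'', is a restatement of the goal rather than an argument.)

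The $\PREV$ case is more seriously broken: there is no injectivity of $\NEXT$ on equalities in this logic. Rule \eqlaternextrule{} turns $\NEXT u =_{\LATER D}\NEXT v$ into $\later(u =_D v)$ only, and no rule of Fig.~\ref{fig:old-rules} lets you pass from $\later\phi$ to $\phi$ --- if one did, L\"ob induction would render every formula provable. (Semantically, $\NEXT_1$ is the unique map into a singleton, so $\NEXT u = \NEXT v$ says nothing about $u$ and $v$ at stage $1$.) The paper circumvents this either by assuming $D$ total and inhabited, so that $t$ can be written internally as $\NEXT s$ and the $\beta$-rule $\PREV[\vec x\explsubst\vec t].(\NEXT s) = s[\vec t/\vec x]$ applies without ever cancelling a $\NEXT$, or, for arbitrary $D$, by arguing directly in the semantics. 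Your semantic remark (Lem.~\ref{lem:subst} plus the fact that terms in constant contexts of constant type denote maps of constant objects, as in the soundness case for $\PREV$) is the correct repair and would in fact prove both equations for arbitrary $C$ and $D$; but as written you offer it only for the $\BOX$ half and rest the $\PREV$ half on the invalid $\NEXT$-cancellation.
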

We can prove the first part of the lemma in the logic, using
Prop.~\ref{prop:always-unbox-injective} and the $\beta$-rule for $\BOX$. We can also
prove the second part of the lemma for \emph{total and inhabited types $D$} with the rules
we have stated so far using the $\beta$-rule for $\NEXT$. For arbitrary $D$ we can prove
the lemma using the semantics.

\subsection{Acausal Example}
\label{sec:acausal-example}

To see that L\"ob induction can be used to prove properties of recursively defined
acausal functions we show that for any $n : \NAT$ and any $f : \NAT \to \NAT$ we have
\[
  \everysecond\, (\BOX \iota . \iterate\,(\NEXT f)\, n ) =_{\gStream{\NAT}}
  \iterate\,(\NEXT f^2)\, n,
\]
where we write $f^2$ for $\lambda n . f (f n)$. We first derive the intermediate
result
\begin{equation}
  \label{eq:iterate-tail}
  \forall m : \NAT, \limtail \,(\BOX \iota . \iterate\,(\NEXT f)\, m) =_{\Stream{\NAT}} 
  \BOX \iota . \iterate \, (\NEXT f)\, (f \,m),
\end{equation}
by unfolding and applying Prop.~\ref{prop:theta-is-a-fp}:
\begin{align*}
  \limtail\, (\BOX \iota . \iterate\,(\NEXT f)\, m) &=
  \BOX\, [s \explsubst \BOX \iota . \iterate\,(\NEXT f)\, m] . \PREV \iota . \tail
  (\UNBOX s) \\
  &= \BOX \iota . \PREV \iota . \tail (\iterate\, (\NEXT f)\, m)
    \tag*{(by Lem.~$\ref{lem:box-prev-const-subst}$) } \\
  &= \BOX \iota . \PREV \iota . \NEXT\, (\iterate\, (\NEXT f)\, (f\, m)) \\
  &= \BOX \iota . \iterate\, (\NEXT f)\, (f \, m).
\end{align*}
Now assume
\begin{equation}
  \label{eq:LIH}
  \later \left(\forall n : \NAT, \everysecond (\BOX \iota . \iterate\,(\NEXT f)\, n )
    =_{\gStream{\NAT}} \iterate\,(\NEXT f^2)\, n\right),
\end{equation}
then by L\"ob induction we can derive
\begin{align*}
  \everysecond\, (\BOX \iota.&\iterate\,(\NEXT f)\,n) \\
  &= n \consin \NEXT\,(\everysecond\,(\limtail\,(\limtail\,(\BOX\iota.
    \iterate\,(\NEXT f)\,n)))) \\
  &= n \consin \NEXT\, (\everysecond\, (\BOX \iota. \iterate\, 
    (\NEXT f)\, (f \, (f\, n))))
    \tag*{(by \ref{eq:iterate-tail})} \\
  &= n \consin \NEXT\, (\iterate\,(\NEXT f^2)\,(f\,(f\, n)))
    \tag*{(by \ref{eq:LIH} and \eqlaternextrule)} \\
  &= \iterate \, (\NEXT f^2)\, n.
\end{align*}
\subsection{Higher-Order Logic Example}
\label{sec:hol-example}

We now prove
\begin{align*}
  &\forall P, Q : (\NN \to \Omega), \forall f : \NN \to \NN,
  (\forall x : \NN, P(x) \implies Q(f(x)))\\
  &\implies\forall xs : \Stream{\NN}, \liftstr{P}(xs) \implies \liftstr{Q}(\map f\,xs).
\end{align*}
This is a simple property of $\map$, but the proof shows how the pieces fit together.
Recall that $\map$ satisfies $\map f\,xs = f\,(\head xs) \consin (\NEXT (\map f) \APP (\tail xs))$.
We prove the property by L\"ob induction. So let $P$ and $Q$ be predicates on $\NN$ and
$f$ a function on $\NN$ that satisfies $\forall x : \NN, P(x) \implies Q(f(x))$. To use
L\"ob induction assume
\begin{align}
  \label{eq:lob-IH-map}
  \later(\forall xs : \Stream{\NN}, \liftstr{P}(xs) \implies \liftstr{Q}(\map f\,xs))
\end{align}
and let $xs$ be a stream satisfying $\liftstr{P}$. Unfolding $\liftstr{P}(xs)$ we get
$P(\head xs)$ and $\lift (\NEXT \liftstr{P} \APP (\tail xs))$ and we need to prove
$Q(\head (\map f\,xs))$ and also $\lift (\NEXT \liftstr{Q} \APP (\tail (\map f\,xs)))$.
The first is easy since $Q(\head (\map f\,xs)) = Q(f\,(\head xs))$. For the second we have
$\tail (\map f\,xs) = \NEXT (\map f) \APP (\tail xs)$. Since $\Stream{\NN}$ is a total and
inhabited type there is a stream $xs'$ such that $\NEXT xs' = \tail xs$. This gives
$\tail (\map f\,xs) = \NEXT (\map f xs')$ and so our desired result reduces to
$\lift (\NEXT (\liftstr{Q}(\map f\, xs')))$ and $\lift (\NEXT \liftstr{P} \APP (\tail
xs))$ is equivalent to $\lift (\NEXT (\liftstr{P}(xs')))$. Now $\lift \comp \NEXT =
\later$ and so what we have to prove is $\later(\liftstr{Q}(\map f\,xs'))$ from
$\later (\liftstr{P}(xs'))$, which follows directly from the induction hypothesis
\eqref{eq:lob-IH-map}.


\section{Sums}\label{app:sums}

This appendix extends Secs.~\ref{sec:calculus}, \ref{sec:denot} and~\ref{sec:logic}
to add sum types to the $\lambdanext$-calculus. and to logic $\logiclambdanext$.

Binary sums in Atkey and McBride~\cite{Atkey:Productive} come with the type
isomorphism $\blacksquare A+\blacksquare B\cong \blacksquare(A+B)$, but there are
not in general terms witnessing this isomorphism. Likewise if binary sums are added to
our calculus as obvious we may define the term
\[
  \lambda x.\BOX \iota.\CASE x\OF x_1.\IN_1\UNBOX x_1;x_2.\IN_2\UNBOX x_2:
  \blacksquare A+\blacksquare B\to \blacksquare(A+B)
\]
but no inverse is definable in general. We believe such a map may be useful when
working with guarded recursive types involving sum, such as the type of potentially
infinite lists, and in any case the isomorphism is valid in the topos of trees and so it is
harmless for us to reflect this in our calculus. We do this via a new term-former
$\BOXSUM$ allowing us to define
\[
  \lambda x.\BOXSUM \iota.\UNBOX x:\blacksquare(A+B)\to\blacksquare A+\blacksquare B
\]
This construct may be omitted without effecting the results of this section.

\begin{definition}[ref. Defs.~\ref{def:terms},\ref{def:redrule},\ref{def:value},%
  \ref{def:eval_ctx},\ref{def:types},\ref{def:typing}]
$\lambdanext$-\emph{terms} are given by the grammar
\[
  \begin{array}{rcl}
    t & \bnfeq & \cdots ~|~ \ABORT t ~|~ \IN_d t ~|~ \CASE t \OF x_1. t; x_2. t ~|~
      \BOXSUM \sigma.t
  \end{array}
\]
where $d\in\{1,2\}$, and $x_1,x_2$ are variables. We abbreviate terms with
$\BOXSUM$ as for $\PREV$ and $\BOX$.

The \emph{reduction rules} on closed $\lambdanext$-terms with sums are
\[
  \begin{array}{rcll}
    \CASE \IN_d t \OF x_1 . t_1; x_2 . t_2 & \red & t_d[t/x_d]
      & \quad\mbox{\emph{($d\in\{1,2\}$)}} \\
    \BOXSUM[\vec{x} \explsubst \vec{t}].t & \red & \BOXSUM t[\vec{t}/\vec{x}] &
      \quad\mbox{\emph{($\vec{x}$ non-empty)}} \\
    \BOXSUM\IN_i t & \red & \IN_i\BOX t
  \end{array} 
\]

  \emph{Values} are terms of the form
  \[
    \cdots  ~|~ \IN_1 t ~|~ \IN_2 t
  \]

 \emph{Evaluation contexts} are defined by the grammar
  \[
    \begin{array}{rcl}
      E &\bnfeq& \cdots ~|~ \ABORT E ~|~
      \CASE E\OF x_1.t_1;x_2.t_2 ~|~ \BOXSUM E
    \end{array} 
  \]

  $\lambdanext$-\emph{types} for sums are defined inductively by the rules of
  Fig.~\ref{fig:types_sums}, and the new \emph{typing judgments} are given in
  Fig.~\ref{fig:typing-sums}, where $d\in\{1,2\}$.
\end{definition}

  \begin{figure}
  \begin{mathpar}
    \inferrule*{ }{\nabla \vdash \EMPTY}
    \and
    \inferrule*{%
      \nabla \vdash A_1 \\
      \nabla \vdash A_2}{%
      \nabla\vdash A_1+A_2}
  \end{mathpar}
  \caption{Type formation for sums in the $\lambdanext$-calculus}
  \label{fig:types_sums}
  \end{figure}

  \begin{figure}
  \begin{mathpar}
    \inferrule*{\Gamma \vdash t:\EMPTY}{%
      \Gamma \vdash \ABORT t:A}
    \and
    \inferrule*{\Gamma \vdash t: A_d}{%
      \Gamma \vdash \IN_d t : A_1 + A_2}
    \and
    \inferrule*{\Gamma \vdash t: A_1+A_2 \\
      \Gamma,x_1:A_1\vdash t_1:A \\
      \Gamma,x_2:A_2\vdash t_2:A}{%
      \Gamma \vdash \CASE t \OF x_1 . t_1; x_2 . t_2:A}
    \and
    \inferrule*[right={$A_1,\ldots,A_n\,\mathsf{constant}$}]{%
      x_1:A_1,\ldots,x_n:A_n \vdash t:B_1+B_2 \\
      \Gamma\vdash t_1:A_1 \\
      \cdots \\
      \Gamma\vdash t_n:A_n }{%
      \Gamma \vdash \BOXSUM[x_1\explsubst t_1,\ldots,x_n\explsubst t_n].t:
        \blacksquare B_1+\blacksquare B_2}
  \end{mathpar}
  \caption{Typing rules for sums in the $\lambdanext$-calculus}
  \label{fig:typing-sums}
  \end{figure}

We now consider denotational semantics. Note that the initial object of $\trees$ is
$\Delta\emptyset$ (ref. Def.~\ref{def:functors}), while binary coproducts in $\trees$ are
defined pointwise. By naturality it holds that for any arrow $f:X\to Y+Z$ and $x\in X$,
$f_i(x)$ must be an element of the same side of the sum for all $i$.

\begin{definition}[ref. Defs.~\ref{def:types_denote},\ref{def:terms_denote}]
\begin{itemize}
\item
  $\den{\EMPTY}$ is the constant functor $\Delta\emptyset$;
\item
  $\den{A_1+ A_2}(\vec{W})=\den{A_1}(\vec{W})
  +\den{A_2}(\vec{W})$ and likewise for $\trees$-arrows.
\end{itemize}

Term-formers for sums are intepreted via $\trees$-coproducts, with $\ABORT$, $\IN_d$
and $\CASE$ defined as usual, and $\BOXSUM$ defined as follows.
\begin{itemize}
\item
  Let $\den{t}_j(\den{t_1}_i(\gamma),\ldots,\den{t_n}_i(\gamma))$ (which is
  well-defined
  by Lem.~\ref{lem:const_types}) be $[a_j,d]$ as $j$ ranges, recalling that
  $d\in\{1,2\}$ is the same for all $i$. Define $a:1\to\den{A_d}$ to have $j$'th element
  $a_j$. Then $\den{\BOXSUM[\vec{x}\explsubst\vec{t}].t}_i(\gamma)\defeq[a,d]$.
\end{itemize}
\end{definition}

We now proceed to the sum cases of our proofs.

\begin{proof}[$\BOXSUM[\vec{y}\explsubst\vec{u}{]}.t$ case of Lem.~\ref{lem:subst}]
By induction we have $\den{u_k[\vec{t}/\vec{x}]}_i(\gamma)=\den{u_k}_i(\den{t_1}_i
(\gamma),\ldots)$. Hence $\den{t}_j(\den{u_1[\vec{t}/\vec{x}]}_i(\gamma),\ldots)=
\den{t}_j(\den{u_1}_i(\den{t_1}_i(\gamma),\ldots),\ldots)$ as required.
\end{proof}

\begin{proof}[$\BOXSUM$ cases of Soundness Thm.~\ref{lem:soundness}]
Because each $\den{A_k}$ is a constant object (Lem.~\ref{lem:const_types}),
$\den{t_k}_i=\den{t_k}_j$ for all $i,j$. Hence $\den{\BOXSUM[\vec{x}\explsubst
\vec{t}].t}_i$ is defined via components $\den{t}_j(\den{t_1}_j,\ldots)$ and
$\den{\BOXSUM t[\vec{t}/\vec{x}]}$ is defined via components $\den{t[\vec{t}/
\vec{x}]}_j$. These are equal by Lem~\ref{lem:subst}.

$\den{\BOXSUM\IN_d t}_i$ is the $d$'th injection into the function with $j$'th
component $\den{t}_j$, and likewise for $\den{\IN_d\BOX t}_i$.
\end{proof}

\begin{definition}[ref. Def.~\ref{Def:log_rel}]
\begin{itemize}
\item $[a,d]\lrel{i}{A_1+A_2} t$ iff $t\redrt \IN_d u$ for $d=1$ or $2$, and $a
  \lrel{i}{A_d}u$.
\end{itemize}
Note that $\lrel{i}{\EMPTY}$ is (necessarily) everywhere empty.
\end{definition}

\begin{proof}[for Lems.~\ref{lem:lrel_and_red} and ~\ref{lem:res_and_rel}]
For $\EMPTY$ cases the premise fails so the the lemmas are vacuous. $+$ cases
follow as for $\times$.
\end{proof}

\begin{proof}[ref. Fundamental Lemma~\ref{lem:ad}]
$\ABORT$: The induction hypothesis states that $\den{t}_k(\vec{a})\lrel{k}{\EMPTY}
t[\vec{t}/\vec{x}\,]$, but this is not possible, so the theorem holds vacuously.

$\IN_d t$ case follows by easy induction.

$\CASE t\OF y_1.u_1;y_2.u_2$: If $\den{t}_i(\vec{a})\lrel{i}{A_1+A_2}t[\vec{t}/
\vec{x}]$ then $t[\vec{t}/\vec{x}\,]\redrt\IN_d u$ for some $d\in\{1,2\}$, with
$\den{t}_i(\vec{a})=[a,d]$ and $a\lrel{i}{A_d}u$. Then $\den{u_d}_i(\vec{a},a)
\lrel{k}{A}u_d[\vec{t}/\vec{x},u/y_d]$. Now $(\CASE t\OF y_1.u_1;y_2.u_2)[\vec{t}/
\vec{x}]\redrt\CASE\IN_d u\OF y_1.(u_1[\vec{t}/\vec{x}]);y_2.(u_2[\vec{t}/\vec{x}])$,
which reduces to $u_d[\vec{t}/\vec{x},u/y_i]$, and Lem.~\ref{lem:lrel_and_red}
completes.

$\BOXSUM[\vec{y}\explsubst\vec{u}].t$: $\den{u_k}_i(\vec{a})\lrel{i}{A_k}u_k[\vec{t}/
\vec{x}]$ by induction, so $\den{u_k}_i(\vec{a})\lrel{j}{A_k}u_k[\vec{t}/\vec{x}]$
for any $j$ by Lem.~\ref{lem:constrel}. By induction $\den{t}_j(\den{u_1}_k
(\vec{a}),\ldots)\lrel{j}{B_1+B_2}t[u_1[\vec{t}/\vec{x}\,]/y_1,\ldots]$. If
$\den{t}_j(\den{u_1}_k(\vec{a}),\ldots)$ is some $[b_j,d]$ we have $t[u_1[\vec{t}/
\vec{x}]/y_1,\ldots]\redrt\IN_d s$ with $b_j\lrel{j}{B_d}s$. Now $(\BOXSUM[\vec{y}
\explsubst\vec{u}].t)[\vec{t}/\vec{x}\,]\red \BOXSUM t[u_1[\vec{t}/\vec{x}\,]/y_1,\ldots]
\redrt \BOXSUM\IN_d s$, which finally reduces to $\IN_d\BOX s$, which yields the
result.
\end{proof}

The logic $\logiclambdanext$ may be extended to sums via the usual $\beta\eta$-laws
and commuting conversions for binary sums and the equational version of the
$\BOXSUM$ rule (ref. Fig.~\ref{fig:additional-equations}):
\[
    \inferrule*{%
      \Gamma_{\blacksquare} \vdash t:B_d \\
      \Gamma\vdash \vec{t} : \Gamma_{\blacksquare}}{%
      \Gamma \vdash \BOXSUM[\vec{x}\explsubst\vec{t}].(\IN_d t) = \IN_d(\BOX [\vec{x}\explsubst\vec{t}] . t )}
\]


\section{Proof of Definability of Solutions of Behavioural Differential Equations in $\lambdanext$}
\label{sec:proof-bde}

An equivalent presentation of the topos of trees is as sheaves over $\omega$ (with
Alexandrov topology) $\Sh{\omega}$. In this section it is more convenient to work with
sheaves than with presheaves because the global sections functor $\Gamma$%
\footnote{This standard notation for this functor should not to be confused with our notation
for typing contexts.}
in the sequence of adjoints
\begin{align*}
  \Pi_1 \dashv \Delta \dashv \Gamma
\end{align*}
where
\begin{align*}
  \begin{split}
    \Pi_1 &: \Sl \to \sets\\
    \Pi_1(X) &= X(1)
  \end{split}
  \qquad
  \begin{split}
    \Delta &: \sets \to \Sl\\
    \Delta(a)(\alpha) &=
    \begin{cases}
      1 & \text{if } \alpha = 0\\
      a & \text{otherwise}
    \end{cases}
  \end{split}
  \qquad
  \begin{split}
    \Gamma &: \Sl \to \sets\\
    \Gamma(X) &= X(\omega)
  \end{split}
\end{align*}
is just evaluation at $\omega$, i.e. the limit is already present. This simplifies
notation. Another advantage is that $\LATER : \Sl \to \Sl$ is given as
\begin{align*}
  (\LATER X)(\nu+1) &= X(\nu)\\
  (\LATER X)(\alpha) &= X(\alpha)
\end{align*}
where $\alpha$ is a limit ordinal (either $0$ or $\omega$) which means that $\LATER
X(\omega) = X(\omega)$ and as a consequence, $\nxt_{\omega} = \id{X(\omega)}$ and
$\Gamma(\LATER X) = \Gamma(X)$ for any $X \in \Sl$ and so $\blacksquare (\LATER X) =
\blacksquare X$ for any $X$ so we don't have to deal with mediating isomorphisms.

First we have a simple statement, but useful later, since it gives us a precise goal to
prove later when considering the interpretation.
\begin{lemma}
  \label{lem:fixed-point-mapped-to-fp}
  Let $X, Y$ be objects of $\Sl$. Let $F : \LATER\left(Y^X\right) \to Y^X$ be a morphism
  in $\Sl$ and $\underline F$ a \emph{function} in $\sets$ from $Y(\omega)^{X(\omega)}$ to
  $Y(\omega)^{X(\omega)}$. Suppose that the diagram
  \[
  \begin{largediagram}
    \Gamma\left(\LATER\left(Y^X\right)\right) \ar{r}[description]{\Gamma(F)} \ar{d}[description]{\lim} & \Gamma(Y^X) \ar{d}[description]{\lim}\\
    Y(\omega)^{X(\omega)} \ar{r}[description]{\underline F} & Y(\omega)^{X(\omega)}
  \end{largediagram}
  \]
  where $\lim\left(\{g_\nu\}_{\nu=0}^{\omega}\right) = g_{\omega}$ commutes.  By Banach's
  fixed point theorem $F$ has a unique fixed point, say $u : 1 \to Y^X$.

  Then $\lim(\Gamma(u)(\ast)) = \lim(\Gamma(\nxt \comp u)(\ast)) =
  \Gamma(\nxt \comp u)(\ast)_{\omega} = u_{\omega}(\ast)_{\omega}$ is a fixed point of $\underline F$.
\end{lemma}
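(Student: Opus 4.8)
The plan is to chase the square, relying on the two identifications the excerpt has just established for sheaves over $\omega$: that $\Gamma(\LATER Z) = \Gamma(Z)$ for every $Z \in \Sl$, and that $\nxt$ restricts to the identity at stage $\omega$, i.e.\ $\nxt_{Z,\omega} = \id{Z(\omega)}$ (because $\omega$ is a limit ordinal). Morally these say that $\Gamma$ kills $\LATER$ and turns $\NEXT$ into an identity, which is exactly what makes the statement work; everything else is unfolding the definition of $\lim$.

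First I would record what it means for $u$ to be the fixed point of $F$: this is the equation $u = F \comp (\nxt_{Y^X} \comp u)$ between morphisms $1 \to Y^X$, where $\nxt_{Y^X} \comp u : 1 \to \LATER(Y^X)$ is the excerpt's $\nxt \comp u$ (the semantic counterpart of $\Theta F =_{Y^X} F(\NEXT(\Theta F))$ from Prop.~\ref{prop:theta-is-a-fp}). Applying $\Gamma$ and using $\Gamma(\LATER(Y^X)) = \Gamma(Y^X)$ gives $\Gamma(u) = \Gamma(F) \comp \Gamma(\nxt_{Y^X} \comp u)$ in $\sets$; and since $\Gamma(\nxt_{Y^X} \comp u) = (\nxt_{Y^X})_\omega \comp u_\omega = u_\omega = \Gamma(u)$ by $(\nxt_{Y^X})_\omega = \id{(Y^X)(\omega)}$, this collapses to $\Gamma(u) = \Gamma(F) \comp \Gamma(u)$ as functions $\{\ast\} \to (Y^X)(\omega)$. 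The three equalities in the statement are then immediate: $\lim(\Gamma(u)(\ast)) = \lim(\Gamma(\nxt \comp u)(\ast))$ because $\Gamma(\nxt \comp u) = \Gamma(u)$; the next equality is the definition $\lim(\{g_\nu\}_{\nu=0}^{\omega}) = g_\omega$ applied to the coherent family $\Gamma(\nxt \comp u)(\ast) \in (Y^X)(\omega)$; and $\Gamma(\nxt \comp u)(\ast)_\omega = u_\omega(\ast)_\omega$ because $\Gamma(\nxt \comp u)(\ast) = (\nxt_{Y^X})_\omega(u_\omega(\ast)) = u_\omega(\ast)$.

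It remains to check that $p := \lim(\Gamma(u)(\ast)) \in Y(\omega)^{X(\omega)}$ is a fixed point of $\underline F$. Evaluating $\Gamma(u) = \Gamma(F) \comp \Gamma(u)$ at $\ast$ and post-composing with $\lim$ yields $\lim(\Gamma(u)(\ast)) = \lim\bigl(\Gamma(F)(\Gamma(u)(\ast))\bigr)$, and the hypothesised commutativity of the square, namely $\lim \comp \Gamma(F) = \underline F \comp \lim$ with the left-hand $\lim$ read on $\Gamma(\LATER(Y^X)) = \Gamma(Y^X)$, rewrites the right-hand side as $\underline F(\lim(\Gamma(u)(\ast))) = \underline F(p)$. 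Hence $p = \underline F(p)$, which is the conclusion.

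I do not expect a genuine obstacle here: the whole argument is a diagram chase plus the definition of $\lim$. The one thing that needs care is the bookkeeping with the identifications $\Gamma(\LATER(-)) = \Gamma(-)$ and $\nxt_\omega = \id{{-}}$, so that each composite actually type-checks (in particular $\Gamma(F) \comp \Gamma(u)$ only makes sense after identifying $\Gamma(\LATER(Y^X))$ with $\Gamma(Y^X)$), together with transporting the fixed-point equation for $u$ correctly under $\Gamma$; once those are pinned down the rest is routine.
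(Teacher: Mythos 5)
Your proposal is correct and follows essentially the same route as the paper's (very short) proof: both reduce everything to the fixed-point equation $u = F \comp \nxt \comp u$, the identifications $\Gamma(\LATER(-)) = \Gamma(-)$ and $\nxt_\omega = \id{}$, and the hypothesised commuting square, differing only in the direction in which the chain of equalities is written. Your extra care in justifying the three equalities displayed in the statement is consistent with, and slightly more explicit than, what the paper records.
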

\begin{proof}
  The proof is trivial.
  \begin{align*}
    \underline F \left(\lim(\Gamma(u)(\ast))\right) &=
    \lim(\Gamma(F)(\Gamma(\nxt \comp u)(\ast)))\\
    &= \lim(\Gamma(F \comp \nxt \comp u)(\ast)) = 
    \lim(\Gamma(u)(\ast)).
  \end{align*}
\end{proof}

Note that $\lim$ is not an isomorphism. There are (in general) many more functions from
$X(\omega)$ to $Y(\omega)$ than those that arise from natural transformations. The ones
that arise from natural transformations are the \emph{non-expansive} ones.

\subsection{Behavioural Differential Equations}
\label{sec:rutt-behav-diff}

Let $\Sigma_A$ be a signature of function symbols with two types, $A$ and $\Stream{A}$.
Suppose we wish to define a new $k$-ary operation given the signature $\Sigma_A$. We need
to provide two terms $h_f$ and $t_f$ (standing for \emph{head} and \emph{tail}).  $h_f$
has to be a term using function symbols in signature $\Sigma_A$ and have type
\begin{align*}
  \hastype{x_1 : A, x_2 : A, \cdots, x_k : A}{h_f}{A}
\end{align*}
and $t_f$ has to be a term in the signature extended with a new function symbol $f$ of
type $\left(\Stream{A}\right)^k \to \Stream{A}$ and have type
\begin{align*}
  \hastype{x_1 : A, \cdots, x_k : A, y_1 : \Stream{A}, \cdots, y_k : \Stream{A}, z_1 :
    \Stream{A}, \cdots, z_k : \Stream{A}}{t_f}{\Stream{A}}
\end{align*}
In the second term the variables $x$ (intuitively) denote the head elements of the
streams, the variables $y$ denote the streams, and the variables $z$ denote the tails of
the streams.

We now define two interpretations of $h_f$ and $t_f$. First in the topos of trees and then
in $\sets$.

We choose a set $a \in \sets$ and define $\denS{A} = \Delta(a)$ and $\denS{\Stream{A}} =
\mu X . \Delta(a) \times \LATER(X)$.  To each function symbol $g \in \Sigma$ of type
$\tau_1, \ldots, \tau_n \to \tau_{n+1}$ we assign a morphism
\begin{align*}
  \denS{g} : \denS{\tau_1} \times \denS{\tau_2} \times \cdots \times \denS{\tau_n} \to
  \denS{\tau_{n+1}}.
\end{align*}
Then we define the interpretation of $h_f$ by induction as a morphism of type
$\denS{A}^k \to \denS{A}$ by
\begin{align*}
  \denS{x_i} &= \pi_i\\
  \denS{g(t_1, t_2, \ldots, t_n)} &= \denS{g} \comp \left\langle \denS{t_1}, \denS{t_2},
    \cdots, \denS{t_n}\right\rangle.
\end{align*}
For $t_f$ we interpret the types and function symbols in $\Sigma_A$ in the same way. But
recall that $t_f$ also contains a function symbol $f$. So the denotation of $t_f$ will be
a morphism with the following type
\begin{align*}
  \denS{t_f} : \LATER\left(\denS{\Stream{A}}^{\denS{\Stream{A}}^k}\right) \times \denS{A}^k
  \times \denS{\Stream{A}}^k \times \left(\LATER\left(\denS{\Stream{A}}\right)\right)^k \to
  \LATER(\denS{\Stream{A}})
\end{align*}
and is defined as follows
\begin{align*}
  \denS{x_i} &= \nxt \comp \iota \comp \pi_{x_i}\\
  \denS{y_i} &= \nxt \comp \pi_{y_i}\\
  \denS{z_i} &= \pi_{z_i}\\
  \denS{g(t_1, t_2, \ldots, t_n)} &= \LATER(\denS{g}) \comp \mathbf{can} \comp
  \left\langle \denS{t_1}, \denS{t_2},\cdots, \denS{t_n}\right\rangle &\text{if } g \neq f\\
  \denS{f(t_1, t_2, \ldots, t_k)} &=
  \mathbf{eval} \comp \left\langle J \comp \pi_f,
    \mathbf{can} \comp \left\langle \denS{t_1}, \denS{t_2}, \cdots, \denS{t_k}\right\rangle\right\rangle
\end{align*}
where $\mathbf{can}$ is the canonical isomorphism witnessing that $\LATER$ preserves
products, $\mathbf{eval}$ is the evaluation map and $\iota$ is the suitably encoded
morphism that when given $a$ constructs the stream with head $a$ and tail all zeros. This
exists and is easy to construct.

Next we define the denotation of $h_f$ and $t_f$ in $\sets$. We set $\denSet{A} = a$ and
$\denSet{\Stream{A}} = \denS{\Stream{A}}(\omega)$.  For each function symbol in $\Sigma_A$
we define $\denSet{g} = \Gamma{\denS{g}} = \left(\denS{g}\right)_{\omega}$.

We then define $\denSet{h_f}$ as a function
\[\denSet{A}^k \to \denSet{A}\]
exactly the same as we defined $\denS{h_f}$.
\begin{align*}
  \denSet{x_i} &= \pi_i\\
  \denSet{g(t_1, t_2, \ldots, t_n)} &= \denSet{g} \comp \left\langle \denSet{t_1}, \denSet{t_2},
    \cdots, \denSet{t_n}\right\rangle.
\end{align*}

The denotation of $t_f$ is somewhat different in the way that we do not guard the tail
and the function being defined with a $\LATER$. We define
\begin{align*}
  \denSet{t_f} : \denSet{\Stream{A}}^{\denSet{\Stream{A}}^k} \times \denSet{A}^k \times
  \denSet{\Stream{A}}^k \times \left(\denSet{\Stream{A}}\right)^k \to \denSet{\Stream{A}}
\end{align*}
as follows
\begin{align*}
  \denSet{x_i} &= \iota \comp \pi_{x_i}\\
  \denSet{y_i} &= \pi_{y_i}\\
  \denSet{z_i} &= \pi_{z_i}\\
  \denSet{g(t_1, t_2, \ldots, t_n)} &= \denSet{g} \comp
  \left\langle \denSet{t_1}, \denSet{t_2},\cdots, \denSet{t_n}\right\rangle &\text{if } g \neq f\\
  \denSet{f(t_1, t_2, \ldots, t_k)} &=
  \mathbf{eval} \comp \left\langle \pi_f,
    \left\langle \denSet{t_1}, \denSet{t_2}, \cdots, \denSet{t_k}\right\rangle\right\rangle
\end{align*}
where $\iota$ is again the same operation, this time on actual streams in $\sets$.

We then define
\[
\underline F : \denSet{\Stream{A}}^{\denSet{\Stream{A}}^k} \to
\denSet{\Stream{A}}^{\denSet{\Stream{A}}^k}
\]
as
\begin{align*}
  \underline F(\phi)\left(\vec{\sigma}\right) = \Gamma\left(\mathbf{fold}\right)\left(\left(
  \denSet{h_f}\left(\mathbf{hd}(\vec{\sigma})\right), \denSet{t_f}\left(\phi,
    \mathbf{hd}(\vec{\sigma}), \vec{\sigma}, \mathbf{tl}(\vec{\sigma})\right)\right)\right)
\end{align*}
where $\mathbf{hd}$ and $\mathbf{tl}$ are head and tail functions (extended in the obvious
way to tuples). Here $\mathbf{\mathbf{fold}}$ is the isomorphism witnessing that guarded
streams are indeed the fixed point of the defining functor.

Similarly we define
\[
F : \LATER\left(\denS{\Stream{A}}^{\denS{\Stream{A}}^k}\right) \to
\denS{\Stream{A}}^{\denS{\Stream{A}}^k}
\]
as the exponential transpose $\Lambda$ of
\begin{align*}
  F' = \mathbf{fold} \comp \left\langle \den{h_f} \comp \vec{hd} \comp \pi_2, \denS{t_f} \comp
  \left(\id{\LATER\left(\denS{\Stream{A}}^{\denS{\Stream{A}}^k}\right)} \times
    \left\langle \vec{\mathbf{hd}}, \id{\denS{\Stream{A}}^k}, \vec{\mathbf{tail}}\right\rangle\right)\right\rangle
\end{align*}


\begin{proposition}
  \label{prop:stream-diagram-commutes}
  For the above defined $F$ and $\underline F$ we have
  \begin{align*}
    \lim \comp \Gamma(F) = \underline F \comp \lim
  \end{align*}
\end{proposition}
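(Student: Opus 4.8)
The plan is to verify the identity pointwise, using that $\Gamma$ is evaluation at $\omega$ and that, as recorded at the start of this appendix, $\LATER X(\omega) = X(\omega)$ with $\nxt_{\omega} = \id{X(\omega)}$; consequently the canonical isomorphism $\mathbf{can}$ witnessing that $\LATER$ preserves products, the map $J$, and the action of $\LATER$ on arrows are all identities at stage $\omega$ (modulo these identifications), while $\mathbf{fold}$ has $\omega$-component $\Gamma(\mathbf{fold})$. Fix $u \in \Gamma(\LATER(\denS{\Stream{A}}^{\denS{\Stream{A}}^k}))$. By the first fact this is the same datum as an element of $\Gamma(\denS{\Stream{A}}^{\denS{\Stream{A}}^k})$, whose image $\lim(u)$ is the $\sets$-function $\denSet{\Stream{A}}^k \to \denSet{\Stream{A}}$ obtained by taking $\omega$-components. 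Since $F = \Lambda(F')$ we have $\mathbf{eval} \comp (F \times \id{\denS{\Stream{A}}^k}) = F'$, and $\mathbf{eval}$ evaluates componentwise at $\omega$, so for every $\vec{\sigma} \in \denSet{\Stream{A}}^k$,
\[
  \lim\bigl(\Gamma(F)(u)\bigr)(\vec{\sigma}) = \bigl(F_{\omega}(u)\bigr)_{\omega}(\vec{\sigma}) = \mathbf{eval}_{\omega}\bigl(F_{\omega}(u),\vec{\sigma}\bigr) = F'_{\omega}(u,\vec{\sigma}).
\]
Hence it suffices to prove $F'_{\omega}(u,\vec{\sigma}) = \underline{F}(\lim u)(\vec{\sigma})$.

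First I would unfold $F'_{\omega}(u,\vec{\sigma})$. By the observations above its outer layer is $\Gamma(\mathbf{fold})$, and the maps $\vec{\mathbf{hd}}$, $\vec{\mathbf{tail}}$ become ordinary head and tail of streams, i.e. $\mathbf{hd}$ and $\mathbf{tl}$; so $F'_{\omega}(u,\vec{\sigma})$ equals
\[
  \Gamma(\mathbf{fold})\Bigl(\bigl({\denS{h_f}}_{\omega}(\mathbf{hd}(\vec{\sigma})),\ {\denS{t_f}}_{\omega}\bigl(u,\langle \mathbf{hd}(\vec{\sigma}),\vec{\sigma},\mathbf{tl}(\vec{\sigma})\rangle\bigr)\bigr)\Bigr).
\]
Comparing with the definition of $\underline{F}$, the claim reduces to showing ${\denS{h_f}}_{\omega} = \denSet{h_f}$ and ${\denS{t_f}}_{\omega}(u, \langle \mathbf{hd}(\vec{\sigma}),\vec{\sigma},\mathbf{tl}(\vec{\sigma})\rangle) = \denSet{t_f}(\lim u, \mathbf{hd}(\vec{\sigma}),\vec{\sigma},\mathbf{tl}(\vec{\sigma}))$. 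Both follow by induction on the term structure of $h_f$, respectively $t_f$, using that $\Gamma = (-)_{\omega}$ preserves composition, tupling and projections, and that $\denSet{g} = {\denS{g}}_{\omega}$ for each function symbol $g\in\Sigma_A$ by definition of the $\sets$-interpretation. The only clauses needing comment are the modal ones in $t_f$: in the $x_i$ and $f(t_1,\dots,t_k)$ clauses the leading $\nxt$ and the map $J$ disappear at $\omega$, so that $\mathbf{eval}_{\omega}\comp\langle J_{\omega}\comp\pi_f,-\rangle$ applied to $u$ yields $\lim u$, matching $\pi_f$ on the $\sets$ side; in the $g(\vec{t})$ clause with $g\neq f$ one uses that $\LATER$ acts as the identity on arrows at $\omega$ and $\mathbf{can}_{\omega}=\id{}$; and $\iota$ is literally the same constant-tail operation on the carrier at $\omega$. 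In every case the data surviving at $\omega$ is exactly the corresponding clause of $\denSet{t_f}$, with $u$ playing the role of $\phi$.

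The difficulty here is not conceptual but notational: one must commit, once and for all, to consistent conventions for the identifications $\Gamma(\LATER X)=\Gamma(X)$ and $(\LATER Y)^{\LATER X}(\omega)\cong\hom{}{X}{Y}$, and for the passage through $\Lambda$ and $\mathbf{eval}$, so that the claims that $\nxt$, $J$, $\mathbf{can}$ and the $\LATER$-action on arrows are identities at $\omega$ genuinely type-check; with those in place the induction is mechanical. Finally I would remark that the square just established is precisely the hypothesis of Lem.~\ref{lem:fixed-point-mapped-to-fp} (with $X=\denS{\Stream{A}}^k$ and $Y=\denS{\Stream{A}}$), so the two results together show that the $\sets$-solution of the behavioural differential equation is recovered from the guarded-recursive fixed point of $F$.
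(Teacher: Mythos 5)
Your proposal is correct and follows essentially the same route as the paper's own proof: evaluate everything at $\omega$, use $F=\Lambda(F')$ to reduce the claim to $F'_{\omega}(\phi,\vec{\sigma})=\underline{F}(\phi_{\omega})(\vec{\sigma})$, unfold both sides through $\mathbf{fold}$, $\mathbf{hd}$, $\mathbf{tl}$, and finish by induction on the structure of $h_f$ and $t_f$ using that $\nxt_{\omega}$ (and the other modal mediating maps) are identities and that $\denSet{g}=(\denS{g})_{\omega}$ by definition. The only difference is presentational: you spell out the bookkeeping for $J$, $\mathbf{can}$ and the identification $\Gamma(\LATER X)=\Gamma(X)$ a little more explicitly than the paper does, which is harmless.
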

\begin{proof}
  Let $\phi \in
  \Gamma\left(\LATER\left(\denS{\Stream{A}}^{\denS{\Stream{A}}^k}\right)\right) =
  \Gamma\left(\denS{\Stream{A}}^{\denS{\Stream{A}}^k}\right)$.
  We have
  \begin{align*}
    \lim(\Gamma(F)(\phi)) = \lim\left(F_{\omega}(\phi)\right) = F_{\omega}(\phi)_\omega
  \end{align*}
  and
  \begin{align*}
    \underline F (\lim(\phi)) = \underline F \left(\phi_{\omega}\right)
  \end{align*}
  Now both of these are elements of $\denSet{\Stream{A}}^{\denSet{\Stream{A}}^k}$, meaning
  genuine functions in $\sets$, so to show they are equal we use elements. Let
  $\vec{\sigma} \in \denSet{\Stream{A}}^k$.

  We are then required to show
  \begin{align*}
    \underline F \left(\phi_{\omega}\right)(\vec{\sigma}) = F_{\omega}(\phi)_\omega(\vec{\sigma})
  \end{align*}
  Recall that $F = \Lambda(F')$ (exponential transpose) so 
  $F_{\omega}(\phi)_\omega(\vec{\sigma}) = F'_{\omega}(\phi, \vec{\sigma})$. Now recall
  that composition in $\Sl$ is just composition of functions at each stage and products in
  $\Sl$ are defined pointwise and that $\nxt_{\omega}$ is the identity function.
  
  Moreover, the morphism $\mathbf{hd}$ gets mapped by $\Gamma$ to $\mathbf{hd}$ in
  $\sets$ and the same holds for $\mathbf{tl}$. For the latter it is important that
  $\Gamma(\LATER(X)) = \Gamma(X)$ for any $X$.

  We thus get
  \begin{align*}
    F'_{\omega}(\phi, \vec{\sigma}) = \mathbf{fold}_{\omega}\left((\denS{h_f})_{\omega}
    \left(\mathbf{hd}(\vec{\sigma})\right), \left(\denS{t_f}\right)_{\omega}\left(\phi,
      \mathbf{hd}(\vec{\sigma}), \vec{\sigma}, \mathbf{tl}(\vec{\sigma})\right)\right)
  \end{align*}

  And for $\underline F \left(\phi_{\omega}\right)(\vec{\sigma})$ we have
  \begin{align*}
    \underline F \left(\phi_{\omega}\right)(\vec{\sigma}) =
    \mathbf{fold}_{\omega}\left(\denSet{h_f}\left(\mathbf{hd}\left(\vec{\sigma}\right)\right),
      \left(\denSet{t_f}\right)\left(\phi_{\omega}, \mathbf{hd}(\vec{\sigma}), \vec{\sigma},
        \mathbf{tl}(\vec{\sigma})\right)\right)
  \end{align*}
  It is now easy to see that these two are equal. The proof is by induction on the
  structure of $h_f$ and $t_f$. The variable cases are trivial, but crucially use the fact
  that $\nxt_{\omega}$ is the identity. The cases for function symbols in $\Sigma_A$ are
  trivial since their denotations in $\sets$ are defined to be the correct ones. The case
  for $f$ goes through similarly since application at $\omega$ only uses $\phi$ at
  $\omega$.
\end{proof}

\begin{theorem}
  \label{thm:diff-equations-fp}
  Let $(\Sigma_1, \Sigma_2)$ be a signature and $\Il$ its interpretation.  Let $(h_f,
  t_f)$ be a behavioural differential equation defining a $k$-ary function $f$ using
  function symbols in $\Sigma$. The right-hand sides of $h_f$ and $t_f$ define a term
  $\guarded{\Phi_f}$ of type
  \begin{align*}
    \guarded{\Phi_f} :
    \LATER (\underbrace{\gStream{\NAT} \to \gStream{\NAT} \to \cdots \gStream{\NAT}}_{k+1})
    \to (\underbrace{\gStream{\NAT} \to \gStream{\NAT} \to \cdots \gStream{\NAT}}_{k+1}).
  \end{align*}
  and a term $\Phi_f$ of type
  \begin{align*}
    \Phi_f : (\underbrace{\Stream{\NAT} \to \Stream{\NAT} \to \cdots \Stream{\NAT}}_{k+1})
    \to (\underbrace{\Stream{\NAT} \to \Stream{\NAT} \to \cdots \Stream{\NAT}}_{k+1}).
  \end{align*}
  by using $\mathcal{L}_{a^g_j}\left(\Il(g_j)\right)$ for interpretations of function
  symbols $g_j$.
  
  Let $\guarded{f} = \Theta\guarded{\Phi_f}$ be the fixed point of $\guarded{\Phi_f}$.
  Then $f = \mathcal{L}_k(\BOX \guarded f)$ is a fixed point of $\Phi_f$ which in turn
  implies that it satisfies equations $h_f$ and $t_f$.
\end{theorem}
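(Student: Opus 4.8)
The plan is to deduce the theorem from the two preparatory results in this appendix: Lem.~\ref{lem:fixed-point-mapped-to-fp}, which transports a fixed point of a contractive $\Sl$-morphism $F$ to a fixed point of an associated $\sets$-function $\underline F$, and Prop.~\ref{prop:stream-diagram-commutes}, which verifies the commutation hypothesis of that lemma for the morphisms $F,\underline F$ built from a behavioural differential equation in Sec.~\ref{sec:rutt-behav-diff}. The substance of the proof is then just to connect the \emph{syntactic} combinators $\guarded{\Phi_f}$ and $\Phi_f$ with the \emph{semantic} data $F$ and $\underline F$, and to identify $\den{\mathcal{L}_k(\BOX\guarded f)}$ with the fixed point that Lem.~\ref{lem:fixed-point-mapped-to-fp} produces.

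First I would exhibit the terms $\guarded{\Phi_f},\Phi_f$ asserted by the statement and check their typings (here $A=\NAT$). The term $\guarded{\Phi_f}:\LATER(\gStream{\NAT}^{\,k+1})\to\gStream{\NAT}^{\,k+1}$ takes a later copy of a $k$-ary stream function together with $s_1,\dots,s_k:\gStream{\NAT}$ and returns the stream whose head is $h_f$ with each $x_i$ replaced by $\head s_i$, and whose tail is obtained from $t_f$ following the clauses defining $\denS{t_f}$ in Sec.~\ref{sec:rutt-behav-diff}: the stream-with-$\head s_i$-then-zeros for $x_i$, $\NEXT s_i$ for $y_i$, $\tail s_i$ for $z_i$, $\APP$-applications (with the canonical map distributing $\LATER$ over products) for the function symbols of $\Sigma_A$, and an $\APP$-application of the later recursion variable for the new symbol $f$. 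It is precisely the behavioural-differential-equation format---head depends only on heads, tail may mention whole streams, tails, and recursive calls---that makes the guardedness side-conditions of Figs.~\ref{fig:types} and~\ref{fig:typing} satisfiable, so that $\guarded{\Phi_f}$ is well typed as displayed. The term $\Phi_f$ is built from the same expression by replacing each $\Il(g_j)$ with $\mathcal{L}_{a^g_j}(\Il(g_j))$ and dropping the $\LATER$-guards, mirroring the definition of $\denSet{t_f}$, and has the displayed type.

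Next I would prove, by induction on $h_f$ and $t_f$, that $\den{\guarded{\Phi_f}}$ is $F$ and $\den{\Phi_f}$ is $\underline F$, up to the cartesian-closed isomorphisms between $\den{\gStream{\NAT}^{\,k+1}}$ and the $k$-ary function object and the transpose $\Lambda$ used to define $F$. The variable cases are immediate from Def.~\ref{def:terms_denote}; the cases for $g_j\in\Sigma_A$ use that $\den{\mathcal{L}_{a^g_j}(\Il(g_j))}$ is, at stage $\omega$ in the sheaf presentation, exactly $\Gamma(\denS{g_j})=\denSet{g_j}$, which follows from the identity $\UNBOX(\mathcal{L}_k(g)\,\vec x)=g\,(\UNBOX\vec x)$ of Ex.~\ref{ex:unbox-equality}; and the case for $f$ matches the $\mathbf{eval}$ clause because $\den{\,\cdot\APP\cdot\,}$ is exactly the guarded application of Def.~\ref{def:terms_denote}. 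Then, by Prop.~\ref{prop:theta-is-a-fp} and Soundness (Thm.~\ref{lem:soundness}), $\den{\guarded f}=\den{\Theta\guarded{\Phi_f}}$ is a fixed point of $\den{\guarded{\Phi_f}}=F$, and it is the unique one (Banach's fixed point theorem, equivalently the uniqueness clause of Prop.~\ref{prop:theta-is-a-fp}); write $u$ for it. Prop.~\ref{prop:stream-diagram-commutes} is precisely the commutation hypothesis of Lem.~\ref{lem:fixed-point-mapped-to-fp} for this $F,\underline F$, so that lemma gives that $\lim(\Gamma(u)(\ast))=u_\omega(\ast)_\omega$ is a fixed point of $\underline F$. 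Finally I would unfold $\mathcal{L}_k,\mathcal{L},\BOX,\UNBOX$ and use that in $\Sh{\omega}$ one has $\Gamma(\blacksquare X)=\Gamma(X)$, $\Gamma(\LATER X)=\Gamma(X)$ and $\nxt_\omega$ the identity---so no mediating isomorphisms intervene---to see that $\den f=\den{\mathcal{L}_k(\BOX\guarded f)}$ is the constant morphism on streams of value $u_\omega(\ast)_\omega$. Since $\den{\Phi_f}=\underline F$, this says $\den f$ is a fixed point of $\den{\Phi_f}$, i.e.\ $\den{\Phi_f\,f}=\den f$, whence $f\ceq\Phi_f\,f$ by Cor.~\ref{cor:adequacy:den-implies-ctxeq}.

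That $f$ satisfies $h_f$ and $t_f$ is then a short unwinding: by the construction of $\Phi_f$, for all streams $\vec\sigma$ the term $\Phi_f\,f\,\vec\sigma$ reduces to a stream whose head reads off $h_f$ at the heads of $\vec\sigma$ and whose tail reads off $t_f$ at $f$, the heads, the streams, and the tails of $\vec\sigma$; applying $\limhead$ and $\limtail$ to $f=\Phi_f\,f$ and using the computation rules for $\BOX,\UNBOX,\PREV,\NEXT$ exactly as was done for $\plus$ and $\limtail$ in Sec.~\ref{sec:definable-functions} turns this into the two equations of the behavioural differential equation. I expect the main obstacle to be the first two steps---fixing the precise dictionary relating $\guarded{\Phi_f},\Phi_f$ to $F,\underline F$, and then checking $\den{\mathcal{L}_k(\BOX\guarded f)}=u_\omega(\ast)_\omega$ through the $\mathcal{L}_k$/$\BOX$ unfolding---since once these are in place the transfer of the fixed point is an immediate application of Lem.~\ref{lem:fixed-point-mapped-to-fp} and Prop.~\ref{prop:stream-diagram-commutes}.
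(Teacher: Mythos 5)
Your proposal is correct and follows the same route as the paper's (very terse) proof: combine Prop.~\ref{prop:stream-diagram-commutes} with Lem.~\ref{lem:fixed-point-mapped-to-fp}, use that $\sets$ sits in $\Sl$ as a full subcategory via $\Delta$, and identify $\den{\mathcal{L}_k(\BOX\guarded f)}$ with $\lim(\Gamma(u)(\ast))$. You supply considerably more detail than the paper does (the typing of $\guarded{\Phi_f}$ and $\Phi_f$, the induction matching the terms' denotations to $F$ and $\underline F$, and the final passage to contextual equivalence), but the underlying argument is the same.
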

\begin{proof}
  Use Prop.~\ref{prop:stream-diagram-commutes} together with
  Lemma~\ref{lem:fixed-point-mapped-to-fp} together with the observation that $\sets$ is a
  full subcategory of $\Sl$ with $\Delta$ being the inclusion.
  
  We also use the fact that for a closed term $u : A \to B$ (which is interpreted as a
  morphism from $1$ to $B^A$) the denotation of $\mathcal{L}(u)$ at stage
  $\nu$ and argument $\ast$ is $\lim(\Gamma(u)(\ast))$.
\end{proof}
\subsection{Discussion}
\label{sec:rutten-discussion}

What we have shown is that for each behavioural differential equation that defines a
function on streams and \emph{can be specified as a standalone function depending only on
  previously defined functions}, i.e. it is not defined mutually with some other function,
there is a fixed point. It is straightforward to extend to mutually recursive definitions
by defining a product of functions in the same way as we did for a single function, but
notationally this gets quite heavy.

More importantly, suppose we start by defining an operation $f$ on
streams first, and the only function symbols in $\Sigma_A$ operate on $A$, i.e. all have
type $A^k \to A$ for some $k$. Assume that these function symbols are given denotations in
$\Sl$ as $\Delta(g)$ for some function $g$ in $\sets$. Then the denotation in $\sets$ is
just $g$.

The fixed point $f$ \emph{in $\Sl$} is then a morphism from $1$ to the suitable
exponential. Let $\overline f$ be the uncurrying of $f$.  Then $\lim(\Gamma(f)(\ast)) =
\Gamma(\overline f)$.

Thus if we continue defining new functions which use $f$, we then choose $\overline f$ as
the denotation of the function symbol $f$. The property $\lim(\Gamma(u)(\ast)) =
\Gamma(\overline f)$ then says that the $f$ that is used in the definition is the $f$ that
was defined previously.


\section{About Total and Inhabited Types}
\label{sec:about-totality-types}

An object in $\trees$ is \emph{total and inhabited} if all components are non-empty and
all restriction functions are surjective. We have the following easy proposition.
\begin{proposition}
  \label{prop:functor-restr-fp-total}
  Let $P : \Sl \to \Sl$ be a functor such that if $X$ is a total and inhabited object, so
  is $P(X)$, i.e. $P$ restricts to the full subcategory of total and inhabited objects.

  If $P$ is locally contractive then its fixed point is total and inhabited.
\end{proposition}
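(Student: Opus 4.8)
The plan is to realise the fixed point as the limit of the tower of iterated applications of $P$ to the terminal object, and then to read off totality and inhabitedness level by level, using that a locally contractive functor ``freezes'' the finite levels of this tower. First I would note that the terminal object $1$ of $\Sl$ is total and inhabited: each component is the singleton $\{\ast\}$ and each restriction function is the identity, which is surjective. Since $P$ sends total and inhabited objects to total and inhabited objects, a trivial induction on $n$ gives that $P^n(1)$ is total and inhabited for every $n \ge 0$; in particular every component $P^n(1)_i$ is non-empty and every restriction function $\res{P^n(1)}{i}$ is surjective.

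Next I would invoke the standard construction of the fixed point of a locally contractive functor (the same one used, via Banach's fixed point theorem, in Lem.~\ref{lem:fixed-point-mapped-to-fp} and~\cite[Sec.~4.5]{Birkedal-et-al:topos-of-trees}): the sequence $1,\, P(1),\, P^2(1),\dots$ together with the connecting maps generated from the unique arrow $P(1)\to 1$ is Cauchy in the (bisected ultra-)metric on $\Sl$-objects, and its limit $X$ is the unique-up-to-isomorphism object with $X \cong P(X)$. The feature I would exploit is the defining property of local contractiveness: since $d(P(Y),P(Z)) \le \frac{1}{2} d(Y,Z)$, the approximants converge with $d(P^n(1),X)$ of order $2^{-n}$, so $X$ and $P^n(1)$ have, up to canonical isomorphism, the same components and restriction functions at all levels up to $n$; equivalently, each connecting map $P^{n+1}(1) \to P^n(1)$ is an isomorphism on the components $1,\dots,n$.

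To conclude, fix $i$ and take $n = i+1$. Then $X_i \cong P^{i+1}(1)_i$ and $X_{i+1} \cong P^{i+1}(1)_{i+1}$, and these isomorphisms, being components of the limiting cone, commute with the restriction functions, so $\res{X}{i}$ is isomorphic to $\res{P^{i+1}(1)}{i}$. Since $P^{i+1}(1)$ is total and inhabited, $P^{i+1}(1)_i$ is non-empty and $\res{P^{i+1}(1)}{i}$ is surjective, hence so are $X_i$ and $\res{X}{i}$. As $i$ was arbitrary, $X$ is total and inhabited.

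The one step requiring care — and the only real obstacle — is the appeal to this ``stabilisation at finite levels'' property: spelling out that contractiveness of $P$ yields $d(P^n(1),X) \le 2^{-n}$ and hence the level-wise identifications of $X$ with $P^{i+1}(1)$ used above (equivalently, that the subcategory of total and inhabited objects is closed under the limits that compute fixed points of locally contractive functors). Once that is in place the remainder is pure bookkeeping.
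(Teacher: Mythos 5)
Your proof is correct, but it takes a different route from the paper's. The paper argues abstractly: it uses the equivalence between the full subcategory of total and inhabited objects of $\Sl$ and the category of \emph{non-empty} complete bisected ultrametric spaces, observes that the latter is an $M$-category, and concludes that a locally contractive functor restricting to this subcategory has its (unique) fixed point there. You instead unwind the fixed-point construction concretely: iterate $P$ on the terminal object, note that each $P^n(1)$ is total and inhabited since $1$ is and $P$ preserves the property, and then read off non-emptiness of $X_i$ and surjectivity of $\res{X}{i}$ from $P^{i+1}(1)$ via the finite-level stabilisation of the Cauchy tower. The one point you rightly flag as needing care is that stabilisation property; I would suggest phrasing it via truncations rather than via a distance $d(P^n(1),X)$ between objects (the object metric is defined through $\varepsilon$-isomorphisms, which presuppose maps in both directions, whereas the tower a priori only gives you the cone maps $X \to P^n(1)$). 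The clean statement is that a locally contractive functor sends $n$-isomorphic objects to $(n{+}1)$-isomorphic ones, so $P^n(1)$ and $P^n(X)\cong X$ agree up to level $n$; with that in place your level-by-level bookkeeping goes through. Your approach buys elementarity and makes visible exactly where totality and inhabitedness come from; the paper's buys brevity and reuses the general $M$-category machinery, at the cost of invoking the equivalence with ultrametric spaces as a black box.
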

\begin{proof}
  We use the equivalence between the full subcategory $ti\Sl$ of $\Sl$ of
  \emph{total and inhabited}
  objects and the category of complete bisected \emph{non-empty} ultrametric
  spaces $\Ml$. We know that the category $\Ml$ is an $M$-category%
\footnote{Birkedal, L., St{\o}vring, K., Thamsborg, J.: The category-theoretic solution of recursive metric-space equations. Theor. Comput. Sci. 411(47), 4102--4122 (2010)}
  and thus so is
  $ti\Sl$. It is easy to see that locally contractive functors in $\Sl$ are locally
  contractive in the $M$-category sense. Hence if $P$ is locally contractive and
  restricts to $ti\Sl$ its fixed point is in $ti\Sl$.
\end{proof}

\begin{corollary}
  \label{cor:polynomial-fp-total-inhabited}
  Let $P$ be a non-zero polynomial functor whose coefficients and exponents are total and
  inhabited. The functor $P \comp \LATER$ is locally contractive and its unique fixed
  point is total and inhabited.
\end{corollary}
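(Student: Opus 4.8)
The plan is to deduce this from Proposition~\ref{prop:functor-restr-fp-total}, which says that a locally contractive functor which restricts to the subcategory $ti\Sl$ of total and inhabited objects has a total and inhabited fixed point. So it suffices to prove two things about $G \defeq P \comp \LATER$: that $G$ is locally contractive (which also yields, via Banach's fixed point theorem in the $M$-category $\Sl$, that $G$ has a unique fixed point), and that $G$ carries total and inhabited objects to total and inhabited objects.

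For the first point I would observe that $\LATER$ is locally contractive --- its action on a morphism at stage $i+1$ depends only on that morphism at stage $i$ --- while every polynomial functor on $\Sl$ is locally non-expansive, being assembled by composition from constant functors, finite products, coproducts, and exponentials $(-)^B$ with constant exponent $B$, each of which is locally non-expansive. Post-composing a locally non-expansive functor with a locally contractive one gives a locally contractive functor, so $G = P \comp \LATER$ is locally contractive.

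For the second point I would first check by hand that $\LATER X$ is total and inhabited whenever $X$ is: $(\LATER X)_1 = \{\ast\}$ and $(\LATER X)_{i+1} = X_i$ are nonempty, $\res{\LATER X}{1}$ maps into a singleton hence is surjective, and $\res{\LATER X}{i+1} = \res{X}{i}$ is surjective. Then I would show $P$ itself restricts to $ti\Sl$. Writing $P$ in the form $X \mapsto \coprod_j A_j \times X^{B_j}$ with the coefficients $A_j$ and exponents $B_j$ total and inhabited, this reduces to: constant objects on total and inhabited sets are total and inhabited; products of total and inhabited objects are total and inhabited (restrictions are computed componentwise); the coproduct of a nonempty family of total and inhabited objects is total and inhabited --- this is where the hypothesis that $P$ is non-zero, i.e.\ has at least one summand, is used; and the exponential $X^B$ of total and inhabited $X$ and $B$ is total and inhabited. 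Composing, $G(X) = P(\LATER X)$ is total and inhabited whenever $X$ is, and Proposition~\ref{prop:functor-restr-fp-total} then finishes the argument.

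The main obstacle is the exponential case. One must show $(X^B)_i$ is nonempty and that the restriction maps of $X^B$ are surjective, and this genuinely uses totality of both $X$ and $B$: a natural family $(f_i : B_i \to X_i)_i$ is built by induction on $i$, choosing $f_{i+1}(b)$, for each $b \in B_{i+1}$, to be some $\res{X}{i}$-preimage of $f_i(\res{B}{i}(b))$, which exists because $\res{X}{i}$ is surjective; the same extension argument shows the projections $(X^B)_{i+1} \to (X^B)_i$ are surjective. Without totality the naturality constraint could force $X^B$ to be empty at some stage. All remaining verifications are routine, and can alternatively be obtained by transporting these constructions across the equivalence $ti\Sl \simeq \Ml$ already invoked in the proof of Proposition~\ref{prop:functor-restr-fp-total}.
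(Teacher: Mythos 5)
Your proposal is correct and follows essentially the same route as the paper: show that $\LATER$ and each constituent of $P$ (constants, products, non-empty coproducts, exponentials) preserve total and inhabited objects, establish local contractivity of $P \comp \LATER$, and conclude via Prop.~\ref{prop:functor-restr-fp-total}. The only differences are cosmetic --- you spell out the exponential case (which the paper merely asserts) and derive local contractivity from ``locally non-expansive composed with locally contractive'' rather than from the strength of polynomial functors --- and both are sound.
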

\begin{proof}
  Products and non-empty coproducts of total and inhabited objects are total and
  inhabited. Similarly, if $X$ and $Y$ are total and inhabited, so is $X^Y$. So any
  non-zero polynomial functor $P$ whose coefficients are all total and inhabited restricts
  to $ti\Sl$. The functor $\LATER$ restricts to $ti\Sl$ as well (but note that it
  \emph{does not} restrict to the subcategory of total objects $t\Sl$).  Polynomial
  functors on $\Sl$ are also strong and so the functor $P \comp \LATER$ is locally
  contractive. Hence by Prop.~\ref{prop:functor-restr-fp-total} its unique fixed
  point is a total and inhabited object.
\end{proof}
In particular guarded streams of any total inhabited type themselves form a total and
inhabited type.

\end{document}